\documentclass[lettersize,journal]{IEEEtran}
\usepackage{algorithmic}
\usepackage{algorithm}
\usepackage{array}
\usepackage[caption=false,font=small,labelfont=sf,textfont=sf]{subfig}
\usepackage{textcomp}
\usepackage{stfloats}
\usepackage{url}
\usepackage{verbatim}
\usepackage{graphicx}
\usepackage{cite}
\usepackage{widetext}
\usepackage{makecell}
\usepackage{color}
\usepackage{multirow}
\usepackage{booktabs}   
\usepackage{tabularx}   
\usepackage{enumitem}   
\usepackage{ragged2e}   
\usepackage[table,xcdraw]{xcolor}
\definecolor{lightblue}{rgb}{0.9,1, 1}
\usepackage{amsmath} 
\usepackage{amsfonts}
\usepackage{amssymb} 
\usepackage{amsthm}  

\newtheorem{proposition}{Proposition}

\usepackage{algorithm}

\begin{document}
\title{Compositional Zero-Shot Recognition based on Tangent Space Disentanglement for Composite Modulation Signals}
\author{Yurui Zhao, Xiang Wang, Zhitao Huang, Baoguo Li
\thanks{The authors are with College of Electronic Science and Technology, National University of Defense Technology, Changsha 410073, China.}
\thanks{Corresponding author: Xiang Wang (E-mail: xwang@nudt.edu.cn)}
\thanks{Research is supported by the National Natural Science Foundation of China, Grant No.62271494.}
\thanks{Manuscript received April 20, 2024.}}

\markboth{Journal of \LaTeX\ Class Files,~Vol.~14, No.~8, August~2021}%
{Shell \MakeLowercase{\textit{et al.}}: A Sample Article Using IEEEtran.cls for IEEE Journals}
\maketitle

\begin{abstract} 
Automatic composite modulation recognition (ACMR) is critical for integrated sensing and communication (ISAC) systems, while conventional approaches face significant challenges due to the semantic coupling between inner-layer and outer-layer modulations in composite modulation (CM), degraded performance under joint hardware and channel imperfections, and limited capability to handle unknown modulation schemes. 
To this end, we design a disentangled semantic space and propose zero-shot learning framework. 
Within this framework, a logarithmic projection first linearizes the multiplicative coupling between modulation layers and a learnable geometric transformation is used for layer-wise semantic features.
We instantiate the framework as the Tangent Space Disentanglement Network (TSDN). 
TSDN integrates logarithmic mapping, a spatial transformer network for learning the geometric transformation, and a multi-objective loss function that balances discrimination with cross-domain generalization.
Comprehensive experiments demonstrate that TSDN achieves over 93\% zero-shot recognition accuracy, outperforms unified-semantic and multi-task baselines by significant margins, and maintains robust performance under combined channel fading and hardware imperfections down to 4 dB SNR. 
\end{abstract}

\begin{IEEEkeywords} Integrating communication and sensing (ISAC), signal processing, automatic composite modulation recognition (ACMR), automatic modulation recognition, zero-shot learning
\end{IEEEkeywords}

\section{Introduction}

\begin{figure*}[!t]
    \centering
    \includegraphics[width=0.8\linewidth]{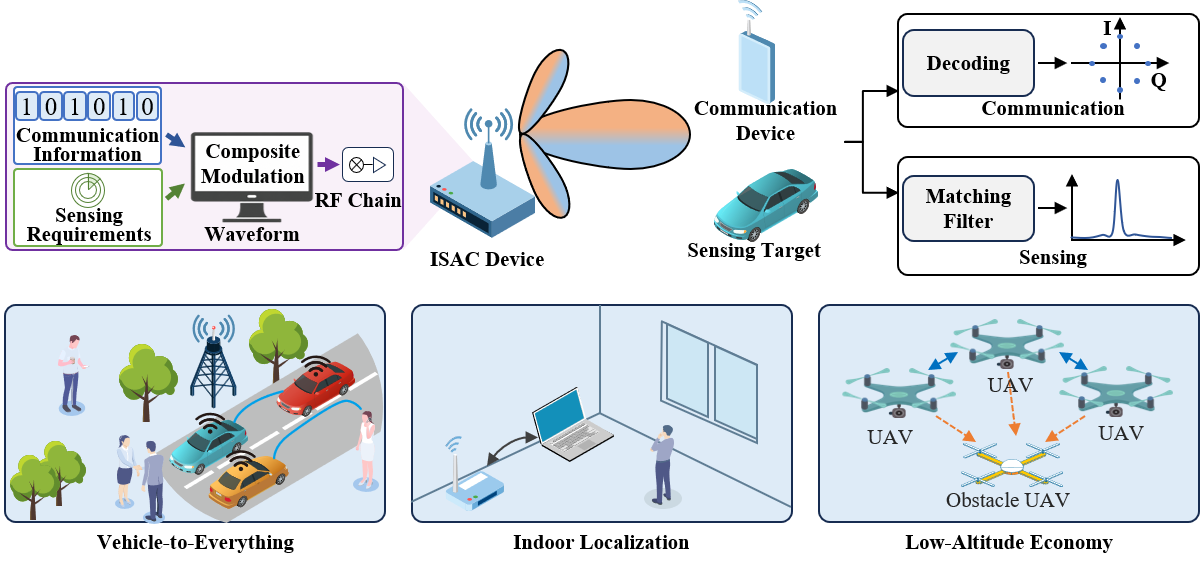}
    \caption{Schematic of the ISAC system showing the role of composite modulation (CM) in enabling sensing and communication, with applications in vehicle-to-everything, indoor localization, and low-altitude economy.}
    \label{fig:scene}
\end{figure*}

\IEEEPARstart{I}{ntegrated} Sensing and Communication (ISAC) unifies sensing and communication on a shared platform, improving spectrum and infrastructure utilization \cite{luong2021radio, liu2022survey, zheng2019radar, zhou2022integrated}. 
It enables autonomous driving, WiFi sensing, vehicle-to-everything \cite{lei2025unified}, high-precision indoor localization, and is foundational for 6G unmanned aerial vehicles (UAVs) and the low-altitude economy.
\color{black}
The primary challenge in ISAC lies in designing a versatile waveform capable of reconciling the inherently conflicting requirements of sensing and communication.
Composite modulation (CM) has emerged as a promising solution to this challenge \cite{chiriyath2019novel, zhang2021overview, huang2025design}.

Unlike conventional approaches, CM achieves an intrinsic fusion of an inner communication layer and an outer sensing layer into an inseparable waveform, offering high degrees of freedom to optimize the sensing-communication trade-off, as presented in Fig.~\ref{fig:scene}.
For instance, 16QAM-2FSK prioritizes high data rates, while BPSK-LFM favors accurate sensing.
By adaptively selecting modulation schemes according to channel conditions and task priorities, CM empowers cognitive ISAC systems to enhance resource efficiency and robustness.
It is important to distinguish CM from compound modulation, which refers to the linear summation of independent signal sources, a topic on which substantial advances have been reported \cite{pan2020automatic, si2021intra}. 
In contrast, CM performs waveform-level fusion where components are coupled at the generation stage. 
As a result, the constituent modulations do not exist as independent temporal entities and require specialized feature representations distinct from those developed for compound signals.

However, the performance of cognitive ISAC systems depends critically on automatic composite modulation recognition (ACMR). 
 Due to the adaptive nature of transmission strategies, the modulation parameters of intercepted waveforms change dynamically with tasks and channel conditions, making ACMR a core determinant of overall system performance. 
On the communication side, accurate identification is a prerequisite for correct demodulation. Any misidentification inevitably causes severe bit-error-rate (BER) degradation and synchronization failure, ultimately compromising communication reliability.
On the sensing side, this identification enables the receiver to dynamically reconfigure its matched filters or super-resolution algorithms according to the identified radar component. This capability directly enhances range-Doppler resolution and reduces the false-alarm rate, thereby preserving sensing fidelity.
Beyond cooperative ISAC systems, ACMR is equally indispensable in non-cooperative scenarios such as spectrum monitoring, electronic intelligence (ELINT), and electronic warfare, where the modulation parameters of intercepted waveforms are entirely unknown a priori. 
In such contexts, ACMR constitutes the indispensable first step in converting raw observations into actionable signal intelligence. Consequently, robust ACMR is not merely a classification task but a foundational capability that underpins closed-loop resource management and interference-aware processing across the full operational range of advanced ISAC networks.

As a specialized category within the broader framework of automatic modulation recognition (AMR), ACMR inherits the fundamental principles of signal classification while introducing unique challenges associated with multi-layered modulation.
\color{black}
Early studies in AMR primarily adopted maximum likelihood estimation (MLE), framing modulation classification as a multiple-hypothesis testing problem \cite{huan1995likelihood}. As limitations of MLE became evident, research shifted toward feature-based methods, typically involving preprocessing, feature extraction, and classification \cite{bkassiny2012survey}.
The rapid advancement of deep learning (DL) has significantly propelled the development of the AMR field.
O’Shea first demonstrated CNN-based classification using raw IQ data \cite{o2016convolutional}, while subsequent works examined network depth, kernel size, and architectural variations such as residual, inception, and CLDNN models \cite{west2017deep, hermawan2020cnn, liu2020modulation, lei2024understanding}. 
To capture temporal dependencies, researchers introduced recurrent models, including long short-term memory (LSTM) and denoising autoencoders, which achieved strong performance under noisy conditions \cite{rajendran2018deep, ke2021real}. 
Hybrid models have further combined CNNs and LSTMs, enabling effective spatial-temporal feature learning \cite{zhang2020automatic, wang2021multidimensional, sainath2015convolutional}.

In contrast, research on ACMR remains relatively limited, with existing efforts primarily focusing on hand-crafted feature approaches or DL-based approaches. 
For hand-crafted feature approaches, phase-locked loop-based (PLL-based) demodulation with empirical phase cumulative distribution function (CDF) features \cite{lijun2021modulation} and cyclic-paw-point (CPP) features derived from cyclic spectra \cite{yan2024automatic, yan2024efficient} have been proposed, typically classified using support vector machine (SVM) or random forest. More recent work leverages DL to improve robustness, such as lightweight neural networks with CPP features \cite{yan2024automatic2}, ResNet applied to frequency-domain graphs \cite{wang2021modulation}, and multi-dimensional time-frequency representations \cite{wang2025blind}. Transfer learning has also been adopted to address few-shot recognition scenarios \cite{li2021composite}.

Despite these advances, ACMR research faces two fundamental challenges. 
(1) Existing methods perform reasonably well on low-order composite modulations such as BPSK-FM, but struggle to scale to high-order modulations such as 128QAM-8FSK. 
(2) Most approaches are designed for predefined modulation combinations and lack generalization to unseen CM schemes. 
This limitation restricts their applicability in practical cognitive ISAC systems that require adaptability and robustness.

\color{black}
To address these challenges, this paper proposes a novel zero-shot recognition framework for CM signals. 
The core of the framework is a disentangled semantic space constructed through tangent space disentanglement, where the multiplicative coupling between modulation layers is first linearized via logarithmic projection and then decomposed into layer-wise semantic features by a learnable geometric transformation. 
We instantiate this framework as the tangent space disentanglement network (TSDN). 
The network integrates logarithmic mapping, a spatial transformer network for learning the geometric transformation, and a multi-objective loss that balances discrimination with cross-domain generalization.
The main contributions of this work are as follows.

\begin{itemize}

\item We design a disentangled semantic space by mapping CM signal waveforms into a tangent space. This mapping converts the inherent multiplicative coupling between modulation layers into an additive form. The linearized representation allows a principled separation of the composite modulation through linear operations, providing a systematic basis for feature disentanglement.

\item We propose a zero-shot CM recognition framework built upon a disentangled semantic space. 
To disentangle inner-layer and outer-layer modulations, the framework assigns a dedicated learnable transformation module to each layer and employs a multi-task learning mechanism to extract discriminative semantic features from the tangent space. 
As a result, the framework generalizes well to previously unseen CM schemes and offers enhanced scalability for practical ISAC systems.

\item We instantiate the framework as the TSDN. TSDN integrates logarithmic mapping to project signals into the tangent space, a spatial transformer network to learn the geometric transformation for layer-wise decoupling, and a multi-objective loss function that jointly optimizes modulation discrimination and cross-domain generalization. The architecture adopts lightweight design principles, resolving the critical trade-off between model complexity and recognition performance in practical communication systems.

\end{itemize}

The remainder of this paper is organized as follows. Section II presents the CM signal model and formally defines the zero-shot ACMR task. Section III introduces the proposed disentangled semantic space and describes the overall zero-shot recognition framework built upon it. Section IV details the TSDN architecture. 
Section V reports experimental results and analysis that validate the feasibility and advantages of the method. Finally, Section VI concludes the paper.

\color{black}

\section{Mathematical Model}
\label{sec:system_model}
\subsection{Composite Modulation Primer}

\begin{figure}
    \centering
    \includegraphics[width=\linewidth]{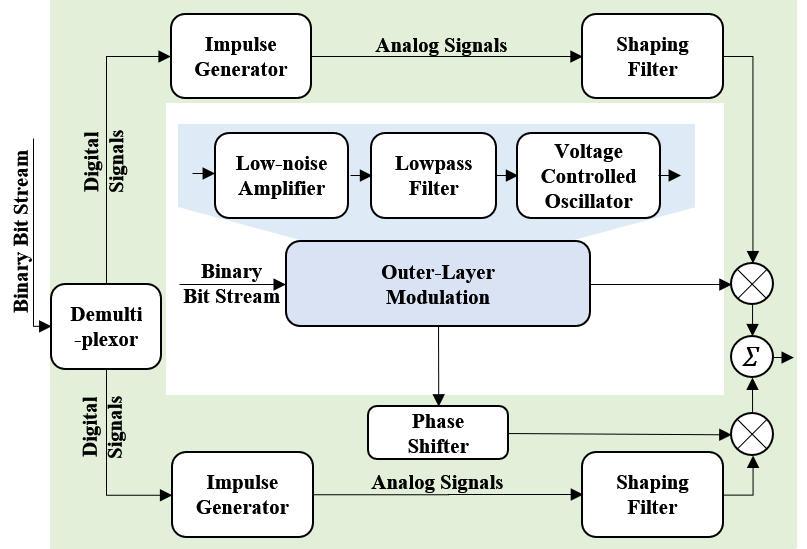}
    \caption{Structure of an ISAC emitter with CM modulator contains two-layer modulations, containing the inner-layer modulation (green) and the outer-layer modulation (blue).}
    \label{fig:sig_gen}
\end{figure}

\begin{figure*}[!t]
\centering
\subfloat[BPSK]{\includegraphics[width=0.32\linewidth]{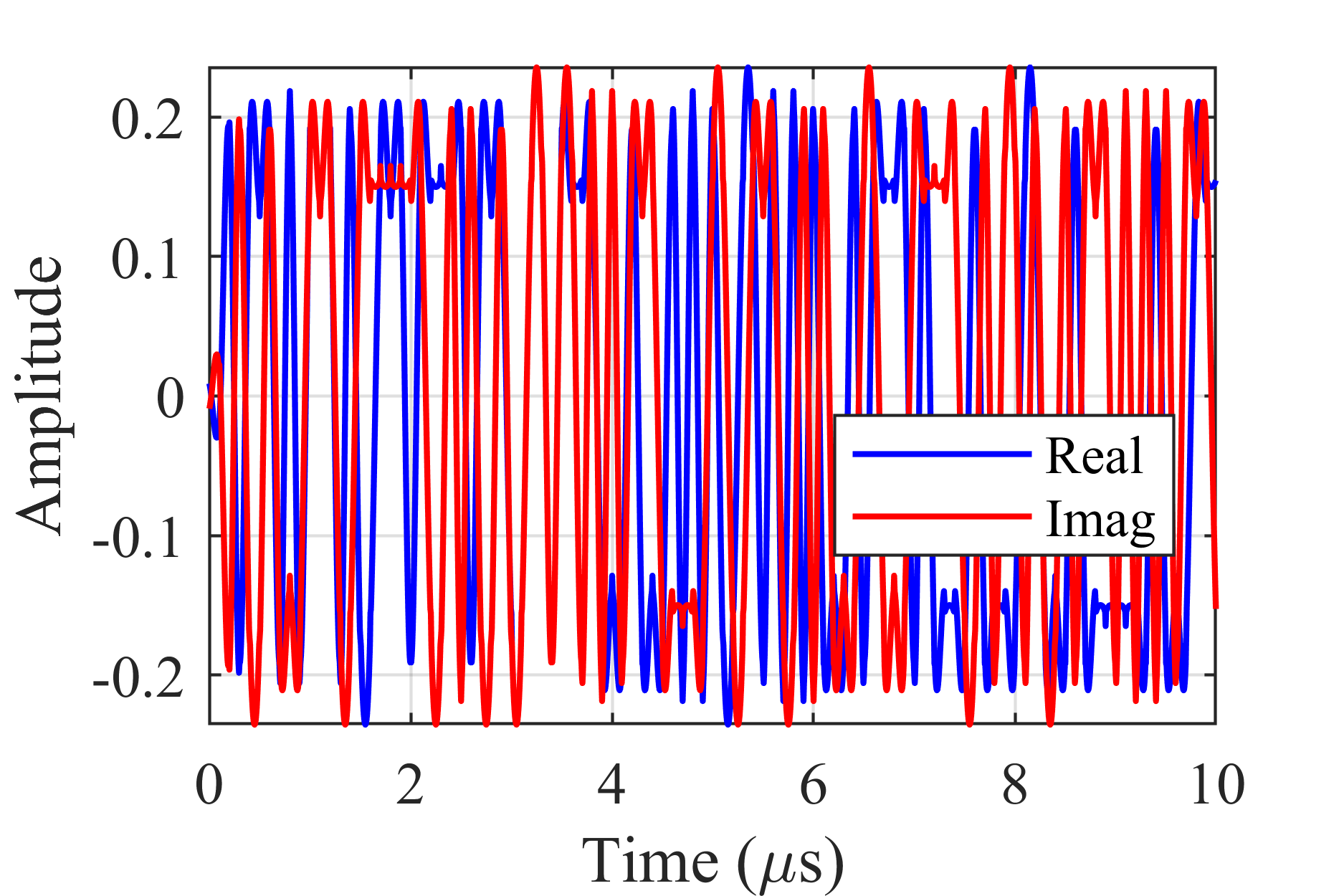}}
\subfloat[LFM]{\includegraphics[width=0.32\linewidth]{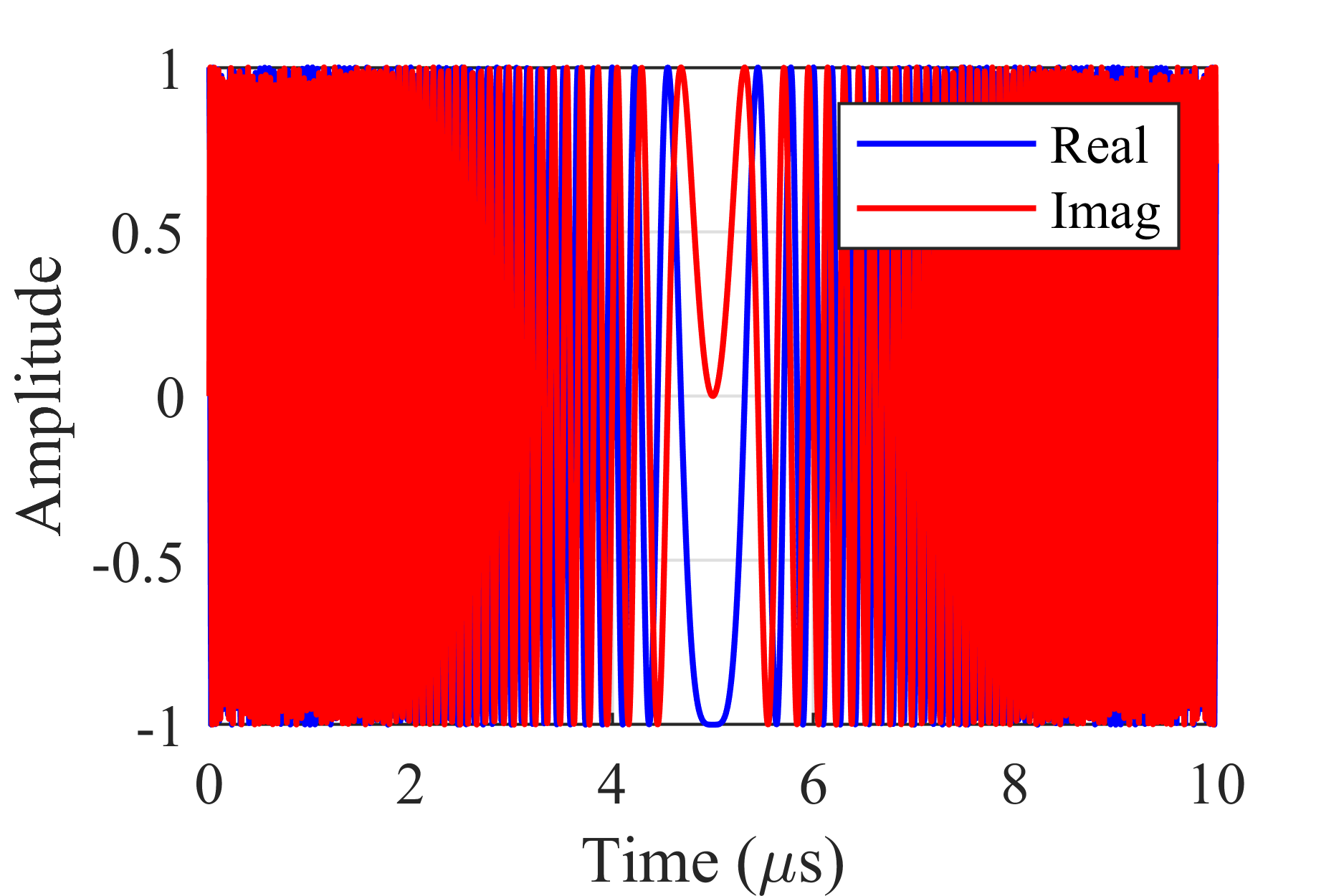}}
\subfloat[BPSK-LFM]{\includegraphics[width=0.32\linewidth]{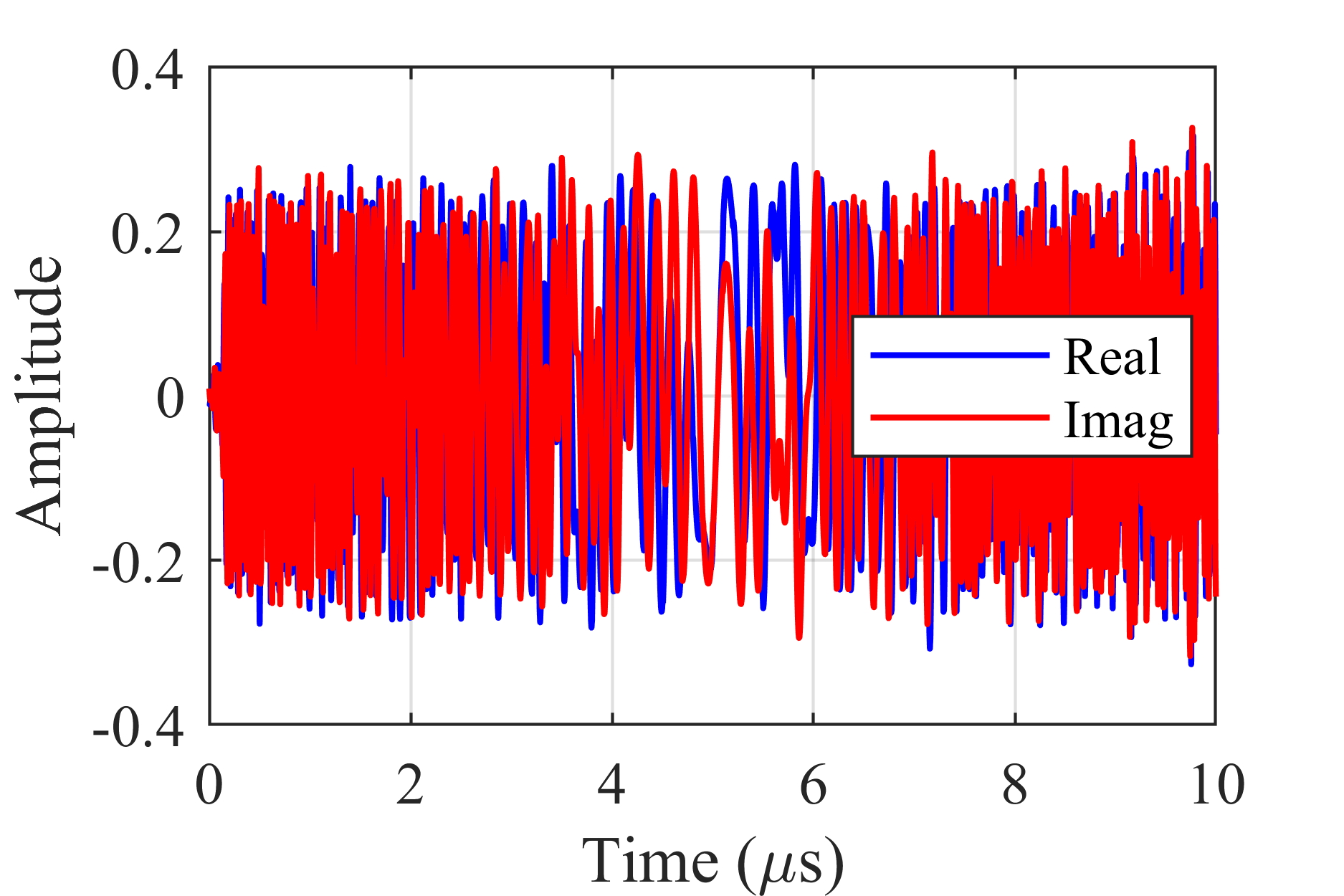}}
\\
\subfloat[BPSK]{\includegraphics[width=0.32\linewidth]{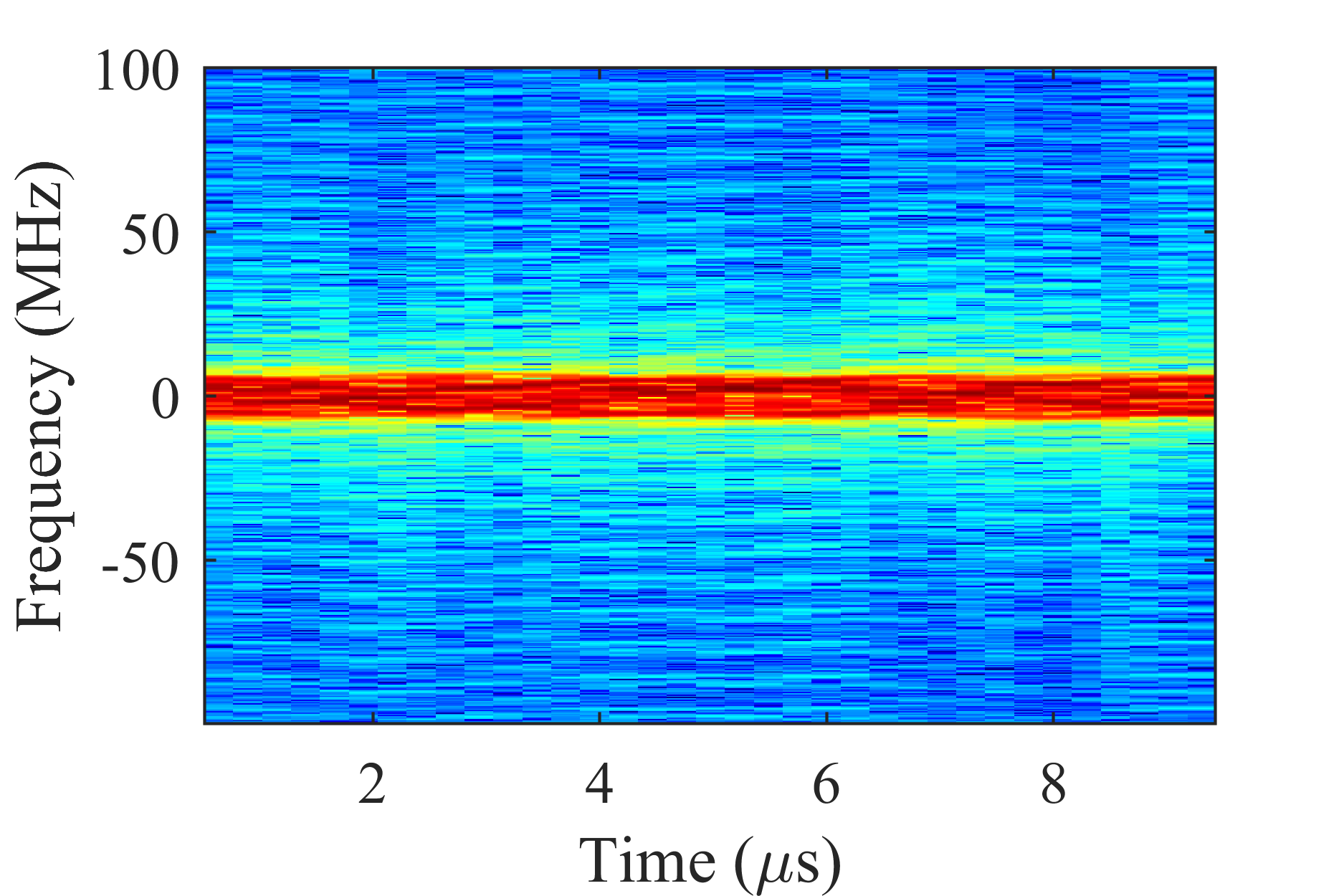}}
\subfloat[LFM]{\includegraphics[width=0.32\linewidth]{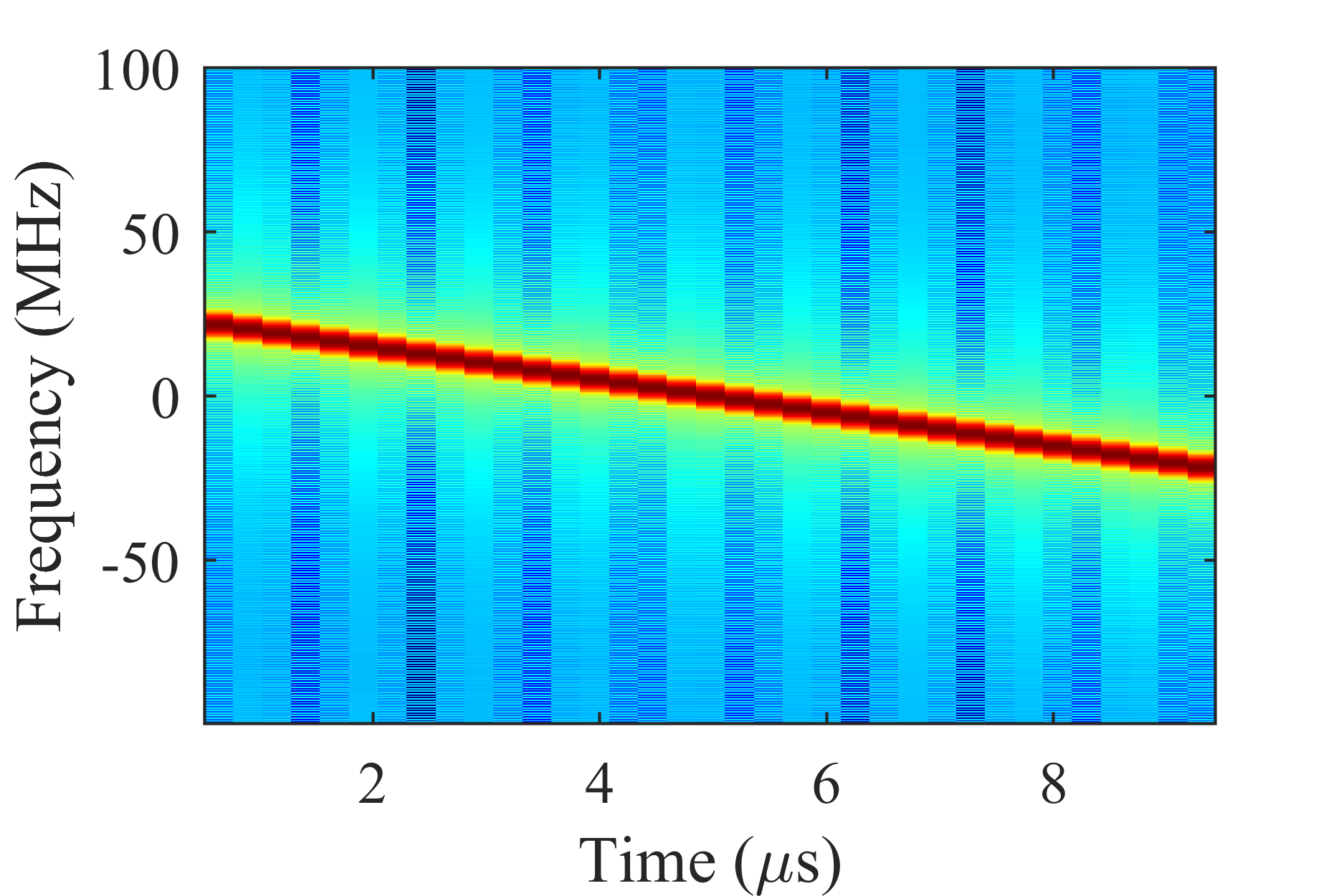}}
\subfloat[BPSK-LFM]{\includegraphics[width=0.32\linewidth]{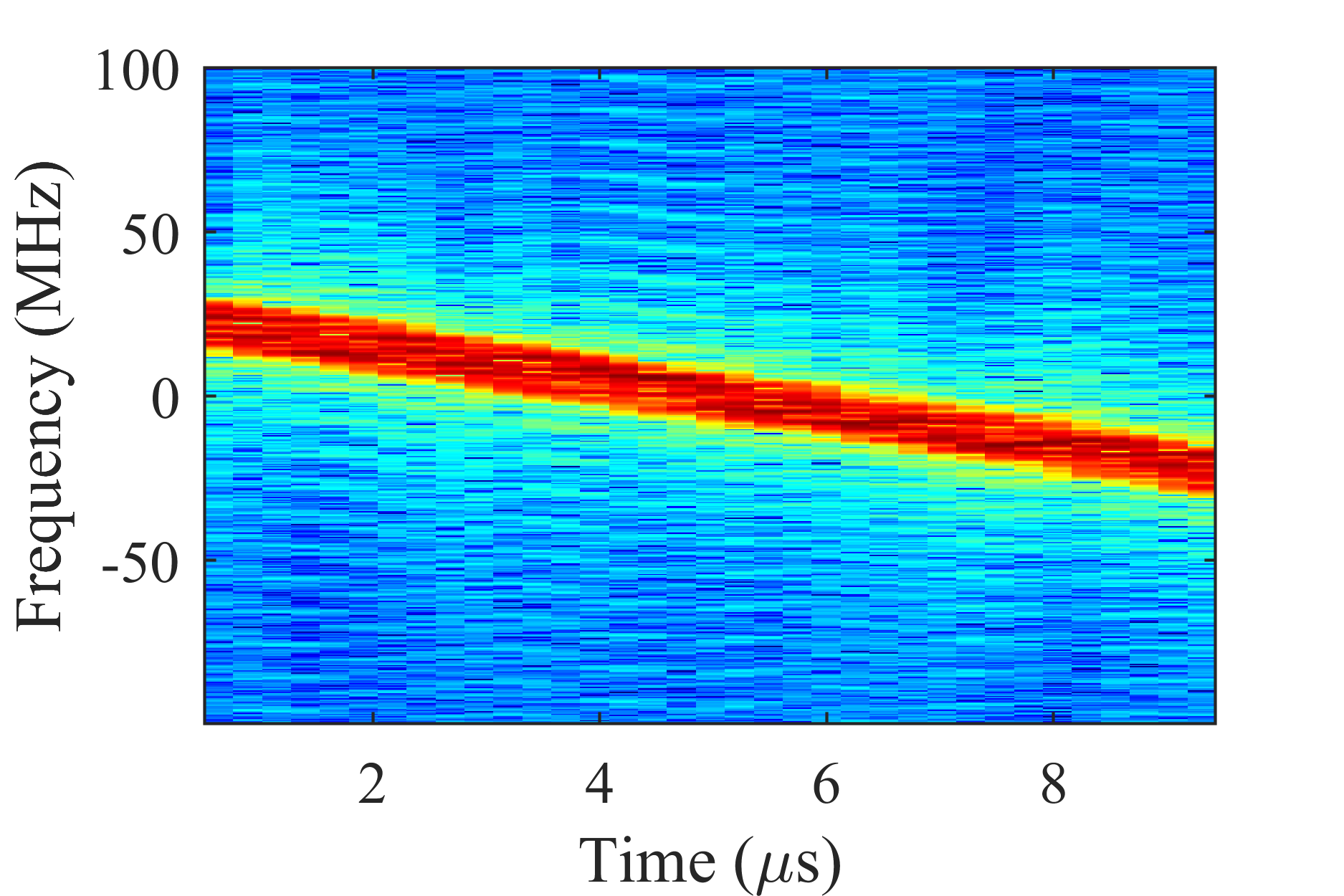}}

\caption{\textcolor{black}{Time-domain waveforms (top row) and time-frequency spectrograms (bottom row) of CM signals illustrate how the inner-layer symbols ride on the outer-layer chirp structure. (a),(d) Pure inner-layer modulation (BPSK). (b),(d) Pure outer-layer modulation (LFM). (c),(f) Composite modulation (BPSK-LFM).}}
\label{fig:visual_signal}
\end{figure*}

For concreteness, we consider a two-layer CM scheme similar to that in \cite{consultative2009radio}, where the signal is formed by an inner and an outer modulation layer. 
The inner-layer modulation is primarily used for communication, whereas the outer-layer modulation is designed for sensing. 
\textcolor{black}{
Fig.~\ref{fig:sig_gen} illustrates the signal generation flow, and Fig.~\ref{fig:visual_signal} compares the time-domain waveforms and time-frequency spectrograms of pure inner-layer, pure outer-layer, and CM signals.
}


For the inner layer, CM systems commonly adopt digital amplitude-phase modulation, expressed as
\begin{equation}
\label{eq11}
\begin{split}
{s_{\text{\rm in}}}(t) &= A \sum\limits_{n = 0}^{{N_{\text{\rm in}}}-1} h(t - n{T_{\text{\rm in}}}) \cdot a_n \cdot \exp \left( j{\phi_n}\right) \\
&= (a_{I,n} + j a_{Q,n}) \cdot h(t - n{T_{\text{\rm in}}})
\end{split},
\end{equation}
where $A$ is the amplitude scaling factor, $T_{\rm in}$ the symbol duration, $N_{\rm in}$ the number of symbols, $a_n$ and $\phi_n$ the amplitude and phase for the $n$-th symbol determined by modulation order $M$ ($\phi_n \in \{0, 2\pi/M, \dots, 2\pi\}$, $a_n$ from a discrete constellation), so $a_{I,n}, a_{Q,n} \in [\pm A, \pm 3A, \dots, \pm (2\sqrt{M}-1)A]$. 
$h_{\text{\rm in}}(t)$ is a shaping filter,
\begin{equation}
\label{eq6}
h_{\text{\rm in}}(t) = \frac{{\sin \left( {\pi t/{T_h}} \right)}}{{\pi t/{T_h}}} \cdot \frac{{\cos \left( {\alpha \pi t/{T_h}} \right)}}{{1 - {{\left( {2\alpha t/{T_h}} \right)}^2}}},
\end{equation}
where $\alpha$ and $T_h$ are the roll-off factor and cut-off time, respectively.
 
\color{black}
The outer-layer modulation employs frequency-modulated waveforms, whose instantaneous phase varies according to a modulation function $\varphi(t)$ that encodes either digital information or analog sensing parameters, written as
\begin{equation}
\label{eq:out_general}
s_{\rm out}(t) = g(t) \cdot \exp\left( j2\pi f_c t + j\varphi(t) \right),
\end{equation}
where $f_c$ is the carrier frequency.
$g(t)$ denotes the envelope function, taking the form of ${\rm rect}(t/T_p)$ for pulsed waveforms and unity otherwise.
$\varphi(t)$ is the phase modulation function that fully characterizes the specific scheme adopted.
For instance, in M-ary frequency shift keying (MFSK), data symbols $b_n$ are mapped onto discrete frequency offsets, yielding
\begin{equation}
\label{eq:mfsk_phi}
\varphi_{\rm MFSK}(t) = 2\pi f_{sep} \sum_{n=1}^{N_{\rm out}} b_n\, h(t - n T_{\rm out}),
\end{equation}
where $f_{sep}$ is the frequency separation between adjacent tones, $T_{\rm out}$ is the symbol duration, and $N_{\rm out}$ is the number of outer-layer symbols.
In linear frequency modulation (LFM), also known as chirp, the phase varies quadratically in time to produce a continuous sweep across bandwidth $B$ over pulse duration $T_p$, expressed as
\begin{equation}
\label{eq:lfm_phi}
\varphi_{\rm LFM}(t) = \pi k t^2,
\end{equation}
with chirp rate $k = B/T_p$ and $g_{\rm LFM}(t) = {\rm rect}(t/T_p)$.
The resulting time--bandwidth product $T_p \cdot B$ enables high range resolution via pulse compression without demanding excessive peak power, making LFM particularly attractive for radar and joint sensing systems.
Minimum shift keying (MSK) and Gaussian-filtered MSK (GMSK) extend this idea into the digital domain by imposing a continuous-phase constraint. The phase function integrates the symbol-weighted pulse train rather than switching abruptly between frequencies, i.e.,
\begin{equation}
\label{eq:msk_phi}
\varphi_{\rm MSK}(t) = \frac{\pi}{2T_{\rm out}} \int_0^t \sum_{n} b_n\, h(\tau - nT_{\rm out})\, d\tau,
\end{equation}
and GMSK further smooths $\varphi_{\rm MSK}(t)$ through a Gaussian pre-modulation filter to suppress spectral sidelobes.

\color{black}

The CM signal is synthesized by multiplying the inner-layer complex baseband signal with the outer-layer signal, and thus admits a compact unified form, expressed as
\begin{equation}
\label{eq:cm_unified}
s_{\rm CM}(t) = s_{\rm in}(t) \cdot s_{\rm out}(t).
\end{equation}
Assuming $N$ is the number of sampling points, the CM signal can be represented as $\textbf{s}_{\rm CM}=[s_{\rm CM}(1),s_{\rm CM}(2),\dots,s_{\rm CM}(N)]$, $\textbf{s}_{\text{\rm in}}=[s_{\text{\rm in}}(1),s_{\text{\rm in}}(2),\dots,s_{\text{\rm in}}(N)]$, and $\textbf{s}_{\text{\rm out}}=[s_{\text{\rm out}}(1),s_{\text{\rm out}}(2),\dots,s_{\text{\rm out}}(N)]$.
According to Eq.~(\ref{eq:cm_unified}), it can be inferred that the CM signals can be expressed as the product of the inner-layer modulation and the outer-layer modulation, given by
\begin{equation}
\textbf{s}_{\rm CM} = \textbf{s}_{\rm in} \odot \textbf{s}_{\rm out},
\end{equation}
where $\odot$ denotes the element-wise product operation.

\color{black}

\subsection{Imperfection Factors}
In practice, the received signal deviates from the ideal CM waveform due to transmitter hardware non-idealities, multipath fading, and additive noise. We therefore adopt a three-stage observation model, expressed as
\begin{equation}
\label{eq:imperfection_unified}
\left\{ {\begin{array}{*{20}{l}}
{{s_{{\rm{rx}}}}(n) = \underbrace {[h*{s_{{\rm{UIM}}}}](n)}_{({\rm{I}})\;{\rm{fading}}} + \underbrace {w(n)}_{({\rm{II}})\;{\rm{noise}}}}\\
{{s_{{\rm{UIM}}}}(n) = \underbrace {{s_{{\rm{CM}}}}(n) \cdot \Delta s(n)}_{({\rm{III}})\;{\rm{hardware}}\;{\rm{UIM}}}}
\end{array}} \right. ,
\end{equation}
where $*$ denotes convolution, $h(n)$ is the channel response, $w(n)\sim\mathcal{CN}(0,\sigma_w^2)$ is additive white Gaussian noise (AWGN), $s_{\rm CM}(n)$ is the ideal CM signal, and $\Delta s(n)$ aggregates all transmitter-side unintentional modulations (UIM).

\subsubsection{Hardware Imperfection}

Although an exact closed-form expression for aggregate hardware non-idealities is generally intractable, the dominant effects can be decomposed into unintentional modulations of amplitude, frequency, and phase \cite{zhao2022concentrate}.
Consequently, the hardware-impaired signal admits the compact factorization, written as
\begin{equation}
\label{eqn:uim}
s_{\rm UIM}(n) = s_{\rm CM}(n) \cdot \underbrace{\bigl(1 + \Delta A_n\bigr)\, e^{j\Delta\omega_n n}\, e^{j\Delta\varphi_n}}_{\displaystyle \Delta s(n)},
\end{equation}
where $\Delta A_n$, $\Delta\omega_n$, and $\Delta\varphi_n$ denote the amplitude, frequency, and phase UIM components, respectively.
Thus all transmitter imperfections collapse into a multiplicative distortion $\Delta s(n)$ acting pointwise on $s_{\rm CM}(n)$.

Three primary sources contribute to $\Delta s(n)$.
First, IQ imbalance arising from imperfect matching between I and Q branches introduces amplitude mismatch $\alpha$ and phase error $\phi$, 
\begin{equation}
\label{eq:iq}
\begin{split}
s_{\rm IQ}(n) = & (1 + \alpha)\, x_{\rm real}(n)\cos(2\pi f_c n + \phi) \\
& - j\, x_{\rm imag}(n)\sin(2\pi f_c n)
\end{split},
\end{equation}
affecting $\Delta A_n$ and $\Delta\varphi_n$ in \eqref{eqn:uim}.
Second, high-frequency oscillators (HFO) exhibit random phase fluctuations, i.e., oscillator phase noise, which manifest as an undesired multiplicative term on the signal. Under the widely-used sinusoidal model 
\cite{kim2002prediction}, Bessel expansion generates sidebands at $\pm f_{\rm os}$ around the carrier. To first order, the oscillator-distorted signal is
\begin{equation}
\label{eq:pn_approx}
s_{\rm os}(n) \approx x(n) + j\tfrac{\lambda}{2}\bigl[ x(n)e^{j2\pi f_{\rm os}n} - x(n)e^{-j2\pi f_{\rm os}n} \bigr],
\end{equation}
where $\lambda \ll 1$ is the modulation index and $f_{\rm os}$ the offset frequency, contributing to $\Delta\omega_n$ and $\Delta\varphi_n$ in \eqref{eqn:uim}.
Third, power amplifier (PA) nonlinearity operating near saturation primarily manifests as amplitude-to-amplitude distortion, i.e., AM-AM, and can be modeled via Taylor series \cite{liu2008specific}:
\begin{equation}
\label{eq:pa}
s_{\rm PA}(n) = \sum_{i=1}^{\infty} g_i [x(n)]^i,
\end{equation}
where higher-order terms ($i\geq 2$) affect $\Delta A_n$.

\subsubsection{Channel Effect}

The hardware-impaired signal propagates through a frequency-selective channel, yielding the tapped-delay-line model with $L$ paths, 
\begin{equation}
\label{eq:channel}
{s}_{\rm MP}(n) = \sum_{l=1}^{L} a_l\, s_{\rm UIM}(n - n_l).
\end{equation}
where $a_l$ and $n_l$ are the path gain and the delay of the $l$-th path, respectively, with $l = 1, 2, \dots, L$.

Substituting the hardware imperfection and the channel effect  into \eqref{eq:imperfection_unified} yields the complete observation model
\begin{equation}
\label{eq:full_model}
s_{\rm rx}(n) = \sum_{l=1}^{L} a_l \bigl[ s_{\rm CM}(n-n_l) \cdot \Delta s(n-n_l) \bigr] + w(n),
\end{equation}
which serves as input to identification algorithms.

\subsection{Zero-shot ACMR}
\label{subsec:zsl_formulation}

ACMR aims to identify the modulation scheme from a received signal, expressed as  
\begin{equation}
\label{eq:acmr}
{\mathcal F}: \mathbf{s}_{\rm rx} \longrightarrow y_{\rm CM},
\end{equation}
where $y_{\rm CM}$ denotes the CM label which decomposes naturally into an inner-layer and an outer-layer type, 
\begin{equation}
\label{eq:cm_label}
y_{\rm CM} = (y_{\rm in},\, y_{\rm out}) \in {\mathcal Y}_{\rm in} \times {\mathcal Y}_{\rm out},
\end{equation}
where ${\mathcal Y}_{\rm in}$ and ${\mathcal Y}_{\rm out}$ are the sets of candidate inner-layer and outer-layer modulation types. The total number of possible CM classes grows as $|{\mathcal Y}_{\rm in}| \cdot |{\mathcal Y}_{\rm out}|$, which quickly becomes prohibitive for exhaustive training.
This gives rise to a zero-shot ACMR problem.

Formally, let ${\mathcal D}_{\rm KN} = \{(\mathbf{s}_{\rm rx}, y) \mid y \in {\mathcal Y}_{\rm KN}\}$ denote the training set and ${\mathcal D}_{\rm UN} = \{(\mathbf{s}_{\rm rx}, y) \mid y \in {\mathcal Y}_{\rm UN}\}$ the test set, where ${\mathcal Y}_{\rm KN} \cap {\mathcal Y}_{\rm UN} = \emptyset$ and both are subsets of ${\mathcal Y}_{\rm in} \times {\mathcal Y}_{\rm out}$.
The zero-shot ACMR task then seeks a mapping
\begin{equation}
\label{eq:zs_acmr}
{\mathcal F}_{\rm ZS}: \mathbf{s}_{\rm rx} \longrightarrow (y_{\rm in}, y_{\rm out}), \quad (y_{\rm in}, y_{\rm out}) \in {\mathcal Y}_{\rm UN},
\end{equation}
that leverages only the semantic relationships between known and unknown classes. 

This problem is fundamentally distinct from conventional combinatorial generalization. In tasks such as multi-label classification or compositional vision, the constituent attributes can typically be observed independently. For example,  color and shape are recognized separately and then combined into a prediction. In CM, however, the inner and outer layers are not independently observable in the waveform domain. As revealed in Fig. 3,  the two layers interact through pointwise multiplication, producing a single entangled waveform from from which neither $y_{\rm in}$ nor $y_{\rm out}$ can be directly read off. 
Consequently, zero-shot ACMR is a waveform-level zero-shot learning problem where the raw observation fuses both sources into an inseparable mixture. 

\color{black}

\section{Zero-shot Learning Framework based on Disentangled Semantic Space}
\label{sec:framework}

\begin{figure}[!t]
    \centering
    \subfloat[]{%
        \includegraphics[width=\linewidth]{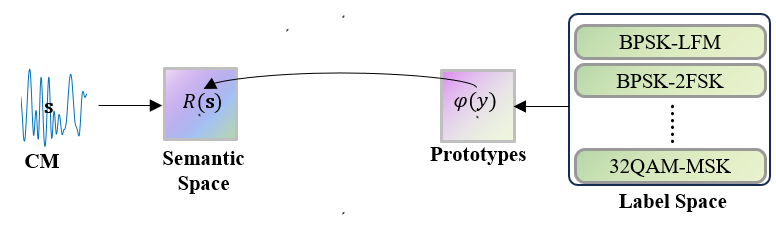}%
    }
    \\
    \subfloat[]{%
        \includegraphics[width=\linewidth]{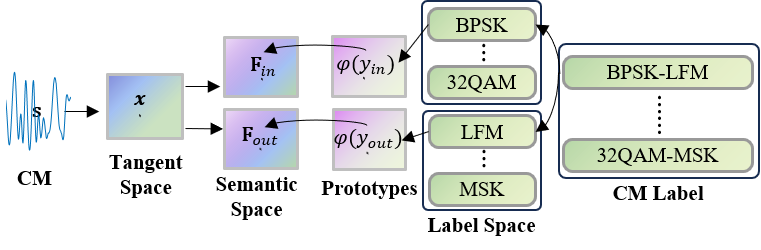}%
    }
    \caption{Comparison of zero-shot learning frameworks. (a) Traditional unified semantic space: each CM type as an independent class. (b) Proposed disentangled semantic space: each CM type as a disentanglement multi-task. }
    \label{fig:framework}
\end{figure}

A straightforward baseline, illustrated in Fig.~\ref{fig:framework}(a), treats each CM type as an atomic class and performs standard zero-shot classification in a global semantic space. 
Let the received signal be embedded via a feature extractor $\mathcal{R}(\cdot)$ and matched against prototypes of the target CM labels, written as
\begin{equation}
\label{eq:zs_baseline}
\hat{y}_{\rm CM} = \mathop {\arg \max }\limits_{{y_{{\rm{CM}}}} \in {{\mathcal Y}_{{\rm{UN}}}}}  \bigl\langle \mathcal{R}(\mathbf{s}_{\rm rx}),\, \boldsymbol{\varphi}(y_{\rm CM}) \bigr\rangle,
\end{equation}
where $\boldsymbol{\varphi}(y)$ denotes the prototype embedding of label $y_{\rm CM}$. $\bigl\langle \cdot, \cdot \bigr\rangle$ denotes the inner product, which measures the similarity between the received signal feature and a candidate label prototype.
This formulation ignores the compositional structure $y_{\rm CM} = (y_{\rm in}, y_{\rm out})$ and therefore offers limited extrapolation to unseen combinations of inner-layer and outer-layer modulations.

To overcome this limitation, we propose a disentangled semantic space in which the inner-layer and outer-layer modulations are represented, matched, and classified independently.
The core insight is that the multiplicative coupling $s_{\rm CM}(t) = s_{\rm in}(t) \cdot s_{\rm out}(t)$, although nonlinear in the waveform domain, becomes additive after a logarithmic projection onto the tangent space of the signal manifold. Once linearized, the two modulation components can be separated by a learnable geometric transformation, after which each is independently matched against its own layer-wise prototypes.

This design yields a three-stage framework, illustrated in Fig.~\ref{fig:framework}(b):
\textit{(i) Tangent-space mapping}: project the received signal onto a tangent space where the multiplicative coupling becomes additive (Section~\ref{subsec:log_linear}).
\textit{(ii) Learnable geometric disentanglement}: separate the additively combined components via an input-dependent affine transformation whose parameters are predicted from the observed signal (Section~\ref{subsec:disentangle}).
\textit{(iii) Compositional semantic matching}: classify each disentangled component independently against layer-wise semantic prototypes and compose the final CM label (Section~\ref{subsec:zsl}).

\subsection{Tangent-Space Mapping}
\label{subsec:log_linear}

The main obstacle to component separation is the nonlinear multiplicative coupling $s_{\rm CM}(t) = s_{\rm in}(t) \cdot s_{\rm out}(t)$.
Rather than addressing this coupling directly, we exploit the logarithmic homomorphism $\log : (\mathbb{R}^+, \times) \to (\mathbb{R}, +)$ to transform products into sums.

To accommodate the additive noise and channel effects in the full received model \eqref{eq:full_model}, we first isolate the additive perturbations into a composite distortion factor,
\begin{equation}
\label{eq:eta_def}
\eta(n) \triangleq \frac{\displaystyle\sum_{l=1}^{L} a_l \bigl[ s_{\rm CM}(n-n_l) \cdot \Delta s(n-n_l) \bigr] + w(n)}{s_{\rm CM}(n) \cdot \Delta s(n)},
\end{equation}
which yields the compact factorization $s_{\rm rx}(n) = s_{\rm CM}(n) \cdot \Delta s(n) \cdot \eta(n)$.
Applying the element-wise logarithmic map $\mathcal{T}(\cdot) = \log(\cdot)$ produces the linearized scalar decomposition:
\begin{equation}
\label{eq:log_decomp}
x(n) = x_{\rm CM}(n) + x_{\Delta}(n) + \tilde{n}(n),
\end{equation}
where $x(n) = \mathcal{T}(s_{\rm rx}(n))$, $x_{\rm CM}(n) = \mathcal{T}(s_{\rm CM}(n))$, $x_{\Delta}(n) = \mathcal{T}(\Delta s(n))$, and $\tilde{n}(n) = \mathcal{T}(\eta(n))$.
Collecting $N$ samples into column vectors gives the additive representation,
\begin{equation}
\label{eq:log_vec}
\mathbf{x} = \mathbf{x}_{\rm CM} + \mathbf{x}_{\Delta} + \tilde{\mathbf{n}}.
\end{equation}

This linearization admits a precise geometric interpretation.
The space of strictly positive-valued waveforms forms a Lie group under element-wise multiplication, with the all-ones vector as the identity element.
The operator $\mathcal{T}(\cdot)$ corresponds to the inverse exponential map at this identity \cite{absil2008optimization}, projecting signals from the curved multiplicative manifold onto its flat tangent space.
While the projection does not automatically resolve the blind source separation problem, it globally converts the nonlinear group operation into standard vector addition, providing a rigorous geometric foundation for the subsequent parametric separation.

\subsection{Learnable Geometric Disentanglement}
\label{subsec:disentangle}

In the tangent space, the signal components are additively superimposed as in \eqref{eq:log_vec}.
The remaining objective is to isolate the CM component $\mathbf{x}_{\rm CM}$ from the nuisance terms $\mathbf{x}_{\Delta}$ and $\tilde{\mathbf{n}}$.
Although this recovery is inherently ill-posed without explicit prior constraints, we show that the hypothesis class of input-dependent affine transformations possesses sufficient expressive capacity to encapsulate the exact solution.

\begin{proposition}[Affine Disentanglement Sufficiency]
\label{prop:affine_disentangle}
For any composite tangent-space representation $\mathbf{x} = \mathbf{x}_{\rm CM} + \mathbf{x}_{\Delta} + \tilde{\mathbf{n}}$, there exists an input-dependent affine transformation that perfectly isolates $\mathbf{x}_{\rm CM}$:
\begin{equation}
\label{eq:affine}
\begin{bmatrix} \mathbf{x} & 1 \end{bmatrix} \boldsymbol{\Theta}(\mathbf{x}) = c_\theta\, \mathbf{x}_{\rm CM},
\end{equation}
where $\boldsymbol{\Theta}(\mathbf{x}) \in \mathbb{C}^{2 \times N}$ is the sample-specific parameter matrix and $c_\theta > 0$ is a scalar scaling constant.
\end{proposition}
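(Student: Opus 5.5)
The plan is an explicit construction. Since $\boldsymbol{\Theta}(\mathbf{x})$ may depend on the sample, the proposition is an existence claim about a very rich hypothesis class, and the proof only needs a concrete choice of the two rows of $\boldsymbol{\Theta}$. I read $\mathbf{x}$ as the row vector $[x(1),\dots,x(N)]$, so that $\begin{bmatrix}\mathbf{x} & 1\end{bmatrix}$ is a $1\times(N+1)$ object. Because $\boldsymbol{\Theta}$ is declared to lie in $\mathbb{C}^{2\times N}$, I interpret the product element-wise per sample: for each index $n$, the pair $\begin{bmatrix} x(n) & 1\end{bmatrix}$ multiplies the $n$-th column $\begin{bmatrix}\theta_1(n) & \theta_2(n)\end{bmatrix}^{T}$. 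This gives
\begin{equation*}
y(n) = \theta_1(n)\,x(n) + \theta_2(n),
\end{equation*}
a per-sample scale and shift, which is the spatial-transformer form used later. I will state this interpretation explicitly at the start of the proof.

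First I fix $c_\theta>0$ arbitrarily, say $c_\theta=1$. Then I choose $\theta_1(n)=c_\theta$ for all $n$ and
\begin{equation*}
\theta_2(n) = -c_\theta\bigl(x_{\Delta}(n)+\tilde{n}(n)\bigr).
\end{equation*}
Substituting the decomposition \eqref{eq:log_decomp} gives
\begin{equation*}
y(n) = c_\theta\bigl(x_{\rm CM}(n)+x_\Delta(n)+\tilde n(n)\bigr) - c_\theta\bigl(x_\Delta(n)+\tilde n(n)\bigr) = c_\theta\,x_{\rm CM}(n),
\end{equation*}
and stacking over $n$ yields \eqref{eq:affine}. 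An alternative is to set $\theta_2\equiv 0$ and $\theta_1(n)=c_\theta\,x_{\rm CM}(n)/x(n)$ wherever $x(n)\neq 0$, but the additive shift avoids division and is cleaner.

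The main obstacle is justifying that the construction is input-dependent, i.e.\ that $\boldsymbol{\Theta}$ is genuinely a function of $\mathbf{x}$. The nuisance terms are not uniquely determined by $\mathbf{x}$ alone: the decomposition \eqref{eq:log_vec} is non-identifiable, and the paper itself concedes the problem is ill-posed. I would handle this by noting that the proposition asserts existence and not identifiability. For each realization of the generative model \eqref{eq:full_model}, the tuple $(\mathbf{x}_{\rm CM},\mathbf{x}_\Delta,\tilde{\mathbf{n}})$ is fixed, so $\boldsymbol{\Theta}$ is well defined as a function of the realized sample; equivalently, $\boldsymbol{\Theta}(\mathbf{x})$ is any selection from the nonempty set of parameters achieving \eqref{eq:affine}.

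Two technical side issues also need a remark. The logarithm of complex samples is multivalued, so I would fix a principal branch consistently for all terms; any $2\pi j$ ambiguity is then absorbed into $\theta_2(n)$. The factor $\eta(n)$ in \eqref{eq:eta_def} requires $s_{\rm CM}(n)\Delta s(n)\neq 0$, which I would assume, noting that zero samples form a measure-zero set under the continuous models of Section~\ref{sec:system_model}. Whether such a map is learnable is outside the scope of the statement.
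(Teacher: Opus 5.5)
Your construction is the paper's: scale by $c_\theta$ and use the bias row $-c_\theta(\mathbf{x}_{\Delta}+\tilde{\mathbf{n}})$ to cancel the nuisance terms. Your per-sample reading $y(n)=\theta_1(n)x(n)+\theta_2(n)$ of $\boldsymbol{\Theta}\in\mathbb{C}^{2\times N}$ is the consistent one and fixes the paper's dimensional slip of setting the $1\times N$ block to $c_\theta\mathbf{I}$; your remarks on non-identifiability, the log branch, and zero samples are extra care the paper only touches on after the proof, where it notes that the oracle parameters depend on unobservable components.
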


\begin{proof}
Partition $\boldsymbol{\Theta}(\mathbf{x}) = \begin{bmatrix} \mathbf{A} \\ \mathbf{b} \end{bmatrix}$ with $\mathbf{A} \in \mathbb{C}^{1\times N}$ and $\mathbf{b} \in \mathbb{C}^{1\times N}$, so that \eqref{eq:affine} reduces to $\mathbf{x}\mathbf{A} + \mathbf{b}$.
Setting $\mathbf{A}^* = c_\theta \mathbf{I}$ and $\mathbf{b}^* = -c_\theta (\mathbf{x}_{\Delta} + \tilde{\mathbf{n}})$ and substituting $\mathbf{x} = \mathbf{x}_{\rm CM} + \mathbf{x}_{\Delta} + \tilde{\mathbf{n}}$ yields:
\begin{equation}
\mathbf{x}\mathbf{A}^* + \mathbf{b}^* = c_\theta (\mathbf{x}_{\rm CM} + \mathbf{x}_{\Delta} + \tilde{\mathbf{n}}) - c_\theta (\mathbf{x}_{\Delta} + \tilde{\mathbf{n}}) = c_\theta\, \mathbf{x}_{\rm CM},
\end{equation}
which establishes the assertion.
\end{proof}

Proposition~\ref{prop:affine_disentangle} is an existence result: it guarantees that the affine model class is rich enough to contain the ideal disentangling operator, but does not provide a closed-form solution, since the oracle parameters depend on the unobservable components $\mathbf{x}_{\Delta}$ and $\tilde{\mathbf{n}}$.
We therefore cast the parameter generation as a learnable task, i.e.,
\begin{equation}
\label{eq:learnable_theta}
\boldsymbol{\Theta}(\mathbf{x}) = T_\Phi(\mathbf{x}),
\end{equation}
where $T_\Phi$ is a parametric mapping optimized end-to-end from data.
Rather than hand-crafting separation filters, the network learns to predict input-dependent affine parameters that approximate the ideal disentangling transform, leveraging the implicit regularization of the neural architecture.
The specific architectural realization of $T_\Phi$ is deferred to Section~\ref{sec:method}.

\subsection{Compositional Semantic Matching}
\label{subsec:zsl}

The tangent-space mapping and geometric disentanglement established above produce layer-wise feature representations from the entangled waveform.
We now construct the semantic space against which these features are matched, completing the framework.
Given the tangent-space representation $\mathbf{x}$, the feature extractor $\mathcal{R}_\Phi$ applies the learned disentangling transform and produces two layer-wise feature vectors, expressed as
\begin{equation}
\label{eq:feature_extraction}
\bigl[\mathbf{F}_{\rm in} \;\; \mathbf{F}_{\rm out}\bigr] = \mathcal{R}_\Phi(\mathbf{x}),
\end{equation}
where $\mathbf{F}_{\rm in}, \mathbf{F}_{\rm out} \in \mathbb{R}^{d}$ encode the characteristics of the inner and outer modulation components, respectively.

Let ${\mathcal Y}_{\rm in}$ and ${\mathcal Y}_{\rm out}$ denote the candidate inner-layer and outer-layer modulation types, with the full CM label space ${\mathcal Y}_{\rm CM} \subseteq {\mathcal Y}_{\rm in} \times {\mathcal Y}_{\rm out}$ partitioned into seen classes ${\mathcal Y}_{\rm CM}^{\rm s}$ and unseen classes ${\mathcal Y}_{\rm CM}^{\rm u} = {\mathcal Y}_{\rm CM} \setminus {\mathcal Y}_{\rm CM}^{\rm s}$.
Unlike the unified semantic space \eqref{eq:zs_baseline} that assigns a single prototype to each composite label, we construct a disentangled semantic space comprising two independent learnable prototype sets. Each inner-layer type $y_{\rm in} \in {\mathcal Y}_{\rm in}$ is associated with a trainable prototype $\boldsymbol{\psi}_{\rm in}(y_{\rm in}) \in \mathbb{R}^{d}$, while each outer-layer type $y_{\rm out} \in {\mathcal Y}_{\rm out}$ is $\boldsymbol{\psi}_{\rm out}(y_{\rm out}) \in \mathbb{R}^{d}$.
Classification is performed independently for each layer via nearest-prototype matching, expressed as
\begin{equation}
\label{eq:zsl_classify}
\left\{ \begin{array}{l}
\hat{y}_{\rm in} = \mathop {\arg \max }\limits_{{y_{{\rm{in}}}} \in {{\mathcal Y}_{{\rm{in}}}}}  \bigl\langle \mathbf{F}_{\rm in},\, {\psi}_{\rm in}(y_{\rm in}) \bigr\rangle \\[4pt]
\hat{y}_{\rm out} = \mathop {\arg \max }\limits_{{y_{{\rm{out}}}} \in {{\mathcal Y}_{{\rm{out}}}}}  \bigl\langle \mathbf{F}_{\rm out},\, {\psi}_{\rm out}(y_{\rm out}) \bigr\rangle
\end{array} \right. .
\end{equation}
The final prediction is the composite label $\hat{y}_{\rm CM} = (\hat{y}_{\rm in}, \hat{y}_{\rm out})$.

This disentangled semantic space provides two essential properties for zero-shot generalization.
\emph{(1)} Compositionality: since each layer's prediction depends solely on its own prototype set, any combination $(y_{\rm in}, y_{\rm out})$ whose individual constituents have appeared in the training set can be recognized at test time without dedicated training samples.
\emph{(2)} Scalability: incorporating a new modulation type into either layer requires only registering its prototype, with no retraining over the expanded label space.
Both properties stem directly from the factored structure of the semantic space and the preceding disentanglement, which together convert the waveform-level zero-shot problem identified in Section~\ref{subsec:zsl_formulation} into two standard single-layer recognition tasks.

\color{black}

\section{Method for Zero-shot ACMR}
\label{sec:method}

\subsection{Overview}  
We propose the practical network TSDN with a symmetric dual-branch architecture (Fig.~\ref{fig:network}), containing one branch for inner-layer modulation recognition and the other one for outer-layer modulation. 
In each branch, the input $\mathbf{s}$ is logarithmically mapped to tangent space, then processed by an STN to adaptively decompose linear components.
A one-dimensional convolutional neural network (1D-CNN) backbone extracts features, followed by a classification layer for modulation prediction. This dual-branch design disentangles the two layers, enhancing zero-shot recognition accuracy and robustness. 

\begin{figure*}[!t]
    \centering  \includegraphics[width=0.95\linewidth]{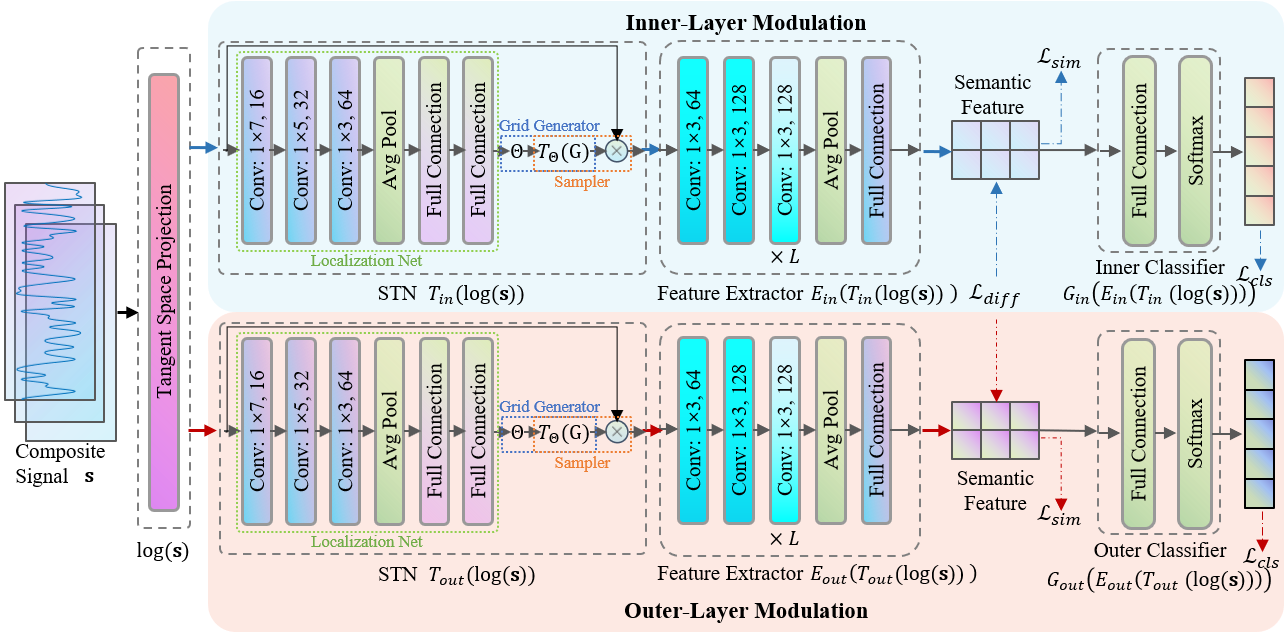}  
    \caption{Architecture of the proposed TSDN for zero-shot automatic composite modulation recognition based on tangent space disentanglement.}  
    \label{fig:network}  
\end{figure*}

\subsection{Spatial Transformer Network} 
\color{black}
In TSDN, the STN is employed to learn the linear alignment parameters for geometric disentanglement ~\cite{NIPS2015_33ceb07b}. 
Unlike the idealized construction in Proposition~1, the STN learns the transformation matrices \(\mathbf{\Theta}_{\mathrm{in}}\) and \(\mathbf{\Theta}_{\mathrm{out}}\) directly from the input data via end-to-end training, using only class labels as supervision. Formally,
\begin{equation}
\left\{ \begin{array}{l}
\mathbf{x}_{\mathrm{in}} = T_{\mathrm{in}}(\mathbf{x}) = \left[ \mathbf{x} \quad \mathbf{1} \right] \mathbf{\Theta}_{\mathrm{in}} \\[4pt]
\mathbf{x}_{\mathrm{out}} = T_{\mathrm{out}}(\mathbf{x}) = \left[ \mathbf{x} \quad \mathbf{1} \right] \mathbf{\Theta}_{\mathrm{out}}
\end{array} \right. .
\end{equation}
The STN module, adapted to temporal signal processing, offers low computational overhead and seamless integration with convolutional backbones. By actively aligning the input signals prior to feature extraction, it facilitates more discriminative learning for both inner-layer and outer-layer modulations.
\color{black}

\subsection{Feature Extractor}  
The two sequences generated by the STN transformation correspond to the inner-layer and outer-layer modulated signals, respectively. 
They are directly input into a standalone 1D-CNN. As the backbone feature extraction module, the 1D-CNN effectively captures local temporal dependencies with low computational overhead. Furthermore, its hierarchical structure allows for the extraction of semantically meaningful features across multiple temporal scales.

The proposed backbone employs a symmetric dual-branch architecture to independently process the decomposed input signals $\mathbf{x}_{\rm{\rm in}}$ and $\mathbf{x}_{\rm{\rm out}}$ in parallel. Each branch consists of identical feature extraction modules comprising cascaded 1D convolutional layers followed by linear projection layers. Specifically, the input signal $\mathbf{x}_{\rm in} \in \mathbb{R}^{N \times D}$ is fed into the first branch, while $\mathbf{x}_{\rm out} \in \mathbb{R}^{N \times D}$ is processed through the second branch, where $N$ represents the temporal sequence length and $D=2$ denotes the input feature dimension.

The feature extraction process for each branch can be formulated as
\begin{equation}
\left\{ {\begin{array}{*{20}{c}}
{{{\bf{F}}_{{\rm{in}}}} = {E_{\theta_{\rm in}} }({{\bf{x}}_{{\rm{in}}}}) = {\rm{Linear}}({\rm{GAP}}({\rm{Conv1D}}({{\bf{x}}_{{\rm{in}}}})))}\\
{{{\bf{F}}_{{\rm{out}}}} = {E_{\theta_{\rm out}} }({{\bf{x}}_{{\rm{out}}}}) = {\rm{Linear}}({\rm{GAP}}({\rm{Conv1D}}({{\bf{x}}_{{\rm{out}}}})))}
\end{array}} \right. ,
\end{equation}
where $f_{\mathbf{\theta_{\rm in}}}(\cdot)$ and $f_{\mathbf{\theta_{\rm out}}}(\cdot)$ represent feature extractors with parameters $\mathbf{\theta}_{\rm in}$ and $\mathbf{\theta}_{\rm out}$. 
$\text{Conv1D}(\cdot)$ denotes the cascaded convolutional operations. 
$\text{GAP}(\cdot)$ is the global average pooling operation, and $\text{Linear}(\cdot)$ represents the linear transformation. This process produces high-level feature embeddings $\mathbf{F}_{\rm in} \in \mathbb{R}^{d} $, $\mathbf{F}_{\rm out} \in \mathbb{R}^{d}$, where $d$ is the output feature dimension.

Subsequently, individual classification heads map these feature representations to their corresponding label spaces
\begin{equation}
\left\{ {\begin{array}{*{20}{c}}
{ {\bf{z}}_{\rm in} = {\rm{softmax}}({{\bf{W}}_{{\rm{in}}}}{{\bf{F}}_{{\rm{in}}}} + {{\bf{b}}_{{\rm{in}}}})}\\
{{\bf{z}}_{\rm out} = {\rm{softmax}}({{\bf{W}}_{{\rm{out}}}}{{\bf{F}}_{{\rm{out}}}} + {{\bf{b}}_{{\rm{out}}}})}
\end{array}} \right.,
\end{equation}
where $\textbf{W}_{\rm in} \in \mathbb{R}^{C_{\rm in} \times d}$, $\textbf{W}_{\rm out} \in \mathbb{R}^{C_{\rm out} \times d}$ are the projection matrices, $\mathbf{b}_{\rm in} \in \mathbb{R}^{C_{\rm in}}$, $\mathbf{b}_{\rm out} \in \mathbb{R}^{C_{\rm out}}$ are the bias vectors, and $C_{\rm in}$ and $C_{\rm out}$ are the number of inner-layer and outer-layer modulation classes. 
${\bf z}_{\rm in}$ and ${\bf z}_{\rm out}$ are the probability vectors for the inner-layer and outer-layer modulation schemes.
${{\hat y}_{{\rm{in}}}} = \arg \max (\bf{z}_{\rm in})$ and ${{\hat y}_{{\rm{out}}}} = \arg \max (\bf{z}_{\rm out})$ are the predicted labels for the inner-layer and outer-layer modulation types.
This symmetric design ensures consistent feature extraction across both signal components while maintaining computational efficiency and enabling independent optimization of each branch during training.

\subsection{Loss Function}  
To optimize TSDN, we design a composite loss function that integrates three components, jointly ensuring robust classification, compact feature embedding, and effective subspace disentanglement:  
\begin{equation}  
    \mathcal{L} = \lambda_1 \mathcal{L}_1 + \lambda_2 \mathcal{L}_2 + \lambda_3 \mathcal{L}_3,  
\end{equation}  
where $\mathcal{L}_1$, $\mathcal{L}_2$, and $\mathcal{L}_3$ denote cross-entropy loss, center loss, and difference loss, respectively. 
The weights $\lambda_1$, $\lambda_2$, and $\lambda_3$ are hyperparameters that balance the contributions of their corresponding loss terms.

The first component, denoted as $\mathcal{L}_1$, is the cross-entropy loss. This function measures the discrepancy between the predicted probability distributions of the two branches, which correspond to the inner-layer and outer-layer modulation types, and their respective ground-truth labels \cite{mao2023cross}. 
Since we leverage a multi-label strategy for the inner-layer and outer-layer modulation, we have   
\begin{equation}  
    \mathcal{L}_1 = -\sum_{i=1}^{|{\mathcal Y}_{\rm in}|} {\bf z}^i_{\rm in} \log (\hat{{\bf z}}^i_{\rm in})-\sum_{i=1}^{|{\mathcal Y}_{\rm out}|} {\bf z}^i_{\rm out} \log (\hat{{\bf z}}^i_{\rm out}),  
\end{equation}  
where $|{\mathcal Y}_{\rm in}|$ and $|{\mathcal Y}_{\rm out}|$ are respectively the total number of inner-layer and outer-layer modulation classes. This term enforces discriminative capability at the classification level.  

\color{black}
The second loss term \(\mathcal{L}_{2}\) is the center loss in order to jointly enforce intra-class compactness in both the inner-layer and outer-layer feature spaces.
The center loss for each branch is defined as the mean squared Euclidean distance between features and their corresponding centers.
Let $\mathbf{F}_{\text{in}}^{(i)}\in\mathbb{R}^{d}$ and $\mathbf{F}_{\text{out}}^{(i)}\in\mathbb{R}^{d}$ denote the inner and outer features of the $i$-th sample in a mini-batch of size $m$, with corresponding labels $y_{\text{in}}^{(i)}$ and $y_{\text{out}}^{(i)}$, and the center loss ${\mathcal L}_2$ can be expressed as
\begin{equation}
\mathcal{L}_2 = \frac{1}{2}\sum_{i=1}^{m} \big\| \mathbf{F}_{\text{in}}^{(i)} - \mathbf{c}^{y_{\text{in}}^{(i)}}_{\text{in}} \big\|_2^2+\frac{1}{2}\sum_{i=1}^{m} \big\| \mathbf{F}_{\text{out}}^{(i)} - \mathbf{c}^{y_{\text{out}}^{(i)}}_{\text{out}} \big\|_2^2,
\end{equation}
where $\mathbf{c}_{\rm in} \in \mathbf{C}_{\text{in}}\in\mathbb{R}^{|\mathcal{Y}_{\text{in}}|\times d}$ and $\mathbf{c}_{\rm out} \in \mathbf{C}_{\text{out}}\in\mathbb{R}^{|\mathcal{Y}_{\text{out}}|\times d}$ are learnable class centers.

Following~\cite{wen2016discriminative}, the class centers are not updated via back-propagation but with an explicit mini-batch moving average.
For each branch, the centers are independently updated as
\begin{equation}
\mathbf{c}_j^{\text{in}} \leftarrow \mathbf{c}_j^{\text{in}} - {lr}_{2} \cdot \frac{\sum_{i=1}^{m} \delta(y_{\text{in}}^{(i)}=j)\big(\mathbf{c}_j^{\text{in}} - \mathbf{F}_{\text{in}}^{(i)}\big)}{1 + \sum_{i=1}^{m} \delta(y_{\text{in}}^{(i)}=j)},
\end{equation}
and analogously for $\mathbf{c}_j^{\text{out}}$, where $\delta(\cdot)$ is the indicator function and ${lr}_{2}$ is the center learning rate.
During training, the center vectors are detached from the computational graph, so that gradients from $\mathcal{L}_{{2}}$ guide only the feature extractors.

\color{black}
The third component $\mathcal{L}_3$ promotes feature disentanglement across branches by penalizing their cross-correlation. Drawing inspiration from domain separation networks \cite{NIPS2016_45fbc6d3}, this loss term enforces orthogonality constraints between the feature subspaces of the inner-layer and outer-layer representations.  
\begin{equation}  
    \mathcal{L}_3 = \left\| \mathbf{F}_{\rm in}^{\mathrm{T}} \mathbf{F}_{\rm out} \right\|_{\rm F}^2,  
\end{equation}  
where $\mathbf{F}_{\rm in}$ and $\mathbf{F}_{\rm out}$ are the feature matrices from the two branches, and $\|\cdot\|_{\rm F}$ denotes the Frobenius norm. 
By minimizing $\mathcal{L}_3$, the network explicitly separates the two subspaces, thereby improving feature disentanglement and overall recognition performance.

\section{Simulation Results and Analysis}
\label{sec:experiment}

\color{black}
\subsection{Design of Experiments}  

To thoroughly validate the effectiveness of the proposed TSDN, five targeted experiments were designed. Experiment 1 focuses on visualizing the learned feature representations, thereby highlighting the strengths of the tangent space disentanglement framework. Experiments 2 and 3 present ablation studies examining both the overall framework and individual components. Experiment 4 analyzes the model’s robustness in zero-shot settings through performance evaluation with different unknown targets, specifically addressing discrimination tasks under high-order modulation schemes and performance across held-out modulation types. Finally, Experiment 5 evaluates the algorithm under realistic impairments, including AWGN, multipath effects, and hardware imperfections.

\subsection{Dataset}

Although several open-source AMR datasets have been released recently~\cite{zhang2026cognitive, tekbiyik2019hisarmod, o2018over, huang2024multi}, they are designed for either single modulation recognition or compound modulation with predefined categories, and do not support the controllable evaluation of compositional zero-shot generalization.
We therefore construct a customized dataset that faithfully follows the CM signal model in Section~\ref{sec:system_model} and allows systematic holdout of any inner--outer combination.

All signals are generated at baseband with a sampling rate of 200~MHz.
For inner-layer communication modulations, the symbol rate is 10~Msps; for outer-layer sensing modulations, 2FSK, 4FSK, 8FSK, and MSK use 2~Msps with 5~MHz frequency spacing, while LFM occupies 50~MHz bandwidth.
Each CM type contains 5{,}000 independent samples of 512 points each.

\begin{table*}[!t]
\centering
\caption{CM dataset configuration. Each cell corresponds to one composite modulation type formed by the inner-layer (row) and outer-layer (column) scheme, yielding a total of $|\mathcal{Y}_{\rm in}| \times |\mathcal{Y}_{\rm out}| = 25$ CM classes and $|\mathcal{Y}_{\rm SM}| = 10$ SM classes.}
\label{tab:dataset}
\small
\setlength{\tabcolsep}{6pt}
\renewcommand{\arraystretch}{1.2}
\begin{tabular}{c|cccccc}
\toprule
\textbf{Inner \textbackslash Outer} & \textbf{LFM} & \textbf{2FSK} & \textbf{4FSK} & \textbf{8FSK} & \textbf{MSK}  & \textbf{NONE}\\
\midrule
BPSK  & BPSK-LFM  & BPSK-2FSK  & BPSK-4FSK  & BPSK-8FSK  & BPSK-MSK & BPSK\\
QPSK  & QPSK-LFM  & QPSK-2FSK  & QPSK-4FSK  & QPSK-8FSK  & QPSK-MSK & QPSK\\
8PSK  & 8PSK-LFM  & 8PSK-2FSK  & 8PSK-4FSK  & 8PSK-8FSK  & 8PSK-MSK & 8PSK\\
16QAM & 16QAM-LFM & 16QAM-2FSK & 16QAM-4FSK & 16QAM-8FSK & 16QAM-MSK & 16QAM\\
32QAM & 32QAM-LFM & 32QAM-2FSK & 32QAM-4FSK & 32QAM-8FSK & 32QAM-MSK & 32QAM\\
NONE & LFM & 2FSK & 4FSK & 8FSK & MSK & \textbackslash \\
\bottomrule
\end{tabular}
\end{table*}

The inner-layer set \({\mathcal Y}_{\rm in} =\)\{BPSK, QPSK, 8PSK, 16QAM, 32QAM, NONE\} and the outer-layer set \({\mathcal Y}_{\rm out} =\) \{LFM, 2FSK, 4FSK, 8FSK, MSK, NONE\} jointly define $|{\mathcal Y}_{\rm CM}| = 25$ CM types where ${\mathcal Y}_{\rm CM} = {\mathcal Y}_{\rm in} \times {\mathcal Y}_{\rm out}$, together with $|{\mathcal Y}_{\rm SM}| = 10$ single modulation (SM) types, as shown in Table~\ref{tab:dataset}.

\subsection{Training Protocol}

The models were trained with a custom three-stage multistage strategy that progressively enhances classification performance, feature compactness, and branch complementarity. 
In Stage 1, training used the sum of two cross-entropy losses ${\mathcal L}_1$ to establish basic classification capabilities for the two tasks. Upon triggering early stopping monitored on validation loss with a patience of 20, the process automatically advanced to Stage 2, which additionally incorporated center losses to promote intra-class compactness ${\mathcal L}_2$. 
In Stage 3, a difference loss ${\mathcal L}_3$ was further added to reduce feature redundancy between the two branches while preserving discriminative power.

All loss components were combined into a single total loss for unified backward propagation across stages. The loss weights were fixed at $\lambda_1 = 1$ for the first center loss, $\lambda_2 = 0.002$ for the second center loss, and $\lambda_d = 0.001$ for the difference loss. 
Training ran for a maximum of 200 epochs with an initial learning rate of $10^{-3}$ and a cosine annealing scheduler. 
The dataset was randomly partitioned into training, validation, and test sets in an 8:1:1 ratio. Experiments employed a batch size of 128 and a hidden feature dimension of 64. The best model, selected according to validation performance, was saved via checkpoints and loaded for final evaluation to ensure stable and reproducible test results. All implementations were performed in PyTorch 2.5.0 under Python 3.10.0 on Windows 10, using an AMD 16-core CPU at 4.0 GHz and an NVIDIA RTX 4090 GPU.

\subsection{Experiment 1: Rationality of the Proposed Tangent Space Disentanglement}

\begin{figure*}[!t]
\centering
\subfloat[CNN1]{\includegraphics[width=0.24\linewidth]{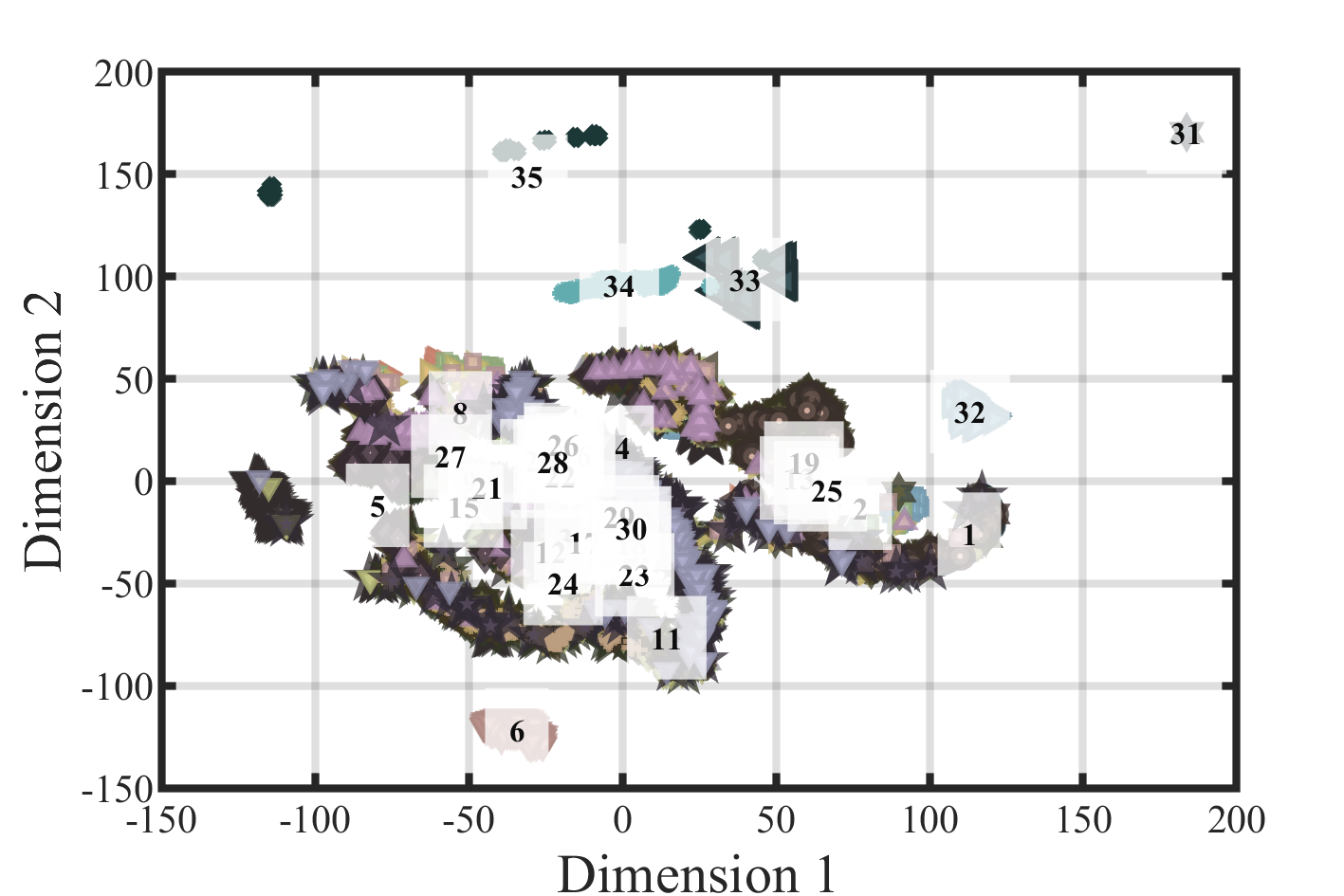}}
\subfloat[CNN2]{\includegraphics[width=0.24\linewidth]{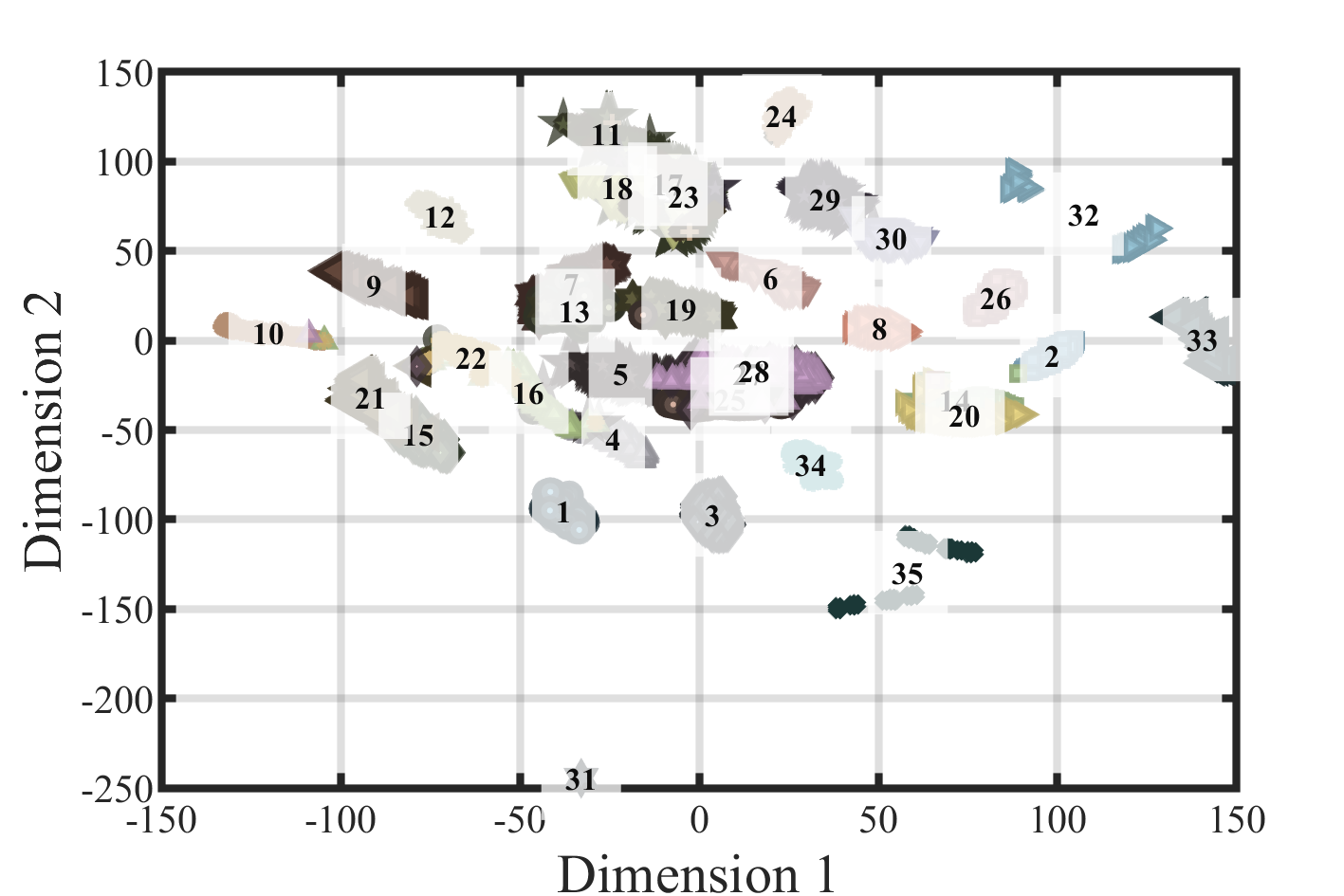}}
\subfloat[Resnet]{\includegraphics[width=0.24\linewidth]{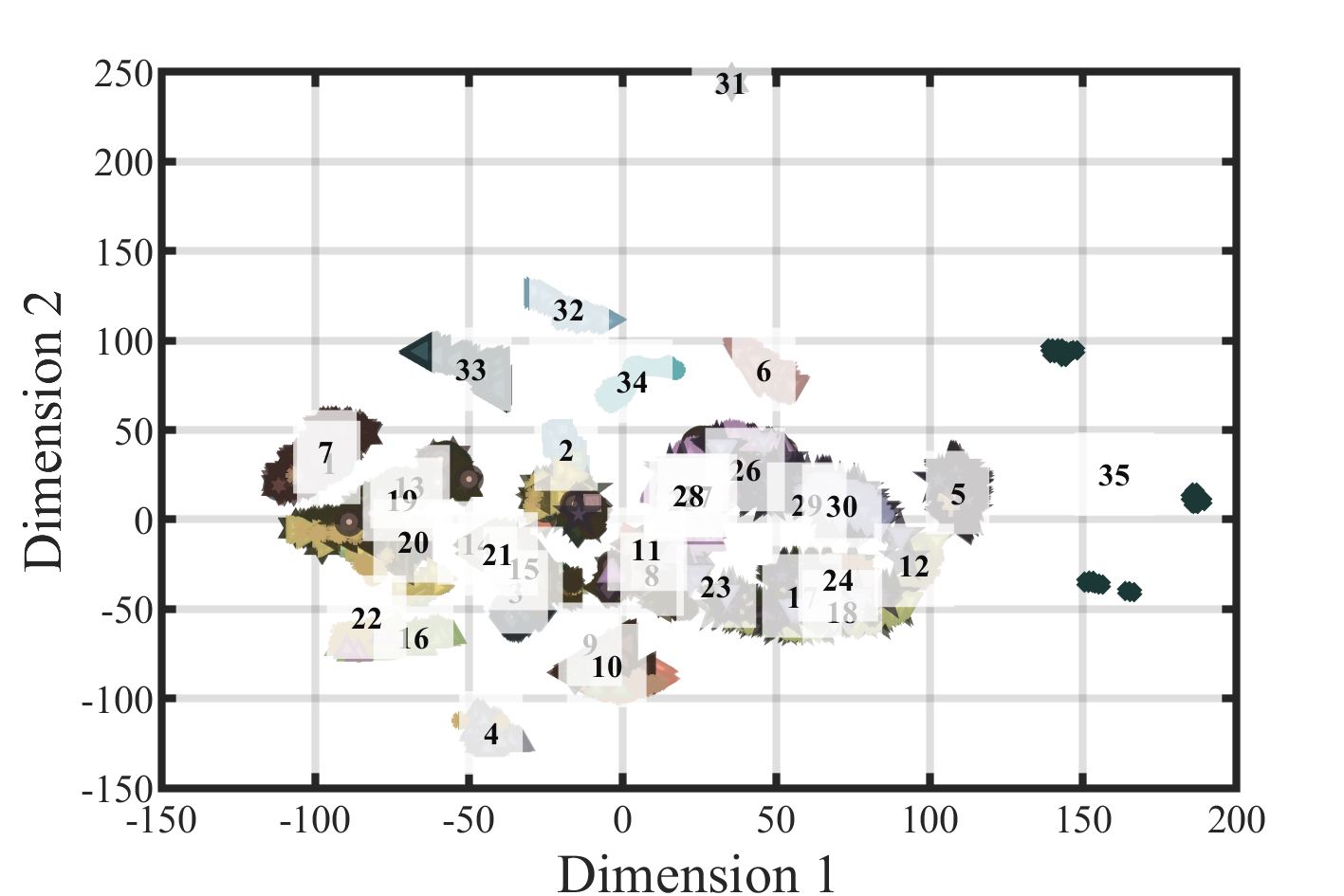}}
\subfloat[DenseNet]{\includegraphics[width=0.24\linewidth]{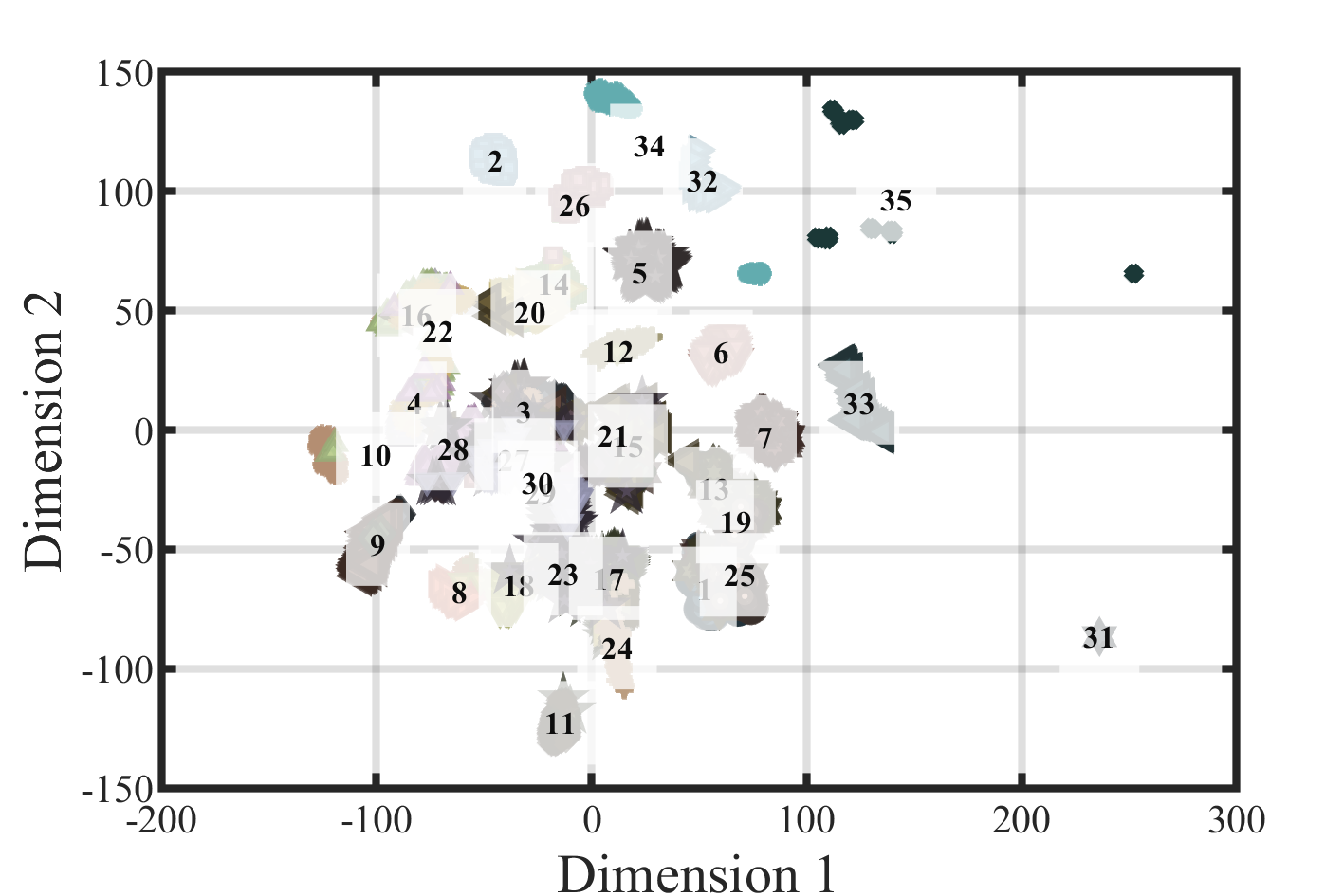}}
\\
\subfloat[GRU]{\includegraphics[width=0.24\linewidth]{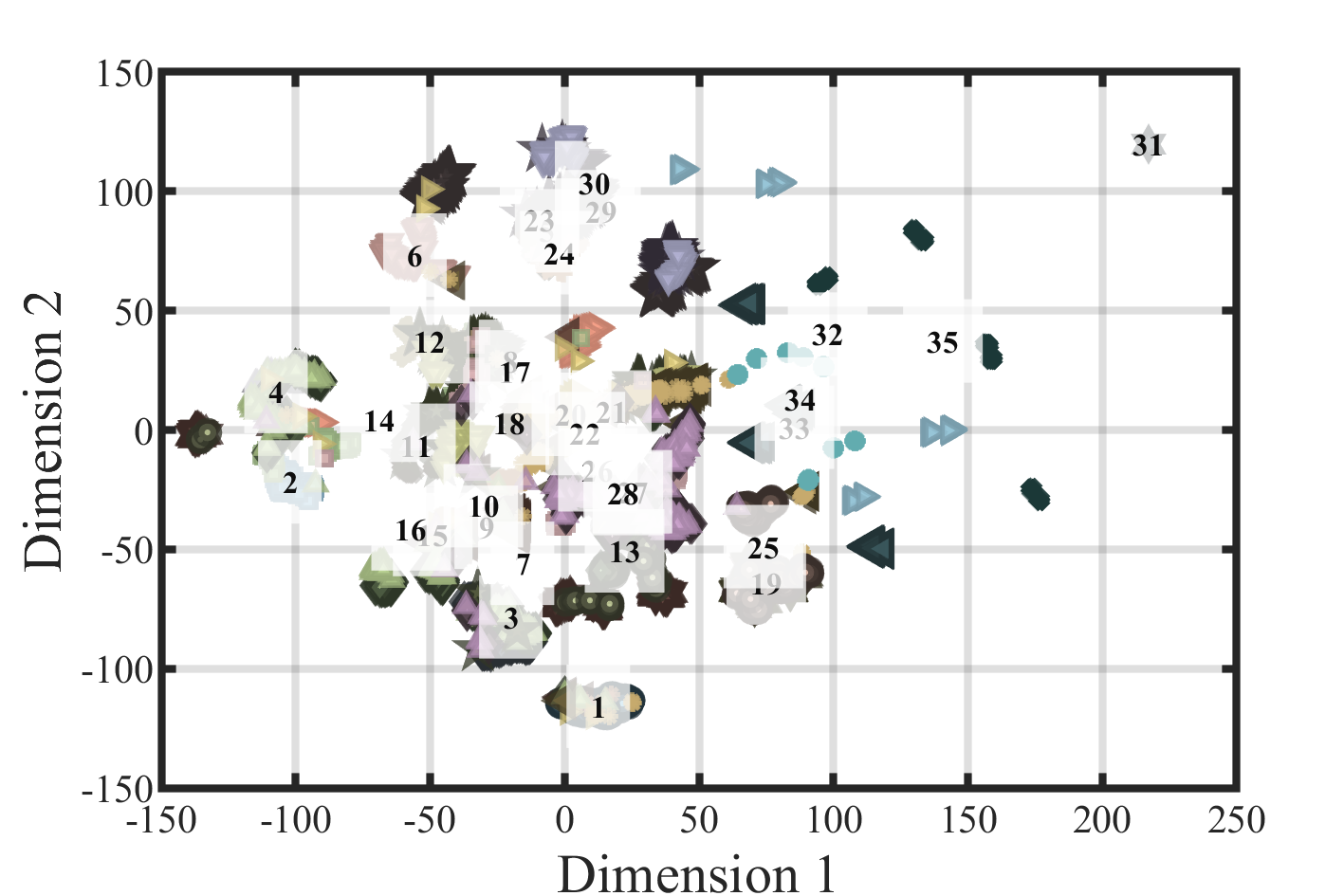}}
\subfloat[BiGRU]{\includegraphics[width=0.24\linewidth]{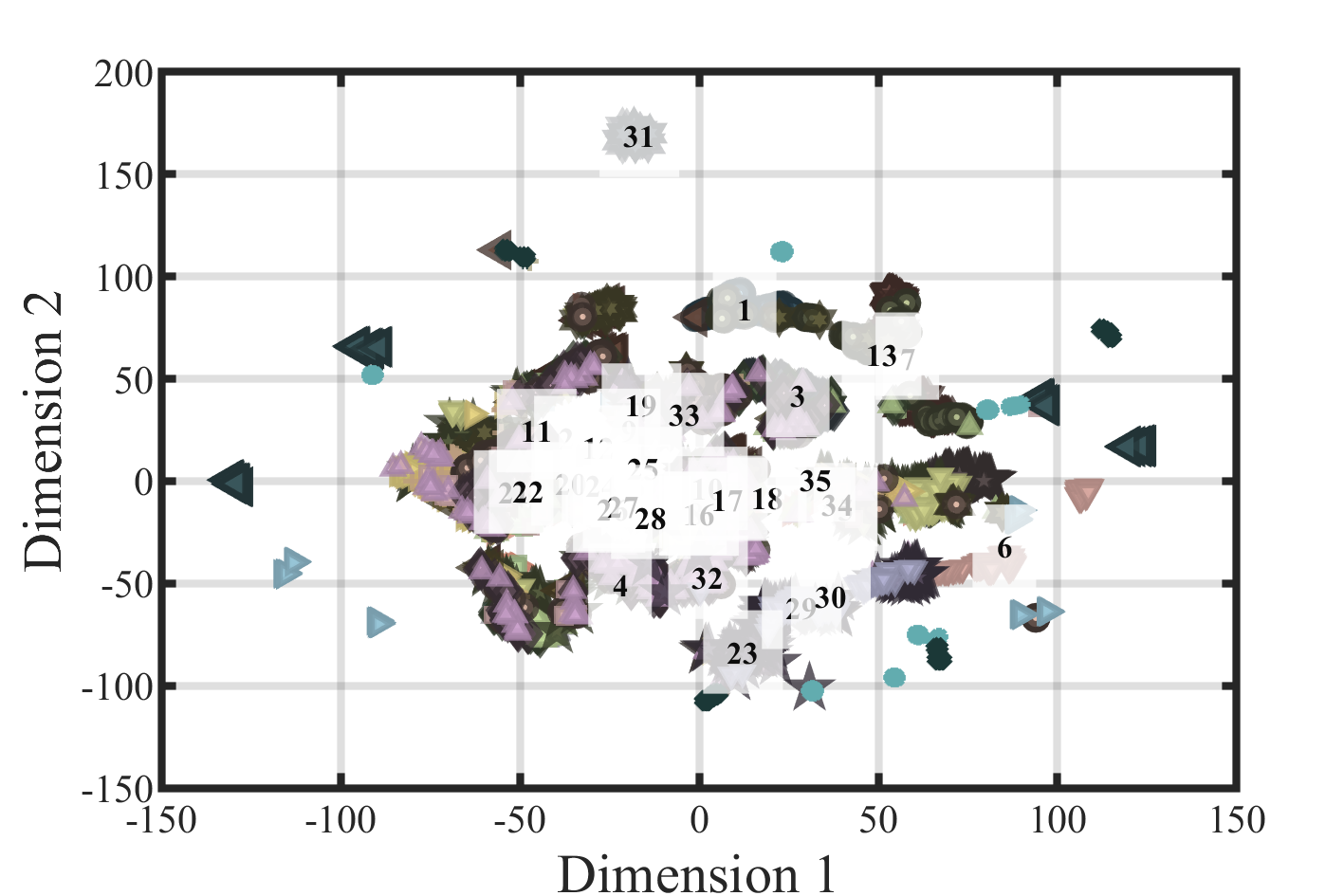}}
\subfloat[ICAMC]{\includegraphics[width=0.24\linewidth]{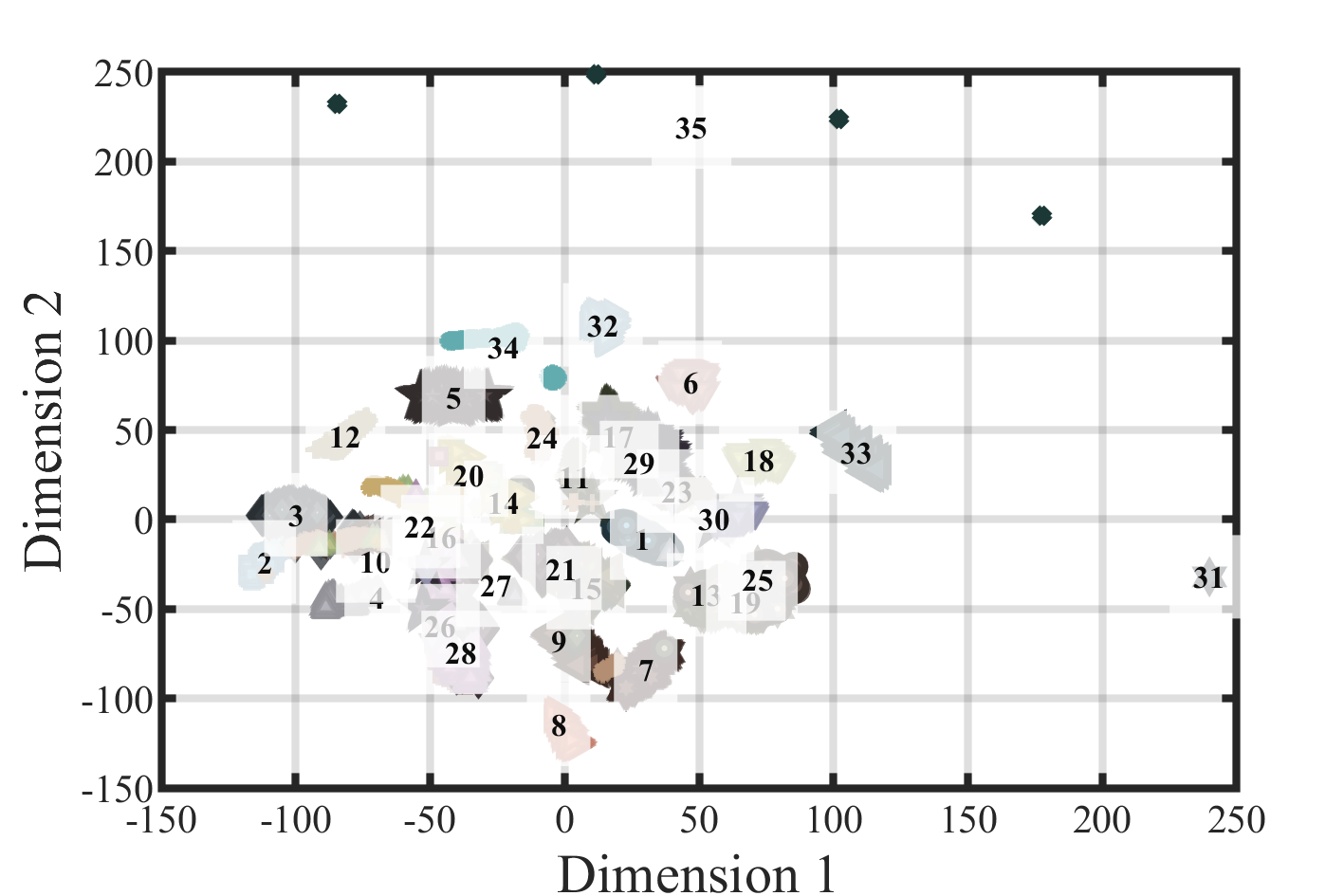}}
\subfloat[Proposed]{\includegraphics[width=0.24\linewidth]{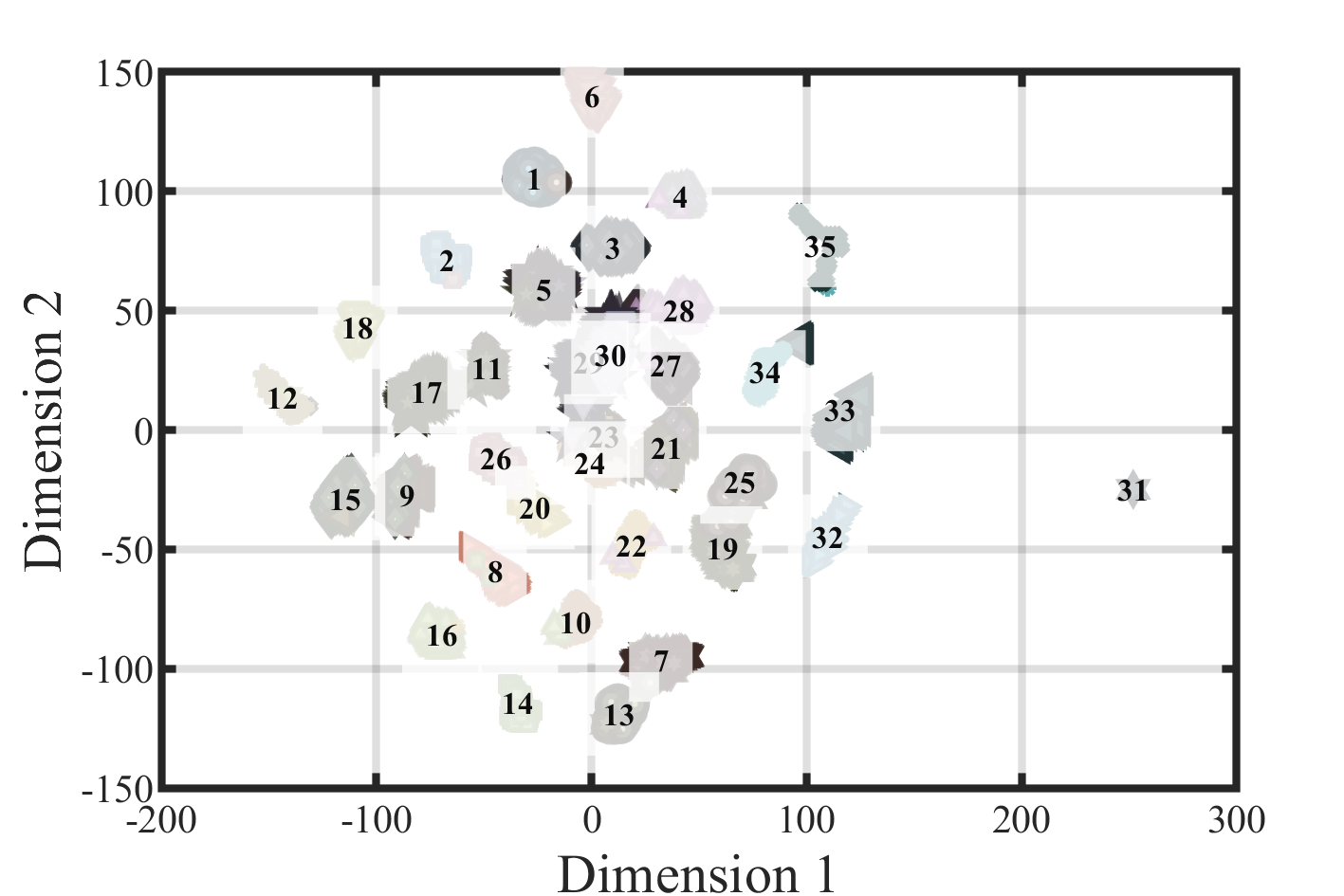}}
\\
\includegraphics[width=\linewidth]{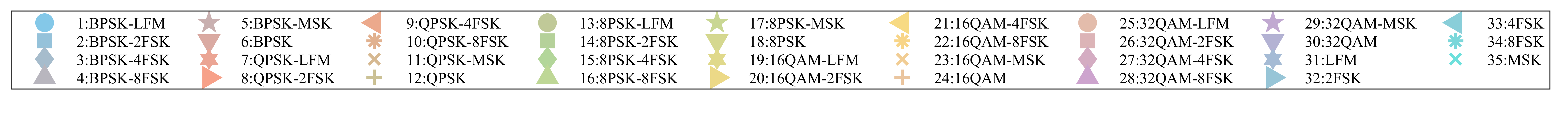}
\caption{Visualization of semantic features for known signals. The plot includes 25 composite modulation signals and 10 single modulation signals. }
\label{fig:visual}
\end{figure*}

This experiment evaluates the quality of the learned semantic representations produced by the proposed tangent space disentanglement framework.
Since the effectiveness of the compositional zero-shot pipeline in Section~\ref{sec:framework} hinges on whether disentangled features form well-structured, layer-wise separable clusters, we visualize the semantic space under a fully supervised setting to isolate feature quality from zero-shot generalization effects.

We adopt a closed-set classification scenario covering all 25 CM and 10 SM types defined in Table~\ref{tab:dataset}.
No additive noise is applied in this experiment to focus on the intrinsic separability of the learned representations.
The proposed method is compared against several representative deep learning baselines, containing
CNN1~\cite{o2016convolutional}, CNN2~\cite{tekbiyik2020robust}, MCNet~\cite{huynh2020mcnet}, ICAMC~\cite{hermawan2020cnn}, ResNet~\cite{liu2017deep}, DenseNet~\cite{liu2017deep},
GRU~\cite{hong2017automatic} and BiGRU~\cite{chang2022hierarchical}.
All models are trained under identical conditions with the same optimizer, learning rate schedule, and number of epochs to ensure a fair comparison.

To provide an intuitive assessment of the learned representations, we project the high-dimensional feature vectors onto a two-dimensional plane using t-distributed stochastic neighbor embedding (t-SNE)~\cite{van2008visualizing}.
Fig.~\ref{fig:visual} presents the resulting embeddings for the proposed method and three representative baselines.
Several qualitative observations emerge from the visualization.
First, the proposed method produces visibly tighter intra-class clusters: samples belonging to the same type are densely concentrated, indicating that the tangent space projection and disentanglement module extract consistent, modulation-specific features.
Second, the inter-class separation is substantially wider: distinct CM types form well-separated regions in the embedding space, with minimal overlap between clusters.
Third, a notable structural pattern is observed: CM types sharing the same inner-layer modulation (e.g., BPSK-LFM, BPSK-2FSK, BPSK-4FSK, BPSK-8FSK, BPSK-MSK) tend to form sub-clusters within a broader region, reflecting the semantic space follows the physics rule.

\subsection{Experiment 2: Ablation Study on the Proposed Framework}
\label{subsec:exp_ablation}

This experiment systematically validates three design choices underlying the proposed method: the disentangled semantic space versus traditional semantic space, the dual-branch disentanglement strategy versus alternative multi-task architectures, and the lightweight convolutional backbone versus deeper network architectures.
We adopt the one-CM-held-out protocol with BPSK-LFM as the unseen class, i.e., ${\mathcal Y}_{\rm CM}^{\rm u} = \{(\text{BPSK}, \text{LFM})\}$, and report results under a noise-free setting to isolate the contribution of each design component.

\subsubsection{Disentangled Semantic Space vs. Unified Semantic Space}
\label{subsubsec:zsl_framework}

\begin{figure}
    \centering
    \includegraphics[width=\linewidth]{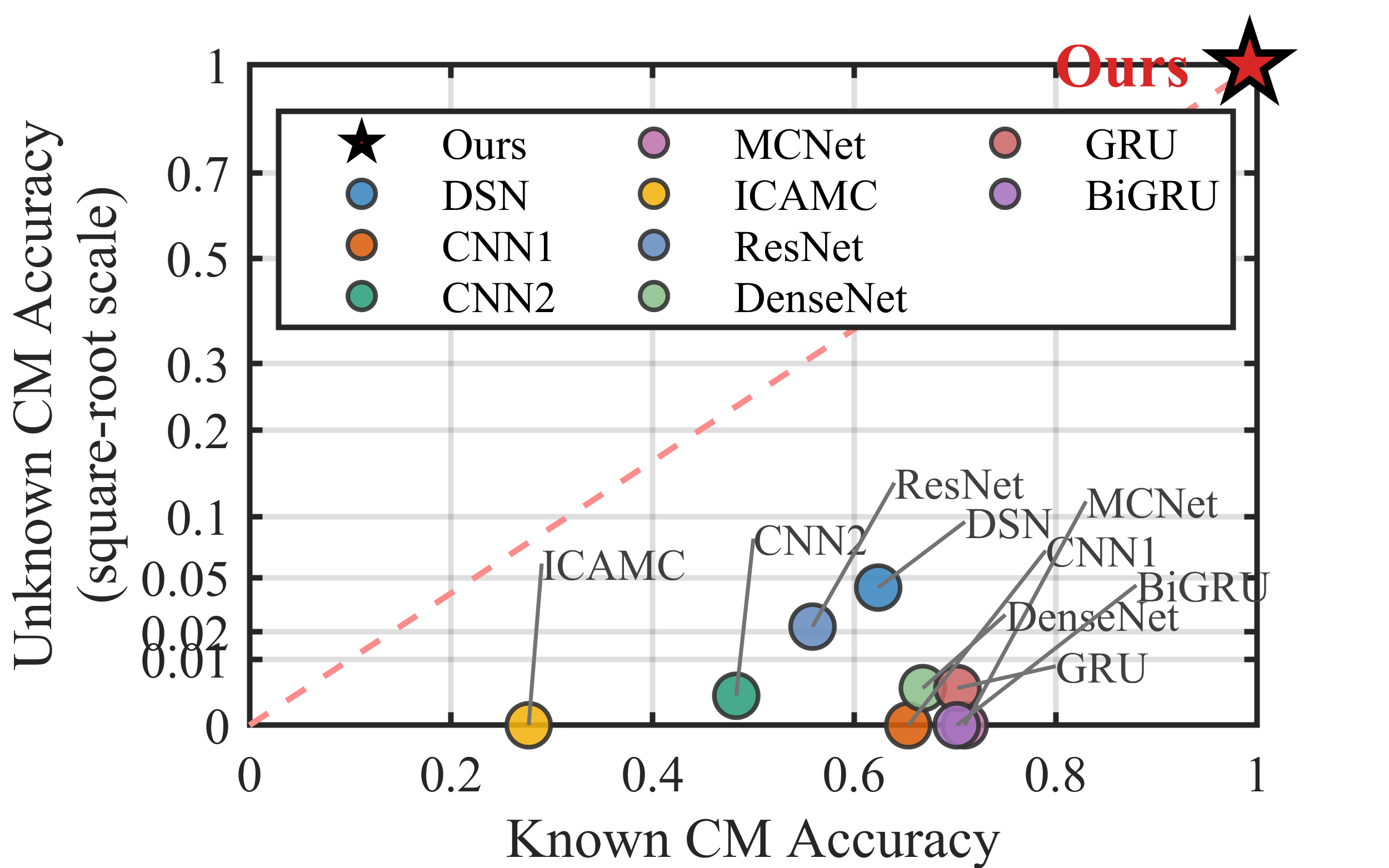}
    \caption{Scatter plot comparing model performance on known and unknown targets. The x-axis represents performance on known targets, while the y-axis represents performance on unknown targets.}
    \label{fig:compare_semantic}
\end{figure}

A central claim of this work is that the conventional ZSL paradigm, which maps all classes into a single unified semantic space \cite{Zhao2025ZeroShotAM, sr2cnn}, is ill-suited for CM recognition due to the waveform-level entanglement between modulation layers.
To evaluate this claim, we compare the proposed disentangled semantic space against a representative unified-prototype baseline built on a frozen large language model (LLM).
Specifically, the baseline employs OPT-6.7B~\cite{Zhao2025ZeroShotAM} as a text encoder to map natural-language descriptions of each CM type into semantic prototype vectors.

Fig.~\ref{fig:compare_semantic} presents the results.
The unified-prototype baseline exhibits a sharp degradation in known-class accuracy compared to its supervised counterpart, and nearly fails to recognize the unseen BPSK-LFM combination.
We attribute this phenomenon to a semantic gap. 
The LLM-generated prototypes encode high-level textual semantics that do not correspond to the low-level signal features distinguishing one CM type from another.
In contrast, the proposed disentangled semantic space constructs layer-wise prototypes directly in the learned feature domain, inheriting the geometric structure revealed in Experiment 1.
This eliminates the alignment gap and enables both accurate known-class recognition and effective zero-shot transfer.

\subsubsection{Dual-Branch Disentanglement vs. Alternative Multi-Task Architectures}
\label{subsubsec:disentangle_strategy}

\begin{table*}[t]
\centering
\caption{Experimental results of different methods on Known and Unknown datasets. Accuracy and F1-score (F1) are reported for each label. The best results are highlighted in \textbf{bold}.}
\label{tab:performance_comparison}
\small 
\resizebox{0.95\textwidth}{!}{
\begin{tabular}{ccccccccc} 
\toprule
\multirow{2}{*}{{Dataset}} & \multirow{2}{*}{{Method}}  & \multicolumn{2}{c}{{\thead{Inner-Layer}}} & \multicolumn{2}{c}{{\thead{Outer-Layer}}} & \multicolumn{2}{c}{{\thead{CM}}} \\ 
 \cmidrule(lr){3-4}  \cmidrule(lr){5-6} \cmidrule(lr){7-8}
  &  & Accuracy & F1-score & Accuracy & F1-score & Accuracy & F1-score \\ 
\midrule
\multirow{3}{*}{Known} & \thead{Factorized \\ Classifier} &  0.9913 $\pm$ 0.0056 & 0.9916 $\pm$ 0.0054 & 0.9965 $\pm$ 0.0029 & 0.9963 $\pm$ 0.0030 & 0.9878 $\pm$ 0.0082 & 0.9873 $\pm$ 0.0085 \\
 & \thead{Dual-head \\ Classifier} & 0.9909 $\pm$ 0.0037 & 0.9912 $\pm$ 0.0035 & 0.9974 $\pm$ 0.0017 & 0.9973 $\pm$ 0.0018 & 0.9885 $\pm$ 0.0037 & 0.9880 $\pm$ 0.0037 \\
 & \textbf{Ours} & 0.9933$\pm$0.0026 & 0.9935 $\pm$ 0.0025 &  0.9991 $\pm$ 0.0004 & 0.9991 $\pm$ 0.0004 & 0.9924 $\pm$ 0.0024 & 0.9922 $\pm$ 0.0024 \\ 
\midrule
\multirow{3}{*}{Unknown} 
& \thead{Factorized \\ Classifier }
& 0.1507 $\pm$ 0.6473 & 0.2077 $\pm$ 0.8916 & 1.0000 $\pm$ 0.0000 & 1.0000 $\pm$ 0.0000 & 0.1507 $\pm$ 0.6473 & 0.2077 $\pm$ 0.8916 \\
 & \thead{Dual-head \\ Classifier}  
 & 0.0149 $\pm$ 0.0494 & 0.0289 $\pm$ 0.0950 & 1.0000 $\pm$ 0.0000 & 1.0000 $\pm$ 0.0000 & 0.0149 $\pm$ 0.0494 &  0.0289 $\pm$ 0.0950 \\
 & \textbf{Ours} & \textbf{1.0000$\pm$1.0000} & \textbf{1.0000$\pm$1.0000} & \textbf{1.0000$\pm$1.0000} & \textbf{1.0000$\pm$1.0000} & \textbf{1.0000$\pm$1.0000} & \textbf{1.0000$\pm$1.0000} \\ 
\bottomrule
\end{tabular}
}
\end{table*}

Having established the advantage of the disentangled semantic space, we now investigate how the disentanglement should be implemented.
Since the composite label $(y_{\rm in}, y_{\rm out})$ involves two classification tasks, this can naturally be formulated as a multi-task learning problem.
We compare the proposed dual-branch architecture against two mainstream alternatives:
(1) Factorized classifier~\cite{yang2017deep}: a single shared backbone feeds into one unified classifier head that outputs a joint logit matrix, implicitly modeling inter-task dependencies through weight factorization.
(2) Dual-head classifier~\cite{kendall2018multi}: a single shared backbone feeds into two independent classifier heads, each producing logits for one modulation layer.
The proposed method differs from both by employing two independent feature branches, each dedicated to one modulation layer, followed by layer-specific classifiers.
This design is motivated by the physical observation that the inner-layer (amplitude-phase modulation) and outer-layer (frequency modulation) occupy fundamentally different signal characteristics, as established in Section~\ref{sec:system_model}.

Table~\ref{tab:performance_comparison} reports the accuracy and macro F1-score for both known and unseen modulation types, evaluated at the inner-layer, outer-layer, and composite CM levels.
Both the factorized and dual-head classifiers achieve competitive performance on known classes, confirming that the multi-task formulation is well-posed.
However, they exhibit substantial degradation on the unseen BPSK-LFM type, particularly at the outer-layer level.
We attribute this to the shared backbone: when a single feature extractor must simultaneously represent both modulation layers, the learned features form a compromise that captures the dominant patterns of known combinations but fails to generalize to the specific inner--outer pairing absent from training.
The proposed dual-branch architecture avoids this bottleneck by learning specialized representations for each layer, yielding superior zero-shot performance while maintaining known-class accuracy.

\subsubsection{Backbone Architecture Comparison}
\label{subsubsec:backbone}

\begin{figure}[!t]
\centering
\subfloat[]{\includegraphics[width=\columnwidth]{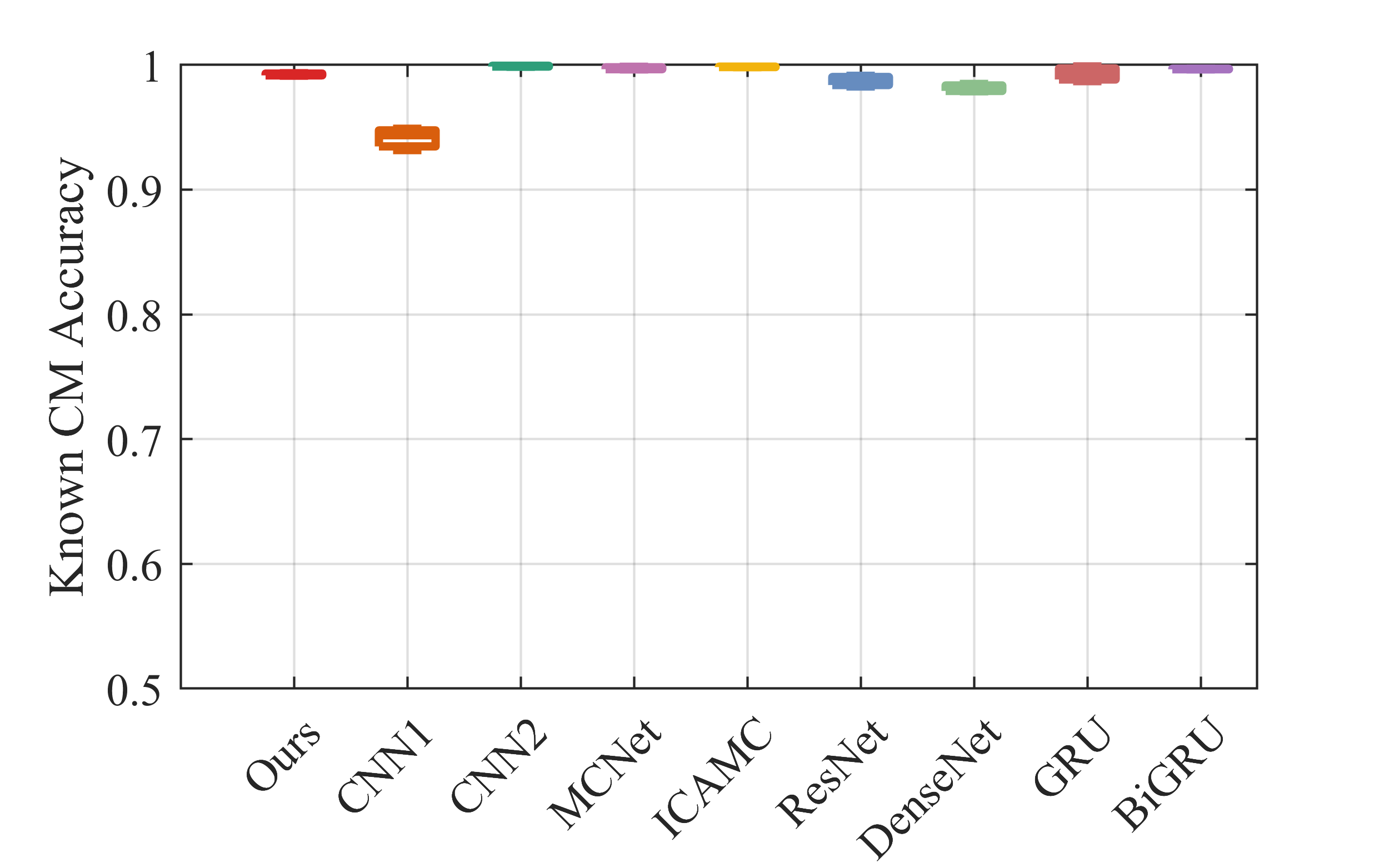}}
\\
\subfloat[]{\includegraphics[width=\columnwidth]{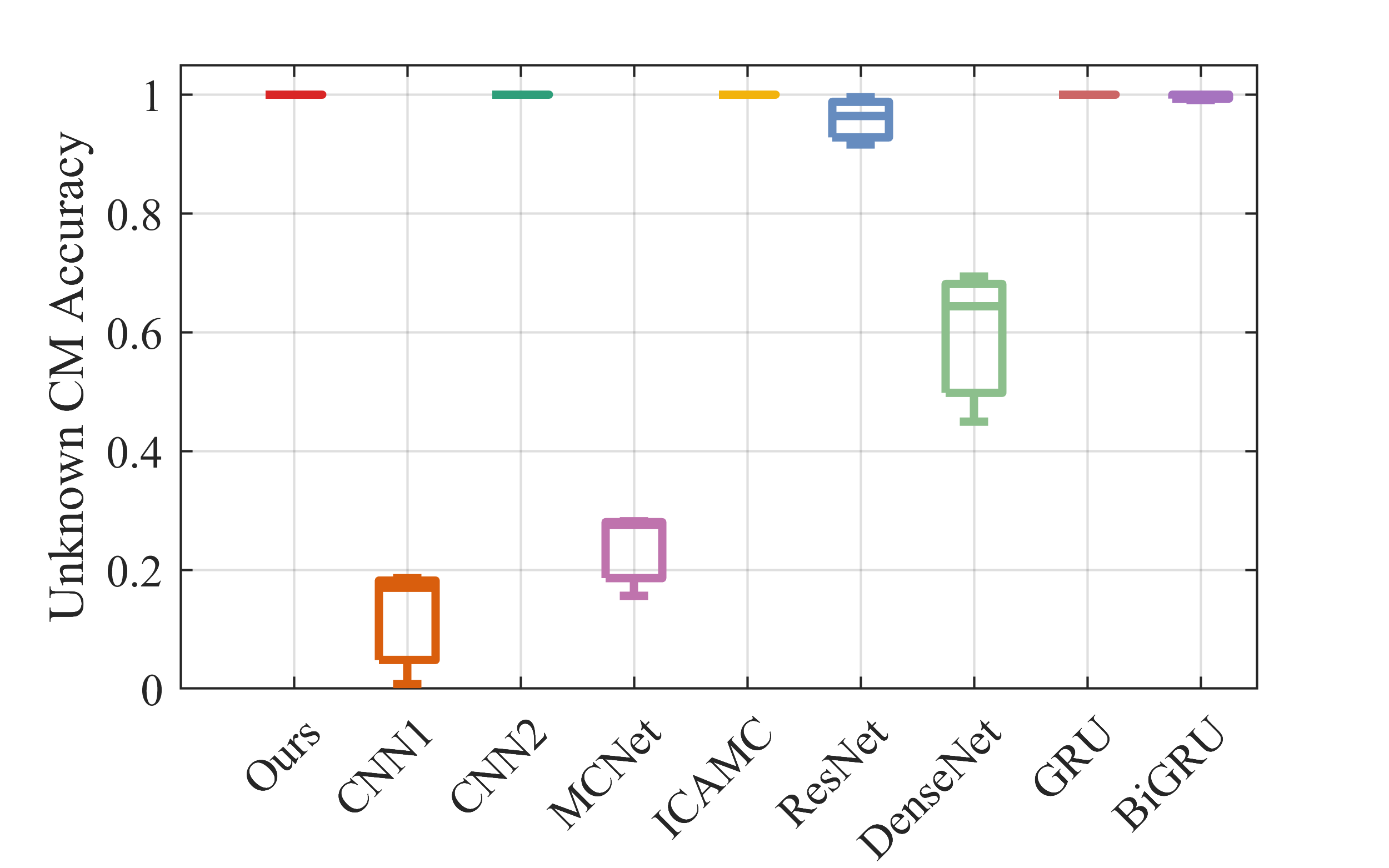}}
\\
\caption{Comparison of different neural network architectures. (a) Performance for known accuracy. (b) Performance for unknown accuracy.}
\label{fig:known_recog}
\end{figure}

\begin{table}[!t]
\centering
\caption{Comparison of model complexity. Lower values indicate lighter models.}
\label{tab:complexity}
\begin{tabular}{lcc}
\toprule
Model & Params & FLOPs \\
\midrule
{Ours}           & {167K  }          & {77.4M} \\
CNN1           & 465K            & 12.9M \\
CNN2           & 1.03M           & 163.5M \\
MCNet          & 1.01M           & 22.5M \\
ICAMC          & 1.51M           & 118.8M \\
ResNet         & 1.64M           & 1.38G \\
DenseNet       & 2.01M           & 1.76G \\
GRU            & 399K            & 155.6M \\
BiGRU          & 301K            & 105.3M \\
\bottomrule
\end{tabular}
\end{table}

The final ablation dimension concerns the choice of neural network backbone.
The proposed method employs a lightweight convolutional architecture optimized for the tangent-space representation.
To justify this choice, we compare against several widely-used backbones in the AMR literature, as presented in Fig.~\ref{fig:known_recog}.
Table~\ref{tab:complexity} reports the zero-shot recognition accuracy alongside two complexity metrics, i.e. parameter count (Params) and floating-point operations (FLOPs).
While several deeper architectures achieve comparable known-class accuracy, their zero-shot performance is markedly lower, suggesting that over-parameterized backbones tend to overfit to the specific training combinations at the expense of generalization.

\subsection{Experiment 3: Ablation Study of TSDN Component}

The proposed framework comprises three architectural modules, namely the backbone network, the logarithmic mapping, and the spatial transformer network STN, as well as three loss components $\mathcal{L}_1$, $\mathcal{L}_2$, and $\mathcal{L}_3$.
This experiment evaluates the contribution of each component through a systematic ablation study.
We adopt the one-CM-held-out protocol with BPSK-LFM as the unseen class at SNR = 8 dB.
All results are averaged over three random seeds; both accuracy and macro F1-score are reported for known and unknown classes.
The complete ablation results are presented in Table~\ref{tab:ablation}.

\begin{table*}[!t]
    \centering
    \caption{Ablation study of TSDN component. BPSK-LFM is held out as the unseen class at SNR = 8 dB. All metrics are reported as {mean $\pm$ std} across three random seeds with a 95\% confidence interval.}
    \label{tab:ablation}
    
    \begin{tabular}{lcccccccccc}
        \toprule
        \multirow{2}{*}{Variant} & \multicolumn{3}{c}{Modules} & \multicolumn{3}{c}{Loss} & \multicolumn{2}{c}{Known} & \multicolumn{2}{c}{Unknown} \\
        \cmidrule(lr){2-4} \cmidrule(lr){5-7}  \cmidrule(lr){8-9} \cmidrule(lr){10-11}
        & Backbone & Log & STN & ${\mathcal L}_1$ & ${\mathcal L}_2$& ${\mathcal L}_3$ &  Accuracy & Macro-F1   & Accuracy & F1-score  \\
        \midrule
        w/o Log  &  \checkmark & -- &  \checkmark 
        &  \checkmark &  \checkmark &  \checkmark 
        & ${0.9562\pm0.0518}$  & ${0.9543 \pm 0.0530}$
        & $0.8351\pm0.3135$ & $0.9066\pm0.1933$ \\
        w/o STN  &  \checkmark  &  \checkmark & -- 
        &  \checkmark &  \checkmark &  \checkmark 
        & $0.9293\pm0.0062$ & $0.9266\pm0.0067$ 
        & $0.9184\pm0.0582$ & $0.9574\pm0.0318$\\
        w/o Log, STN &  \checkmark & -- & --
        &  \checkmark &  \checkmark &  \checkmark
        & $0.9318\pm0.0174$ & $0.9291\pm0.0184$
        & $0.8236\pm0.0993$ & $0.9029\pm0.0605$\\
        w/o ${\mathcal L}_3$ &  \checkmark &  \checkmark &  \checkmark
        &  \checkmark & \checkmark &  -- 
        & $0.9472\pm0.0435$ & $0.9451\pm0.0447$ 
        & $0.7790\pm0.3859$ & $0.8703\pm0.2336$\\
        w/o ${\mathcal L}_2$, ${\mathcal L}_3$ &  \checkmark &  \checkmark &  \checkmark
        &  \checkmark & -- &  -- 
        & $0.9468\pm0.0447$ & $0.9448\pm0.0460$ 
        & $0.7964\pm0.3314$ & $0.8827\pm0.1992$ \\
        \textbf{Ours} &  \checkmark &  \checkmark &  \checkmark
        &  \checkmark & \checkmark &  \checkmark
        & $0.9354\pm0.0035$ & $0.9335 \pm 0.0038$ 
        & $\bf{0.9352 \pm0.0310} $ & $\bf{0.9665 \pm 0.0166}$\\
        \bottomrule
    \end{tabular}
\end{table*}

Among the architectural modules, the logarithmic mapping is the most critical: removing it drops the unknown accuracy from 93.52\% to 83.51\% with a standard deviation of 31.35\%, validating the theoretical role of tangent-space projection in Section~\ref{sec:framework}.
Removing the STN causes a moderate drop at 91.84\%, and jointly removing both w/o Log and STN yields accuracy comparable to removing Log alone at 82.36\%, confirming that Log is the dominant factor while the STN provides complementary disentanglement refinement as analyzed in Section~\ref{sec:framework}.
Among the loss components, ${\mathcal L}_3$ is crucial: its removal produces the largest single-component degradation, while further removing ${\mathcal L}_2$ yields no significant additional drop, indicating that ${\mathcal L}_3$ carries the primary regularization effect on the semantic space.
Across all ablations, the full model achieves the highest unknown accuracy of 93.52\% with the lowest variance at 3.10\%, confirming that every component contributes synergistically to compositional generalization.

\subsection{Experiment 4: Generalization across Modulation Type}

\subsubsection{Discrimination under High-Order Modulation Schemes}
\label{subsubsec:exp_highorder}

\begin{figure}[!t]
    \centering
    \subfloat[]{\includegraphics[width=0.9\columnwidth]{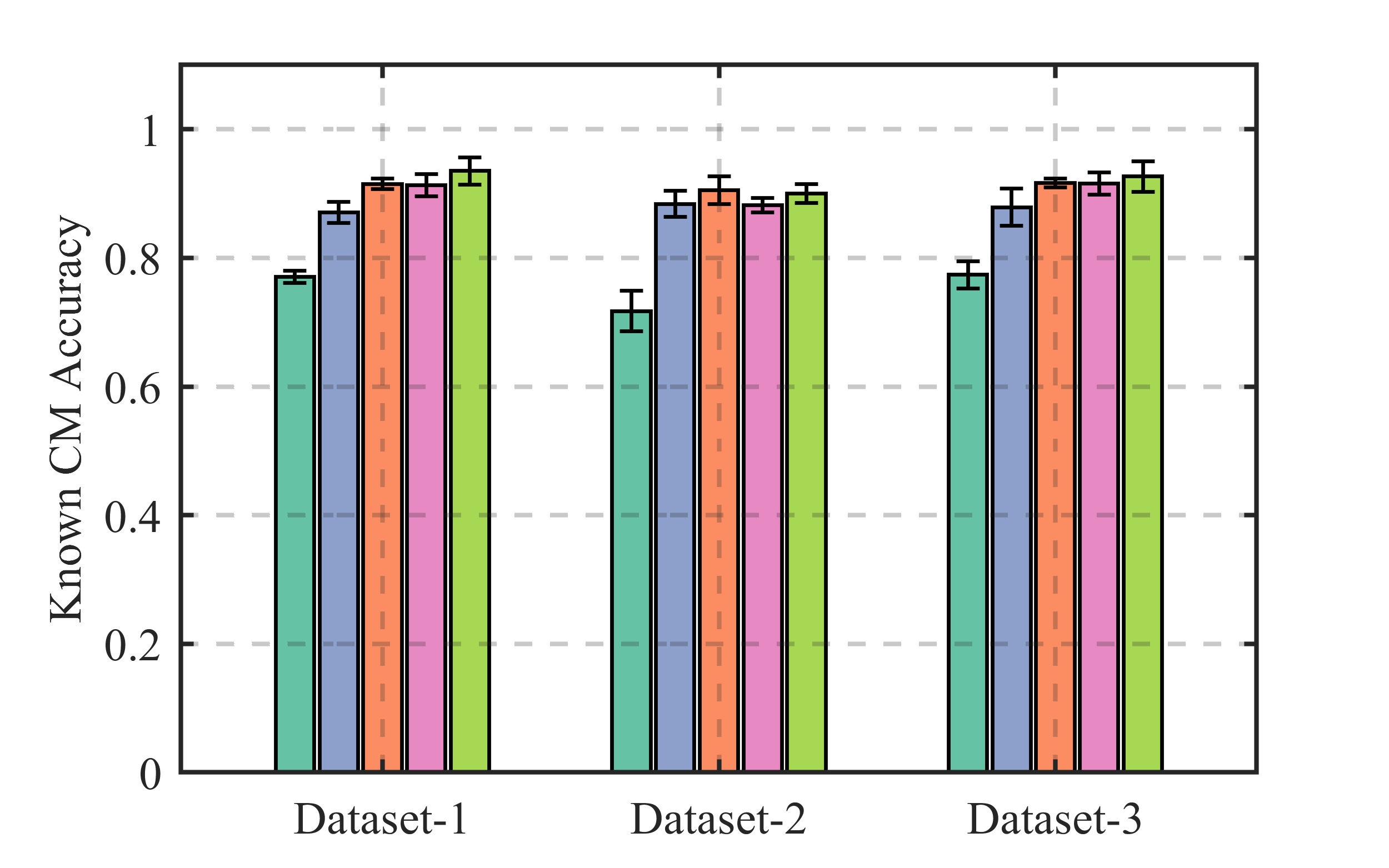}}
    \\
    \subfloat[]{\includegraphics[width=0.9\columnwidth]{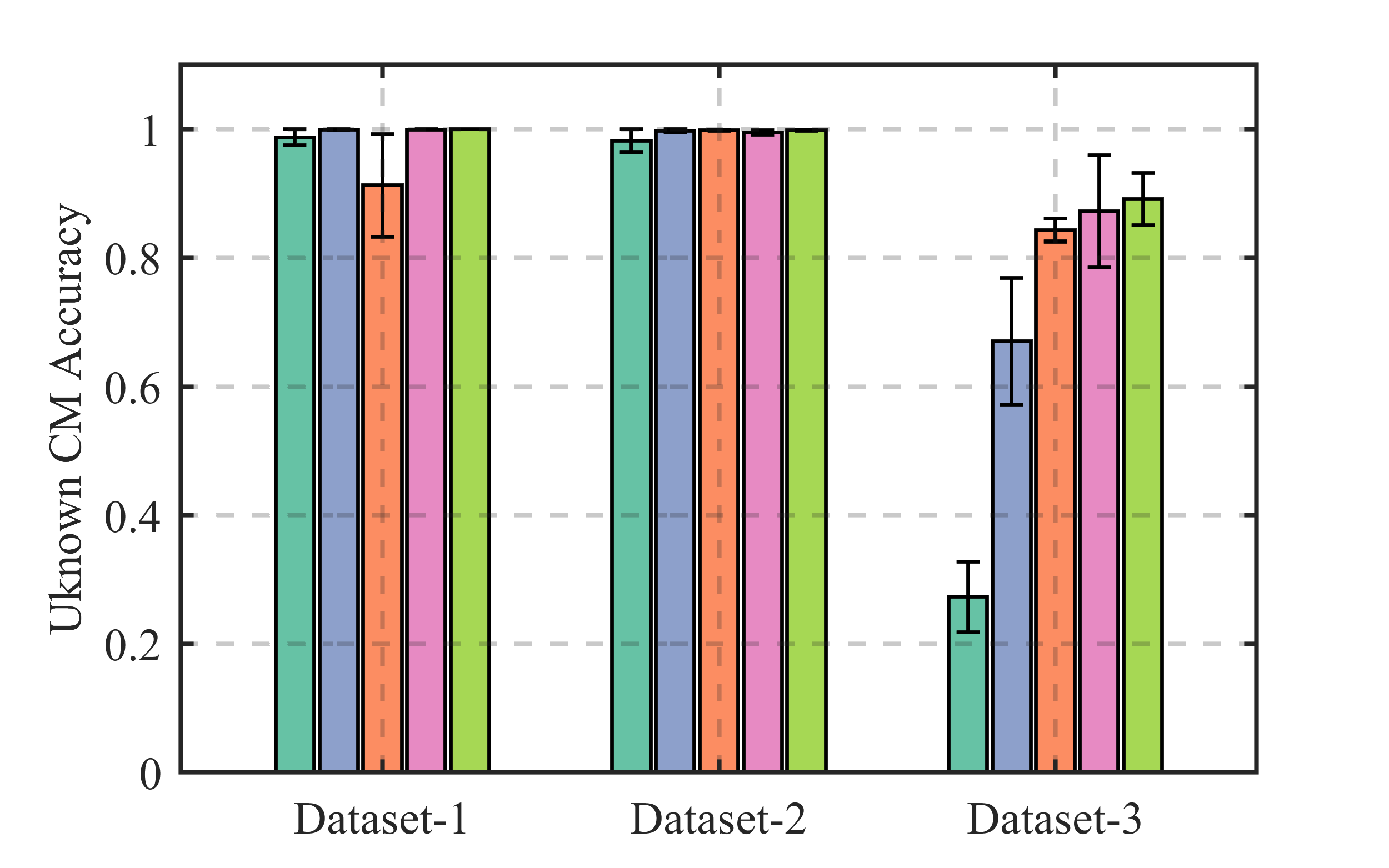}}
    \\
    \includegraphics[width=0.9\linewidth]{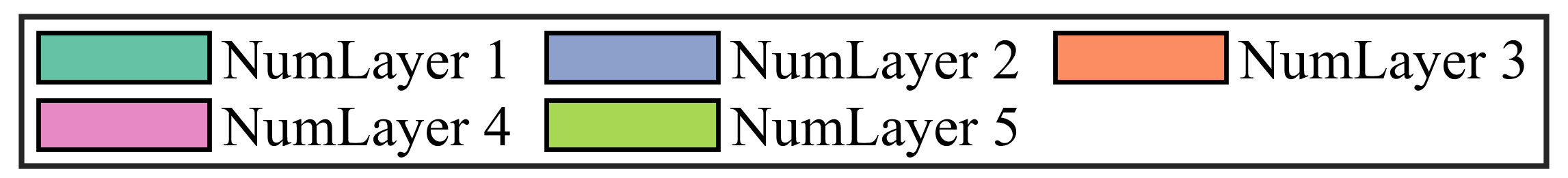}
    \caption{Model performance across high-order modulations with varying unknown items. Dataset 1: BPSK-LFM, Dataset 2: 8PSK-2FSK, and Dataset 3: 128QAM-8FSK.}
    \label{fig:high-order}
\end{figure}

The preceding experiments were conducted on a moderate-scale dataset ($5 \times 5 = 25$ CM types).
This experiment extends the evaluation to a more challenging scenario by introducing high-order QAM schemes and Gaussian-filtered frequency shift keying (GFSK), which produce extremely dense constellation diagrams and spectrally overlapping features that are difficult to distinguish from conventional FSK variants.

\begin{table*}[!t]
\centering
\caption{Extended CM dataset with high-order modulation schemes. 
Inner-layer set \({\mathcal Y}_{\rm in} =\) \{BPSK, QPSK, 8PSK, 16QAM, 32QAM, 64QAM, 128QAM\}, outer-layer set \({\mathcal Y}_{\rm out}\) = \{LFM, 2FSK, 4FSK, 8FSK, MSK, GFSK\}, yielding $7 \times 6 = 42$ CM types.}
\label{tab:dataset_highorder}
\small
\setlength{\tabcolsep}{4.5pt}
\renewcommand{\arraystretch}{1.15}
\begin{tabular}{c|ccccccc}
\toprule
\textbf{Inner \textbackslash Outer} & \textbf{LFM} & \textbf{2FSK} & \textbf{4FSK} & \textbf{8FSK} & \textbf{MSK} & \textbf{GFSK} & \textbf{NONE}\\
\midrule
BPSK   & {\color{black} Dataset 1}     & BPSK-2FSK  & BPSK-4FSK  & BPSK-8FSK  & BPSK-MSK  & BPSK-GFSK & BPSK\\
QPSK   & QPSK-LFM  & QPSK-2FSK  & QPSK-4FSK  & QPSK-8FSK  & QPSK-MSK  & QPSK-GFSK & QPSK\\
8PSK   & 8PSK-LFM  & {\color{black} Dataset 2}    & 8PSK-4FSK  & 8PSK-8FSK  & 8PSK-MSK  & 8PSK-GFSK & 8PSK\\
16QAM  & 16QAM-LFM & 16QAM-2FSK & 16QAM-4FSK & 16QAM-8FSK & 16QAM-MSK & 16QAM-GFSK  & 16QAM\\
32QAM  & 32QAM-LFM & 32QAM-2FSK & 32QAM-4FSK & 32QAM-8FSK & 32QAM-MSK & 32QAM-GFSK  & 32QAM\\
64QAM  & 64QAM-LFM & 64QAM-2FSK & 64QAM-4FSK & 64QAM-8FSK & 64QAM-MSK & 64QAM-GFSK  & 64QAM\\
128QAM & 128QAM-LFM& 128QAM-2FSK& 128QAM-4FSK& {\color{black} Dataset 3}  & 128QAM-MSK& 128QAM-GFSK & 128QAM\\
NONE & {LFM} & {2FSK} & {4FSK} & {8FSK} & {MSK} & {GFSK} &  \textbackslash \\
\bottomrule
\end{tabular}
\vspace{2pt}

{\small Dataset~1 (low complexity), Dataset~2 (medium complexity), Dataset~3 (high complexity).}
\end{table*}

Table~\ref{tab:dataset_highorder} presents the expanded design space, where the inner-layer set grows from 5 to 7 types by adding 64QAM and 128QAM, and the outer-layer set grows from 5 to 6 types by adding GFSK, yielding 42 CM types in total.
Three evaluation datasets are constructed by holding out a progressively more challenging CM type as unseen:
\begin{itemize}
    \item \textbf{Dataset 1} ($\mathcal{D}_1$): BPSK-LFM held out. Both constituent modulations are low-order and well-separated in the feature space, representing the easiest zero-shot scenario.
    \item \textbf{Dataset 2} ($\mathcal{D}_2$): 8PSK-2FSK held out. The 8PSK is moderately complex, testing whether the disentangled semantic space can generalize across increasing constellation density.
    \item \textbf{Dataset 3} ($\mathcal{D}_3$): 128QAM-8FSK held out. Both the ultra-high-order 128QAM and 8FSK pose significant discrimination challenges, representing the hardest zero-shot scenario.
\end{itemize}

Fig.~\ref{fig:high-order} reports the unknown-class recognition accuracy as a function of the number of convolutional layers in the backbone, evaluated on each dataset at SNR = 8 dB.
A consistent trend is observed across all three datasets: a single convolutional layer is insufficient when the dataset contains high-order or spectrally dense modulation types, yielding a pronounced accuracy drop particularly for $\mathcal{D}_3$.
Increasing the network depth progressively improves performance, with the accuracy curve saturating at three to four layers.
This suggests a hierarchical feature learning process: shallow layers capture low-level spectral patterns shared across modulation types, while deeper layers extract the fine-grained distinctions necessary to discriminate high-order constellations and closely spaced frequency schemes.
The full model with an appropriately deep backbone successfully maintains high zero-shot accuracy even on $\mathcal{D}_3$, confirming the scalability of the proposed framework to challenging modulation sets.

\subsubsection{Performance Analysis across Held-out Modulation Types}

\begin{figure}[!t]
    \centering
    \includegraphics[width=0.9\linewidth]{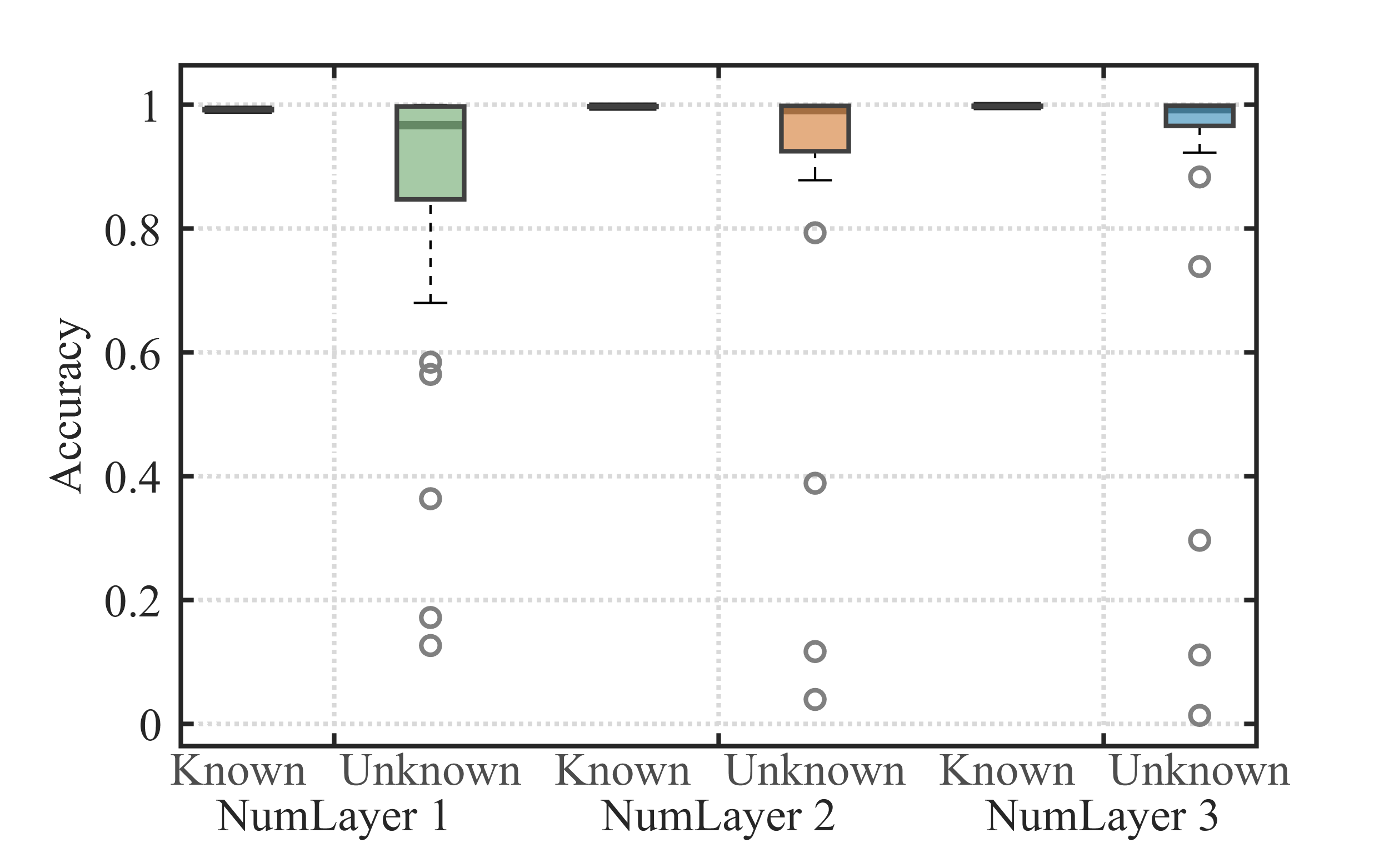}
    \caption{Distribution of known-class and unknown-class accuracy across all held-out CM types, evaluated at different backbone depths (1-5 convolutional layers). Each box aggregates results over all possible unseen CM combinations.}
    \label{fig:imbalance_1}
\end{figure}

\begin{figure*}[!t]
    \centering
    \subfloat[Branch 1]{\includegraphics[height=2.2in]{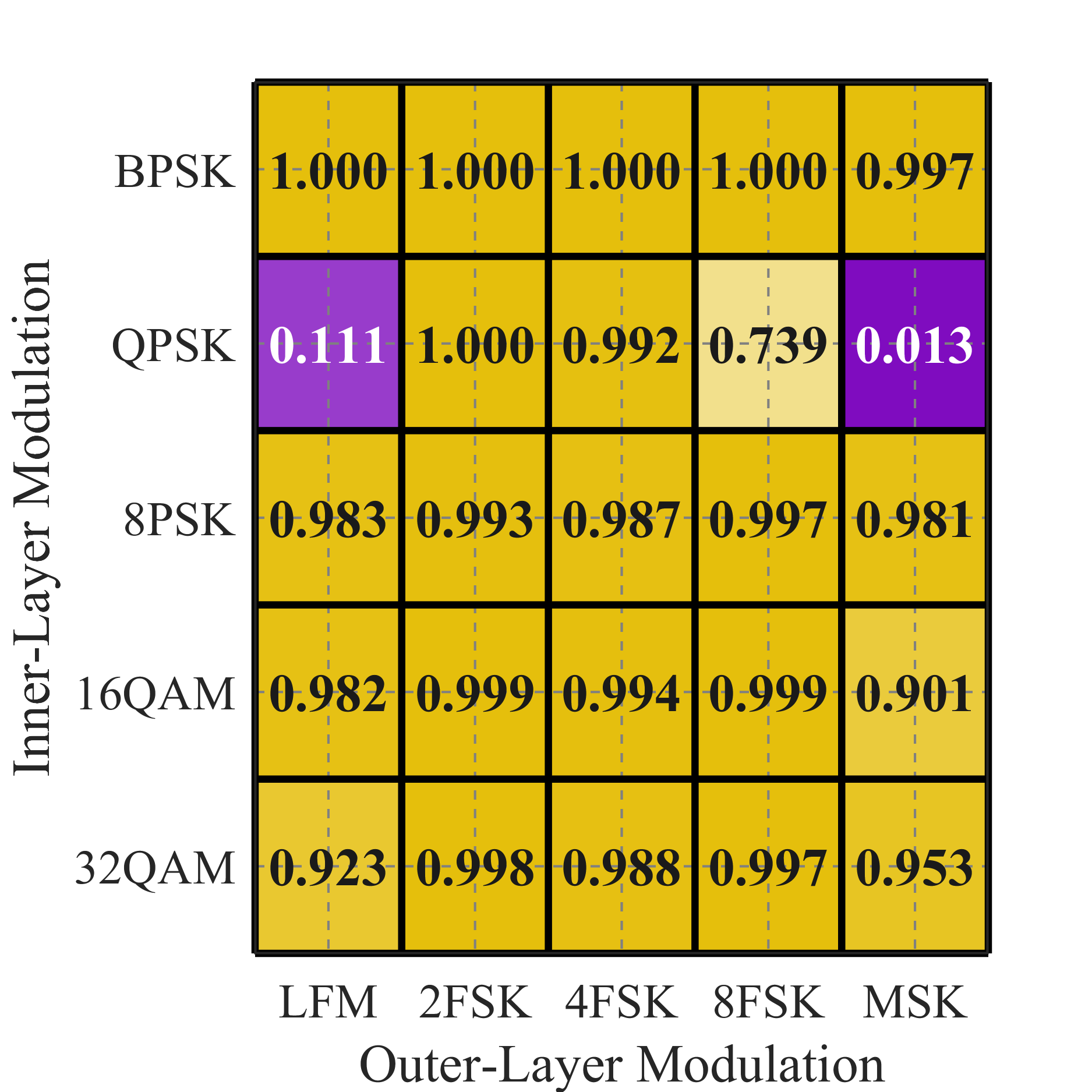}}
    \subfloat[Branch 2]{\includegraphics[height=2.2in]{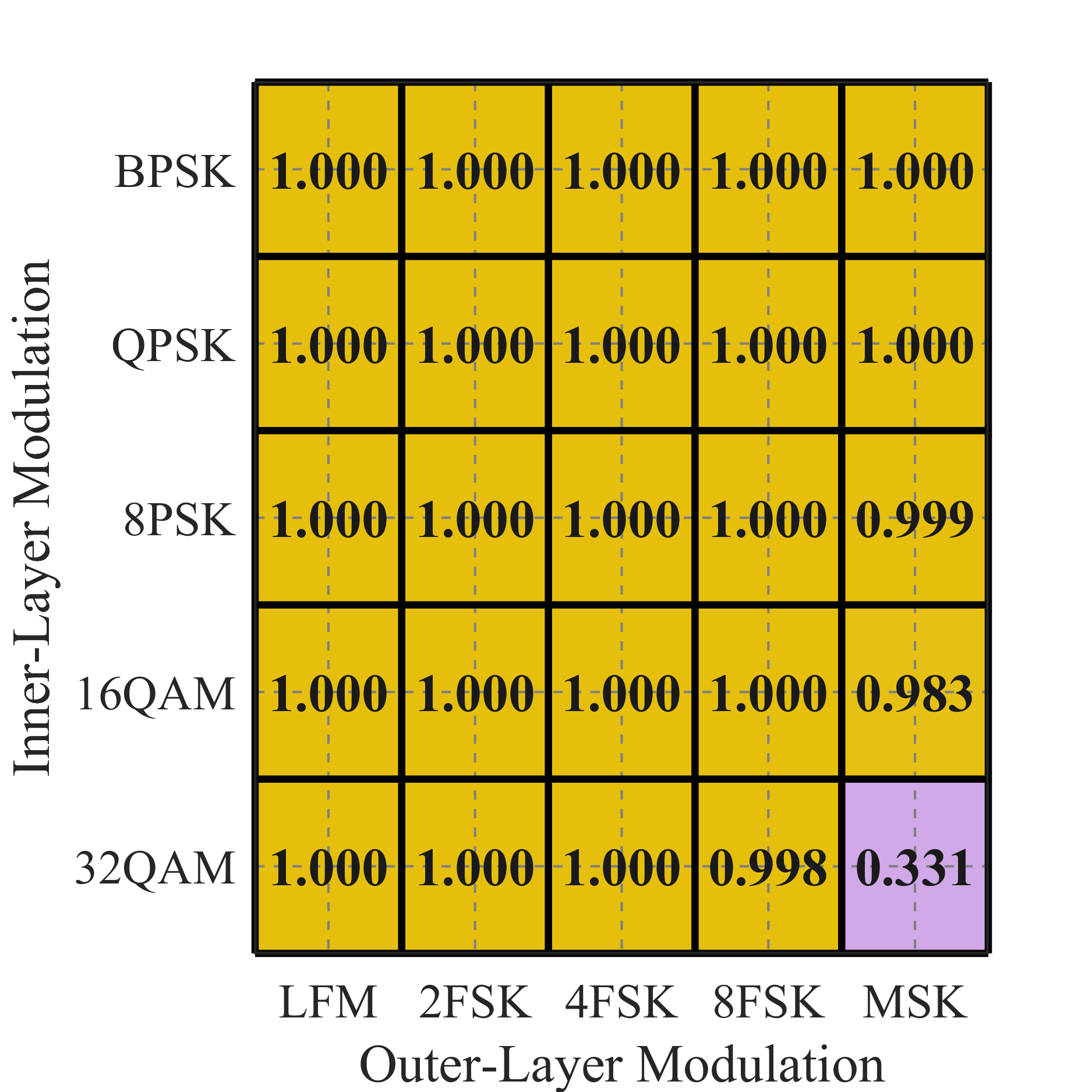}}
    \subfloat[Composite]{\includegraphics[height=2.2in]{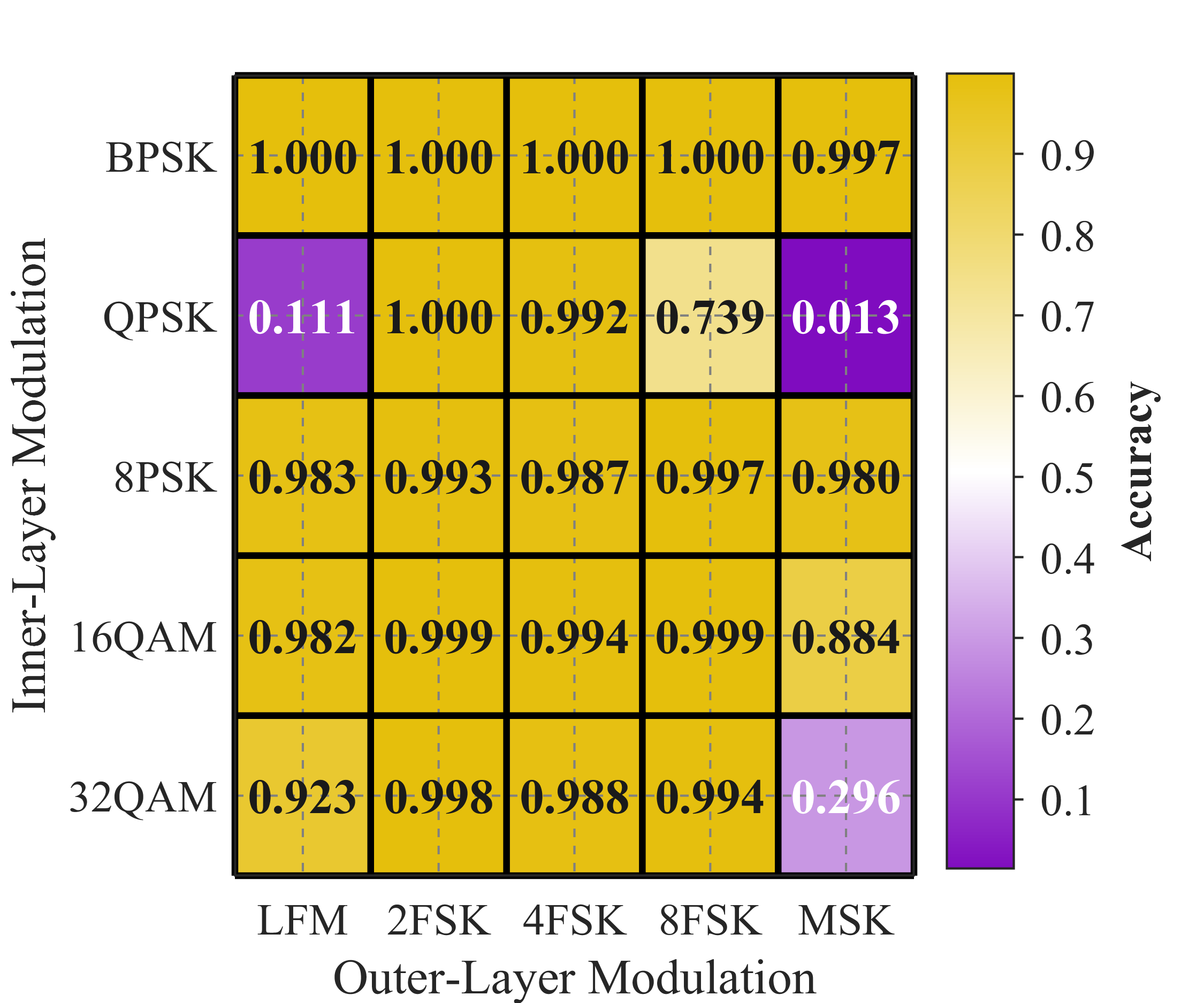}}
    \caption{Per-branch accuracy heatmap at backbone depth = 3. Each row corresponds to one CM type held out as unseen. Columns show the accuracy of Branch~1 (inner-layer), Branch~2 (outer-layer), and Composite (CM) prediction. 
    Cells with notably low accuracy indicate challenging disentanglement scenarios.}
    \label{fig:imbalance_2}
\end{figure*}

The zero-shot accuracy is expected to vary depending on which CM type is held out as unseen, since different combinations impose different levels of difficulty on the disentanglement and classification modules.
This sub-experiment conducts a systematic per-CM-type analysis to identify the failure modes and their underlying causes.

Fig.~\ref{fig:imbalance_1} presents box plots of the known-class and unknown-class accuracy distributions, where each box aggregates results over all possible held-out CM types for a given backbone depth from 1 to 5.
Two observations stand out.
First, the known-class accuracy remains consistently high across all depths, confirming that the framework does not sacrifice in-distribution performance for zero-shot capability.
Second, the unknown-class accuracy exhibits considerable spread at shallow depths, with the interquartile range narrowing substantially as the backbone deepens.
This indicates that shallow networks produce feature representations that are sensitive to which specific combination is held out, some combinations are easy to generalize to while others are not, whereas deeper networks learn more robust and transferable semantic features.
At four or more layers, the distribution stabilizes, suggesting that the network has extracted sufficient hierarchical features for reliable compositional generalization.

To pinpoint the source of recognition failures, Fig.~\ref{fig:imbalance_2} presents a heatmap of the average accuracy for each layer-wise branch and their fusion, evaluated at a backbone depth of three layers.
Each row corresponds to one CM type being held out as unseen.
The heatmap reveals three distinct failure modes, where the unknown accuracy drops markedly below the dataset average:
For {QPSK-LFM} and {QPSK-MSK}, the accuracy drop is concentrated in the inner-layer branch (Branch~1), indicating that the inner-layer disentanglement fails specifically when QPSK is paired with these outer-layer schemes.
For {32QAM-MSK}, the accuracy drop is concentrated in the outer-layer branch (Branch~2), indicating that the outer-layer disentanglement fails for the MSK component in this combination.
These observations suggest that the difficulty of disentanglement depends on the interaction between the two layers: a given inner-layer type may be easily separated from the outer layer in most combinations but becomes problematic when paired with specific outer-layer schemes that produce similar spectral features to other known outer types.

Fig.~\ref{fig:imbalance_3} presents the confusion matrices for the two branches in the three problematic cases, providing further insight into the error patterns.
For {QPSK-LFM} held out as unseen, the inner-layer confusion matrix shows that a fraction of known-class 8PSK samples are misclassified as QPSK during training.
This feature overlap propagates to the zero-shot setting: 88.8\% of QPSK samples are misclassified as 8PSK by Branch~1.
The same pattern is even more severe when {QPSK-MSK} is held out, where 98.6\% of QPSK is misclassified as 8PSK.
The key difference between these two cases lies in the outer layer: LFM and MSK exhibit lower spectral similarity to the other known MFSK compared to the inner-layer distinction between QPSK and 8PSK.
When the outer layer using MSK is harder to disentangle, the residual coupling between layers exacerbates the QPSK-8PSK confusion in Branch~1.
This confirms that the quality of outer-layer disentanglement directly affects inner-layer recognition accuracy, consistent with the multiplicative coupling model in Section~\ref{sec:system_model}.
For {32QAM-MSK} held out as unseen, the outer-layer confusion matrix reveals a different failure mode: during known-class recognition, a small fraction of MSK samples are confused with NONE.
This minor known-class overlap amplifies dramatically in the zero-shot setting, where 66.9\% of MSK samples are misclassified as NONE by Branch~2.
This finding highlights a critical insight: even a negligible feature ambiguity in the known classes can cascade into severe zero-shot misclassification, underscoring the importance of learning well-separated semantic prototypes for every known modulation type.

\begin{figure*}[!t]
    \centering
    \subfloat[QPSK-LFM: Branch 1-Known]{\includegraphics[width=0.3\linewidth]{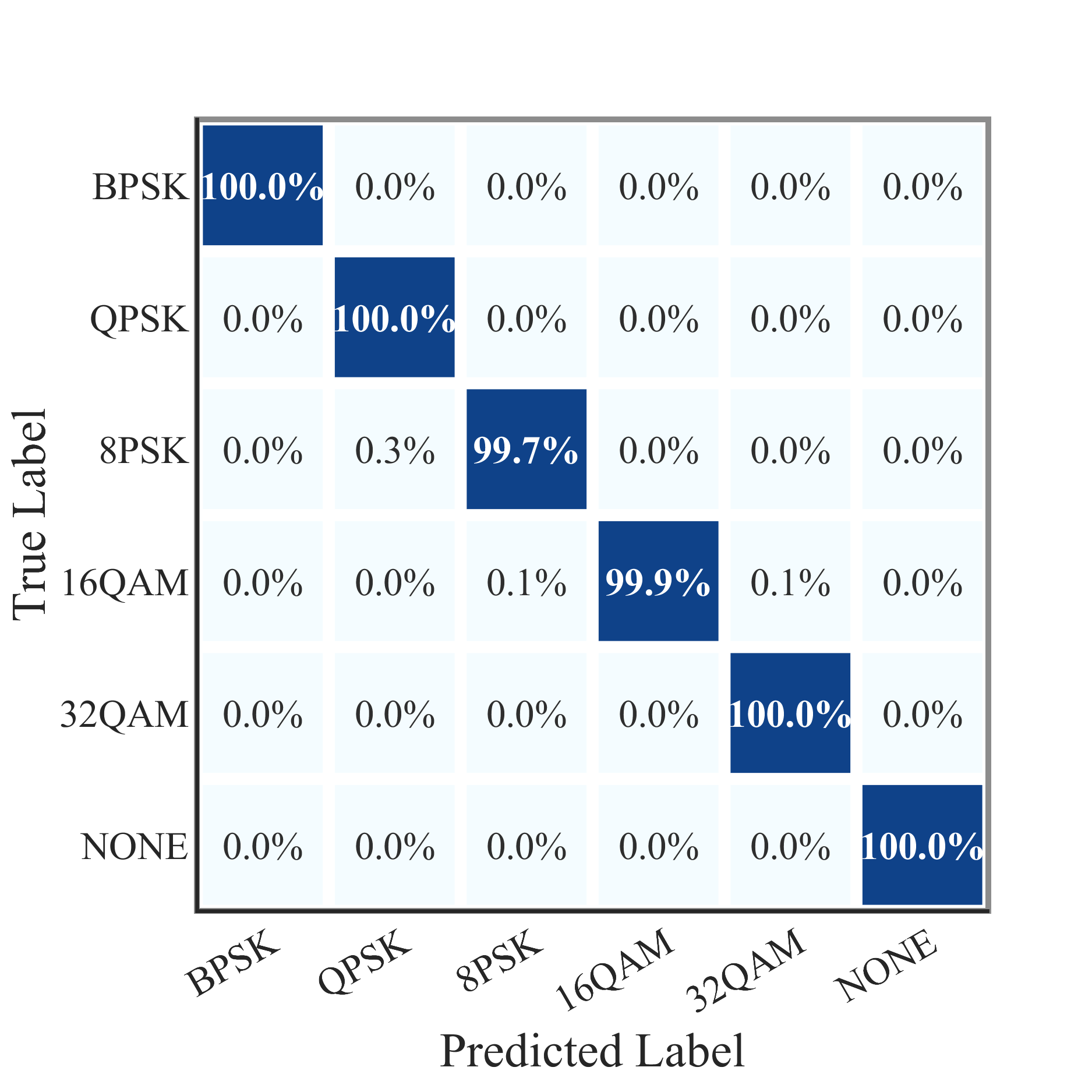}}
    \subfloat[QPSK-MSK: Branch 1-Known]{\includegraphics[width=0.3\linewidth]{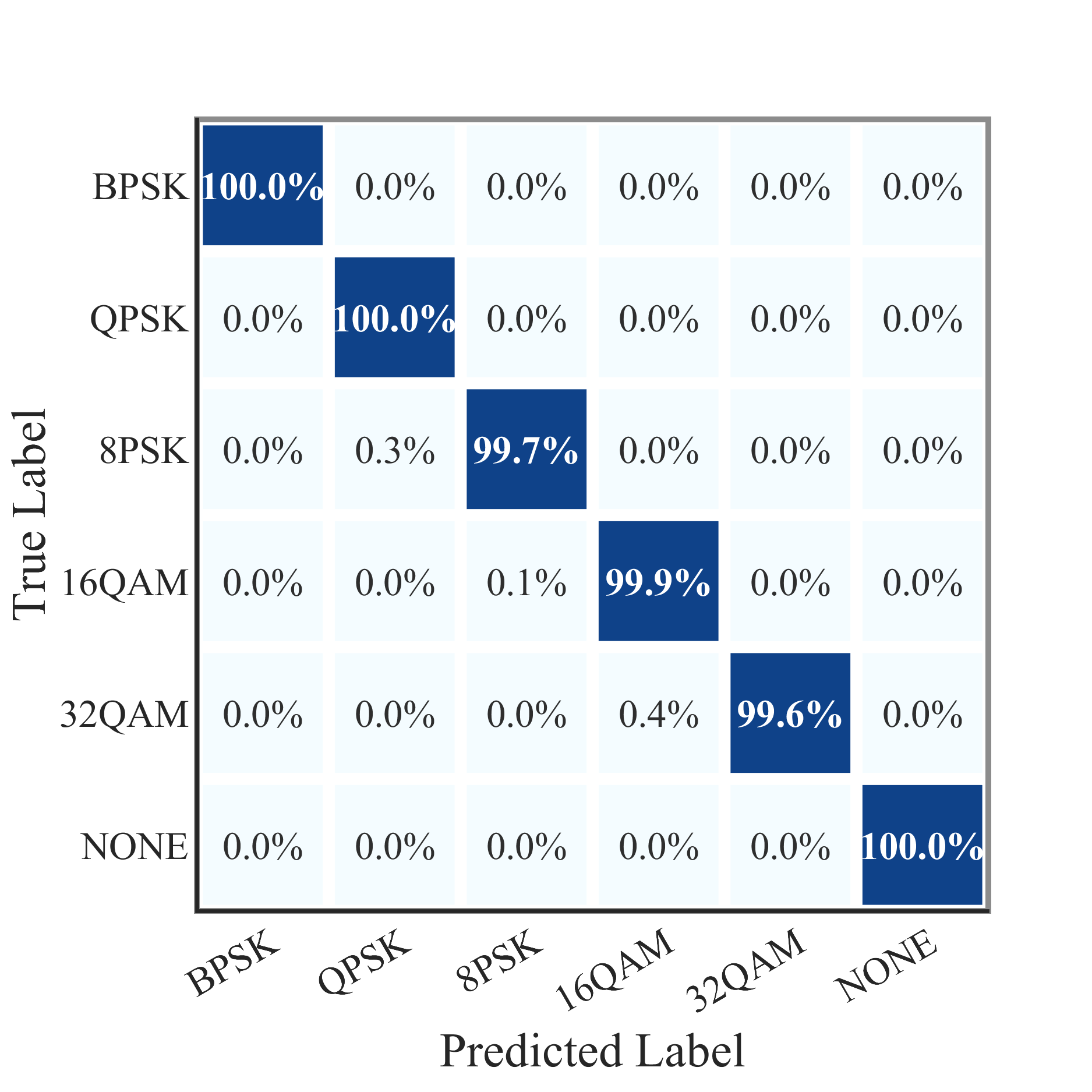}}
    \subfloat[32QAM-MSK: Branch 2-Known]{\includegraphics[width=0.35\linewidth]{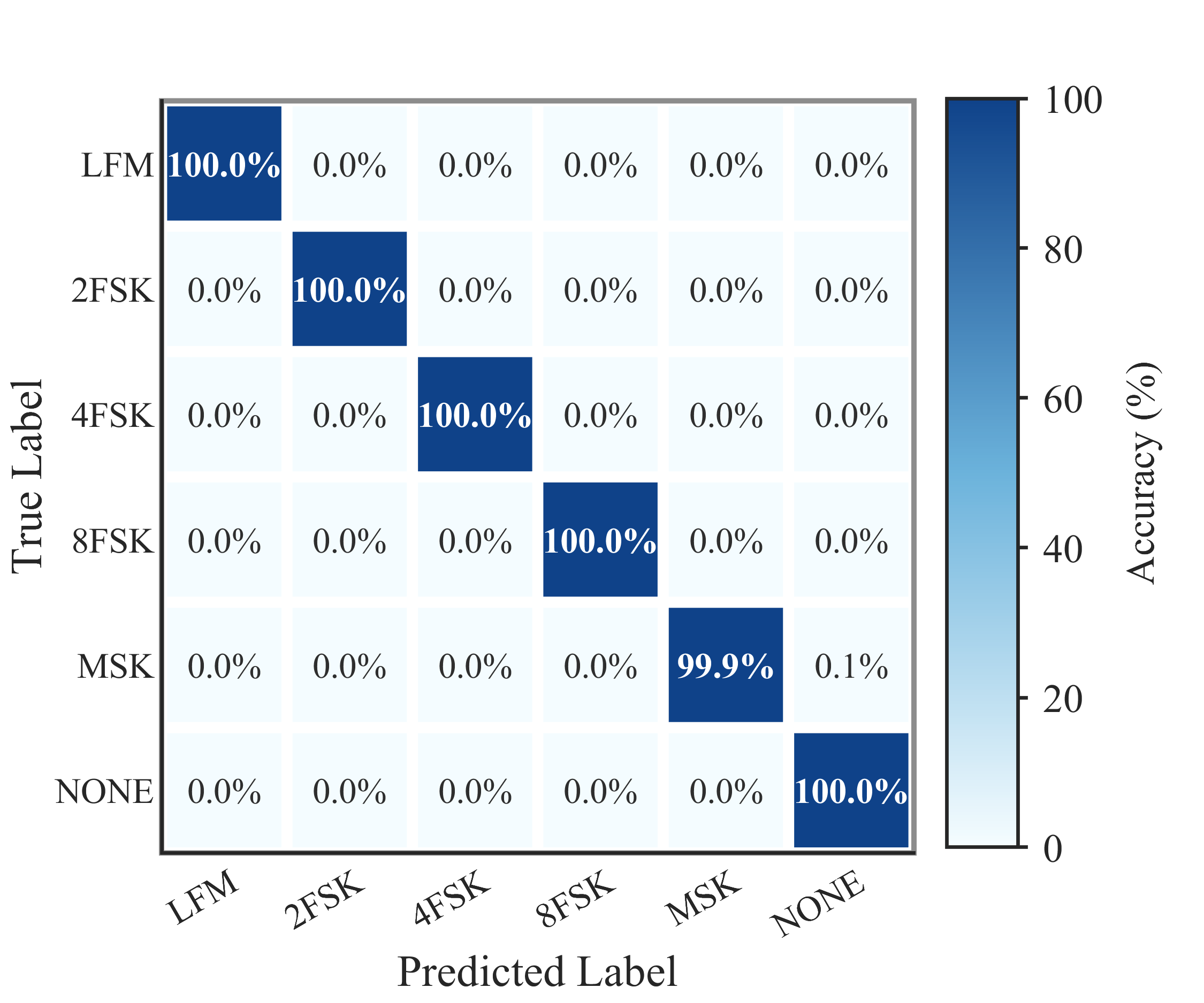}}
    \\
    \subfloat[QPSK-LFM: Branch 1-Unknown]{\includegraphics[width=0.3\linewidth]{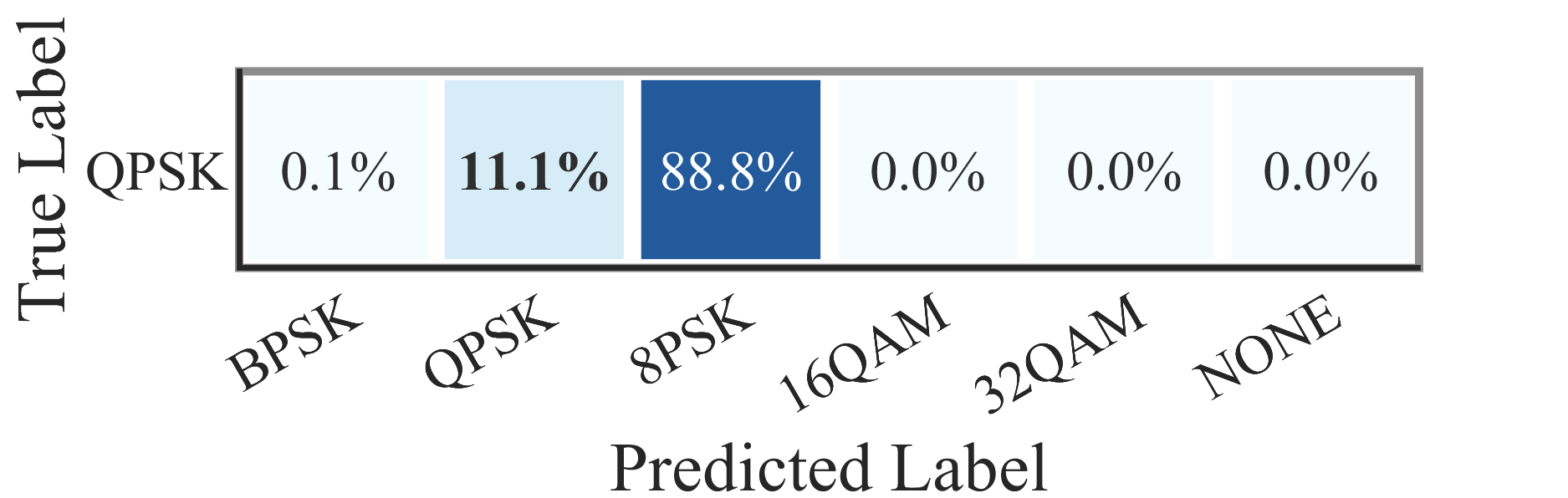}}
    \subfloat[QPSK-MSK: Branch 1-Unknown]{\includegraphics[width=0.3\linewidth]{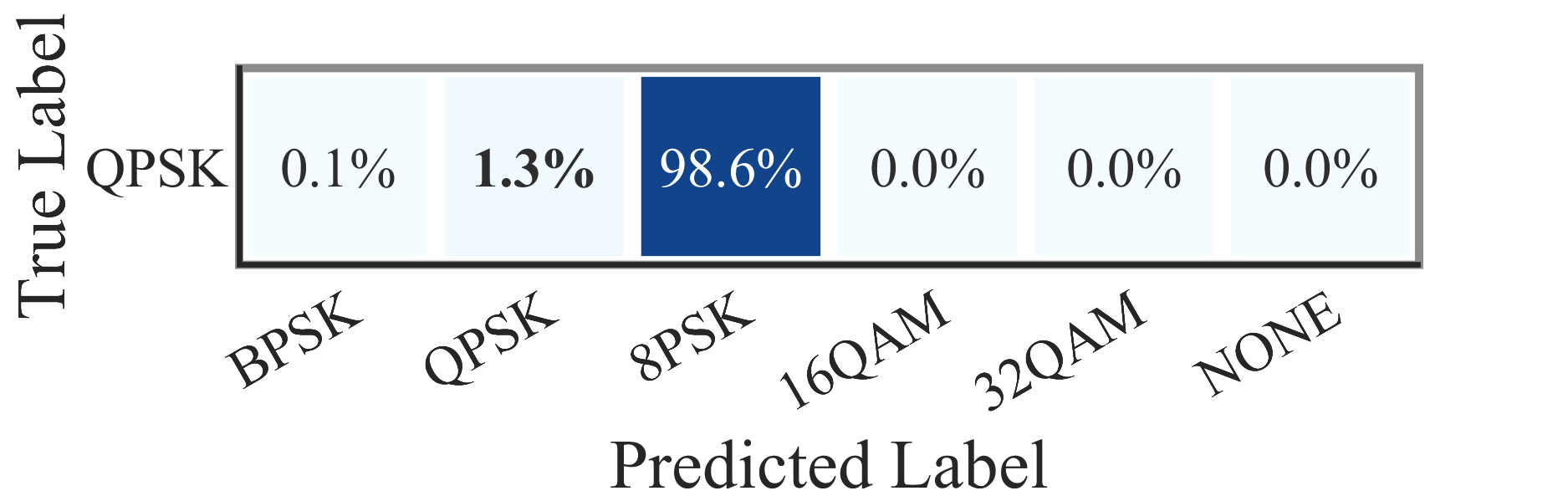}}
    \subfloat[32QAM-MSK: Branch 2-Unknown]{\includegraphics[width=0.35\linewidth]{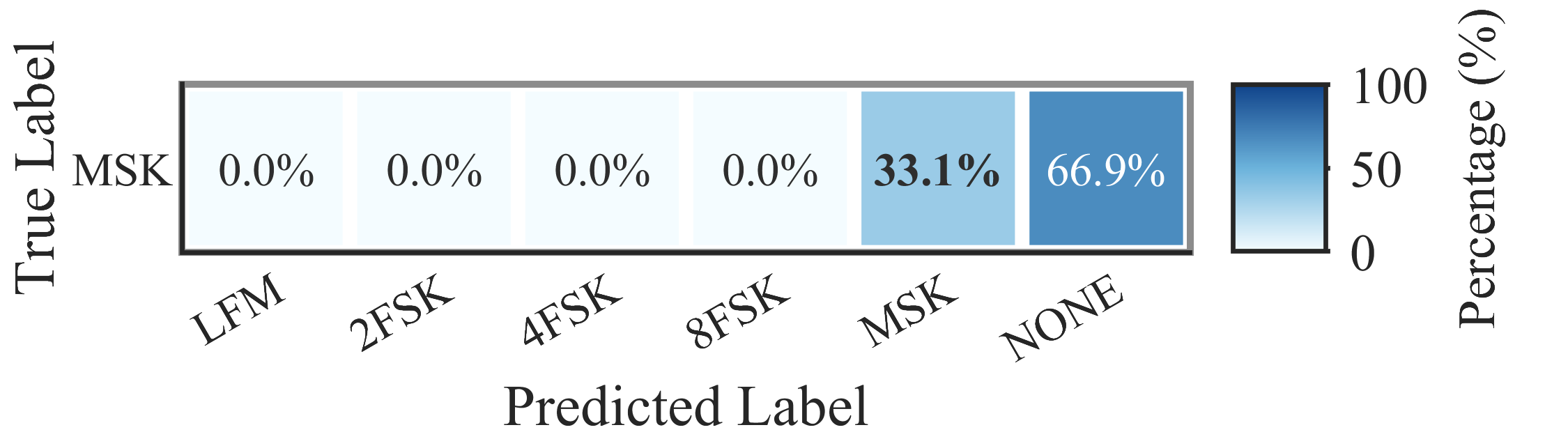}}
    \caption{Confusion matrices for the three most challenging held-out cases. (a)(d) Branch~1 when QPSK-LFM is unseen: QPSK-8PSK confusion. (b)(e) Branch~1 when QPSK-MSK is unseen: severe QPSK-8PSK confusion. (c)(f) Branch~2 when 32QAM-MSK
  is unseen: MSK-NONE confusion.}
    \label{fig:imbalance_3}
\end{figure*}

\subsection{Experiment 5: Robustness under Imperfect Factors}

\subsubsection{Robustness to AWGN}
\label{subsubsec:exp_awgn}

We first evaluate performance under additive white Gaussian noise to establish a noise-sensitivity baseline.
The SNR is varied from 0 dB to 16 dB in steps of 2 dB, using the one-CM-held-out protocol with BPSK-LFM as the unseen class.

Fig.~\ref{fig:awgn_robustness} presents the known-class and unknown-class accuracy as a function of SNR for backbone depths of one and two convolutional layers.
Two trends are evident.
First, the known-class accuracy is largely insensitive to the backbone depth, remaining above 90\% even at 0 dB for both configurations.
Second, the unknown-class accuracy exhibits a strong dependence on depth: the two-layer backbone outperforms the single-layer counterpart by approximately 23 percentage points across the entire SNR range.
This gap confirms that deeper networks extract semantic features that are more invariant to additive perturbations, which is particularly beneficial for generalizing to unseen combinations.
At high SNR $\geq 8$ dB, both configurations converge, indicating that the primary benefit of additional depth lies in noise resilience rather than inherent representational capacity.

\begin{figure}[!t]
    \centering
    \includegraphics[width=0.9\linewidth]{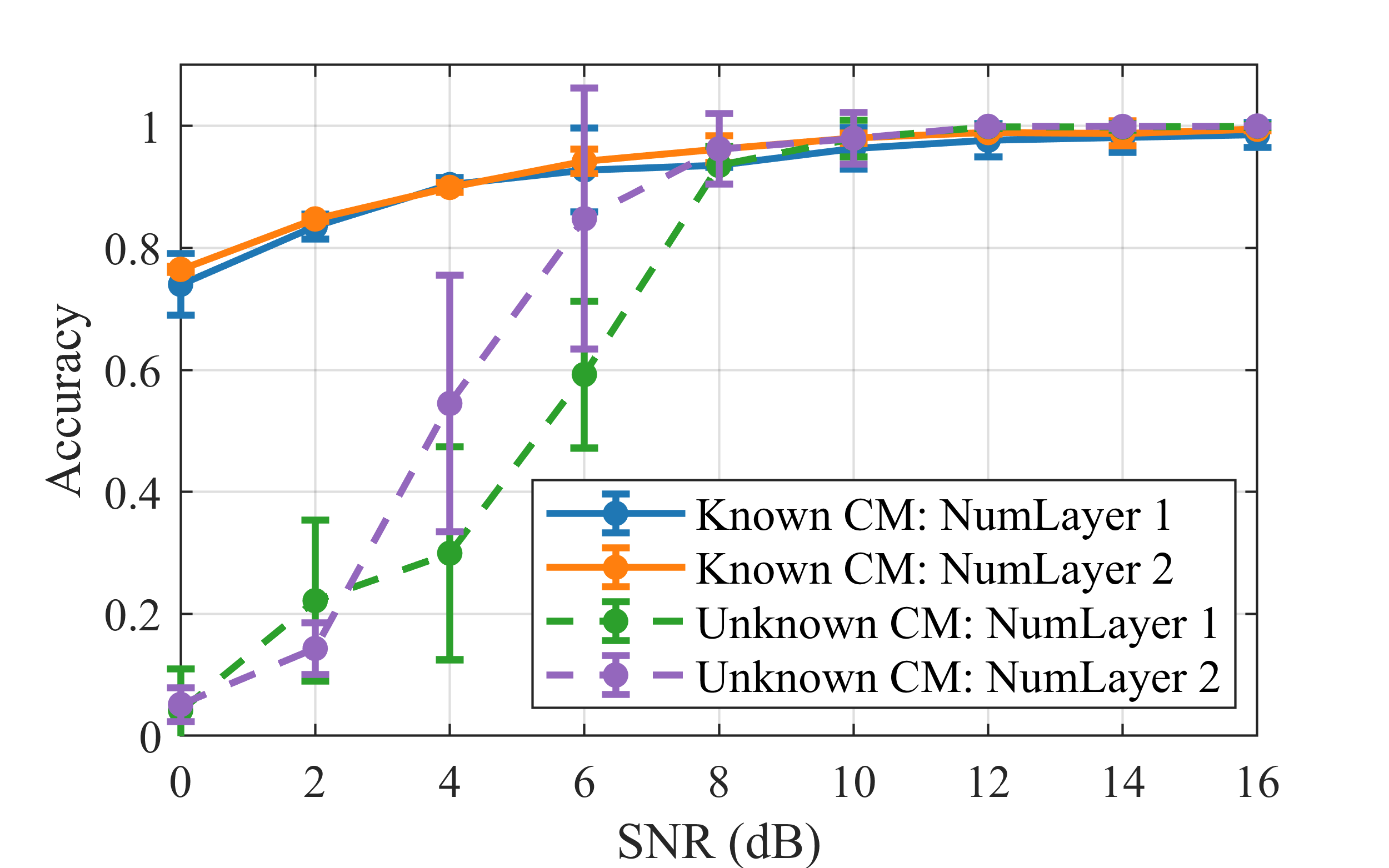}
    \caption{Known-class and unknown-class accuracy versus SNR under pure AWGN, comparing single-layer and two-layer convolutional backbones. BPSK-LFM is held out as the unseen class.}
    \label{fig:awgn_robustness}
\end{figure}

\subsubsection{Robustness under Combined Impairments}
\label{subsubsec:exp_combined}

We now evaluate performance under the full impairment model of \eqref{eq:full_model}, which integrates additive white Gaussian noise $w(n)$, multipath fading with one multipath component characterized by coefficients $a_2$ and $\tau_2$ with $a_1=1$ and $\tau_1=0$, 
PA nonlinearity $s_{\rm PA}(n) = g_1 x(n) + g_2 [x(n)]^2$ with $g_1 = 1$ and $g_i = 0$ for $i \ge 3$, oscillator phase noise with parameters $\lambda$ and $f_{\rm os}$, and IQ imbalance with parameters $\alpha$ and $\phi$. Three severity levels, namely \emph{clear}, \emph{mild}, and \emph{severe}, are defined, with all parameter values listed in Table~\ref{tab:impairment_params}. The SNR is fixed at 12 dB.

Fig.~\ref{fig:combined_robustness} shows the known-class and unknown-class accuracy at each severity level. Under mild impairment, the known-class accuracy is slightly lower than under severe impairment, but the unknown-class accuracy is higher under mild impairment than under severe impairment. This opposite trend arises from the interaction between impairment strength and feature learning. Under mild impairment, the signal quality is high enough for the network to extract fine-grained features that discriminate known classes. However, these features are affected by subtle distortion patterns in the training data and do not transfer well to unseen combinations. Under severe impairment, the dominant distortion masks these subtle patterns, forcing the network to rely on coarser but more robust features. These features generalize better across known classes but have reduced discriminative power for unseen ones. This observation reveals a fundamental trade-off in robust ACMR. Features optimized for known-class accuracy under mild degradation may not support zero-shot generalization, highlighting the need for explicit regularization toward compositional semantic structure.

\begin{table}[!t]
\centering
\caption{Parameter settings for the combined impairment evaluation.}
\label{tab:impairment_params}
\begin{tabular}{lcccccccc}
\toprule
\multirow{3}{*}{{Level}} & \multicolumn{2}{c}{{Multipath}} & \multicolumn{2}{c}{{PA}} & \multicolumn{2}{c}{{HFO}} & \multicolumn{2}{c}{{IQ}} \\
\cmidrule(lr){2-3} \cmidrule(lr){4-5}  \cmidrule(lr){6-7} \cmidrule(lr){8-9}
& $a_2$ & $\tau_2$ & $g_1$ & $g_2$ & $\lambda$ & $f_{\mathrm{os}}$ & $\alpha$ & $\phi$ \\
\midrule
Clear   & 0    & 0 & 1 & 0     & 0     & 0      & 0     & 0   \\
Mild    & 0.03 & 4 & 1 & 0.02  & 0.05  & 0.02   & 0.03  & 5   \\
Severe  & 0.05 & 8 & 1 & 0.03  & 0.10  & 0.05   & 0.05  & 10  \\
\bottomrule
\end{tabular}
\vspace{2pt}

\end{table}

\begin{figure}[!t]
    \centering
    \includegraphics[width=0.9\linewidth]{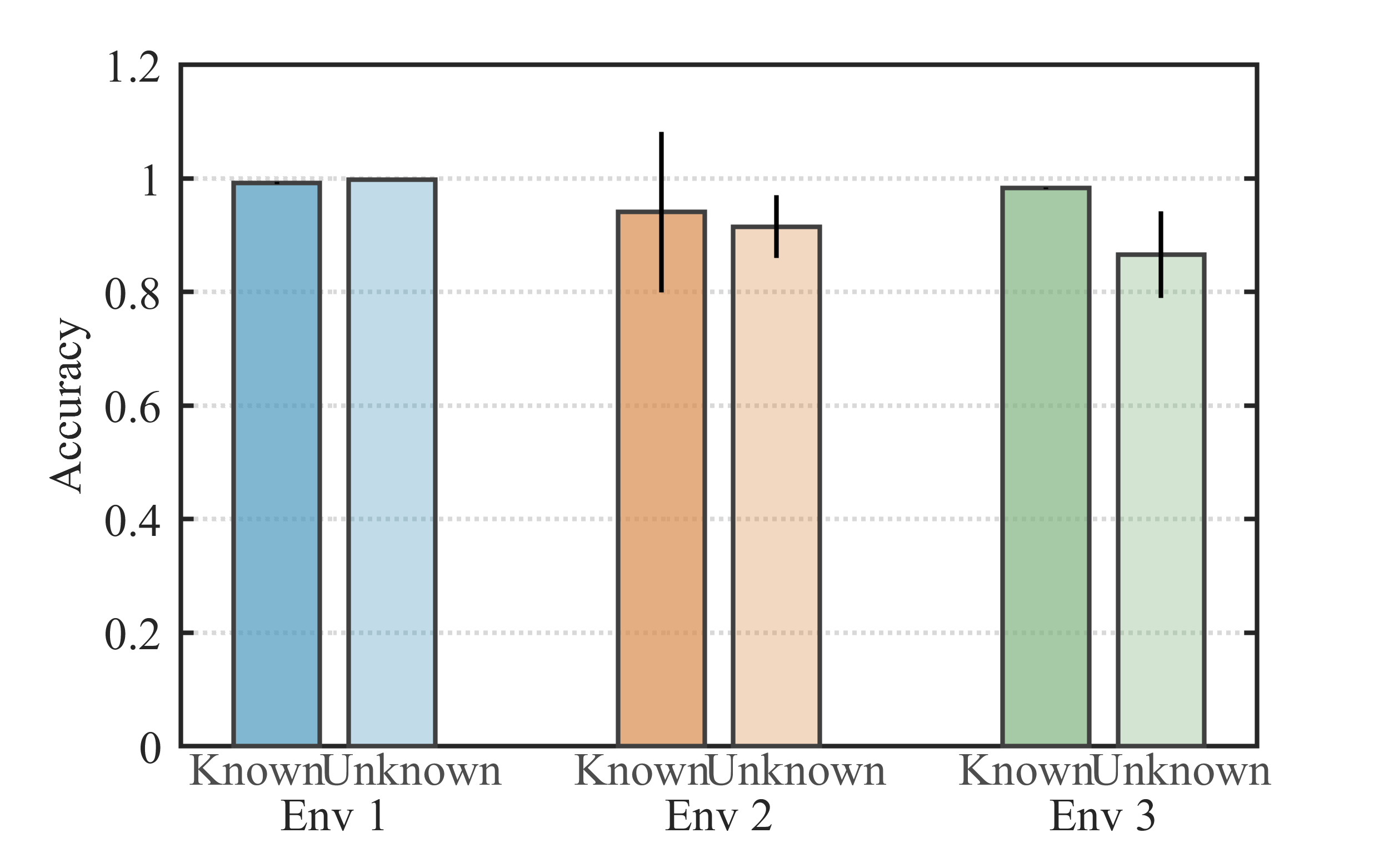}
    \caption{Known-class and unknown-class accuracy under combined channel and hardware impairments at three severity levels (clear, mild, severe). SNR = 12 dB.}
    \label{fig:combined_robustness}
\end{figure}

\color{black}

\section{Conclusion and Discussion}
This paper tackled the zero-shot ACMR problem in ISAC systems, where the combinatorial growth of inner-outer layer modulation pairs renders exhaustive supervised training impractical and existing methods fail to recognize unseen combinations.
To overcome this bottleneck, we first identified that the multiplicative waveform-level coupling between the two modulation layers can be globally linearized via a logarithmic projection onto the tangent space of the signal manifold. 
This inspired the design of a disentangled semantic space that factorizes the joint modulation label space into two independent layer-wise prototype sets.
Building on this geometric insight, we established a principled zero-shot framework, which comprises tangent-space mapping, learnable geometric disentanglement, and compositional semantic matching.
Then, we proved that input-dependent affine transformations are structurally sufficient to recover the individual modulation components from the additive tangent-space representation.
This theoretical framework was then instantiated as TSDN, which realizes the disentangling transform through a spatial transformer network and balances discrimination with cross-domain generalization via a multi-objective loss.

Extensive experiments confirmed that the proposed approach successfully generalizes to entirely unseen modulation combinations by leveraging the learned layer-wise semantic representations, substantially outperforming both unified-semantic and multi-task baselines while maintaining robust performance under realistic channel fading and hardware impairments.
More broadly, this work demonstrates that exploiting the intrinsic algebraic structure of composite waveforms opens a viable path toward recognition systems that scale gracefully with the expanding modulation landscape of next-generation wireless networks, without requiring retraining for every new combination.

Despite the promising results demonstrated by our proposed methodology, four critical challenges remain. These challenges merit careful consideration and suggest avenues for future research.
\begin{enumerate}
\color{black}
\item \textbf{Robustness Analysis and Future Research Avenues:}
In imperfect environments, especially when the SNR drops below 4 dB, the framework exhibits noticeable performance degradation. 
Increasing network depth can enhance the extraction of semantic features and slow this degradation. 
However, it does not provide a principled mechanism for extracting semantic invariants from heavily distorted signals. 
A fundamental solution calls for a shift toward physics-empowered machine learning \cite{10599118, mousa2026physics}. 
Embedding deterministic physical laws directly into the neural pipeline imposes strong structural constraints on the learning process, such as wave propagation invariants or explicit hardware distortion models. 
These physics-driven constraints act as a powerful regularizer, shrinking the feasible parameter space and guiding the feature extractor to preserve semantic integrity, thereby maintaining robustness even under extreme impairments.

\color{black}

\item \textbf{Limited Scope of Unknown Modulation Recognition:}
Our current framework assumes unknown CM signals are novel combinations of known individual modulation schemes. While this enables effective recognition within the defined parameter space, it limits applicability when either layer uses entirely unknown modulation schemes. A comprehensive knowledge database with an intelligent reasoning framework \cite{kojima2022large, xie2020region} can address this limitation, incorporating a hierarchical modulation taxonomy and using machine learning based inference to identify and classify unseen modulation schemes.

\item \textbf{Imbalanced Recognition Performance Across Modulation Combinations:}
Our analysis reveals non-uniform recognition performance across different CM signal combinations. This imbalance is due to training dataset bias, where some modulation schemes are more prevalent, causing feature extractors to perform better on well-represented combinations and worse on underrepresented pairings. To address this, we recommend cost-sensitive learning frameworks that account for class distribution disparities, making misclassification penalties inversely proportional to class frequency \cite{thabtah2020data}.

\color{black}
\item \textbf{Computational Complexity and Real-Time Implementation Considerations:}
The sequential nature of inner-layer and outer-layer modulation recognition, along with feature extraction complexity, may cause processing latencies that exceed acceptable thresholds for time-critical applications. Since the proposed method already uses a lightweight structure, future work should optimize computational efficiency via model pruning \cite{lin2020dynamic}, quantization \cite{kuzmin2023pruning}, and parallel processing \cite{bernstein1966analysis}.

\color{black}
    
\end{enumerate}

\bibliographystyle{IEEEtran}
\bibliography{reference.bib}

@article{yan2024automatic,
  title={Automatic composite-modulation classification using cyclic-paw-print features for cognitive aerospace communications},
  author={Yan, Xiao and Zhong, Xunuo and Wu, Hsiao-Chun and Yang, PengFei and Wang, Qian and Chen, Yiyun},
  journal={IEEE Transactions on Communications},
  volume={72},
  number={9},
  pages={5486--5502},
  year={2024},
  publisher={IEEE}
}

@inproceedings{kendall2018multi,
  title={Multi-Task Learning Using Uncertainty to Weigh Losses for Scene Geometry and Semantics},
  author={Alex Kendall and Yarin Gal and Roberto Cipolla},
  booktitle={Proceedings of the IEEE Conference on Computer Vision and Pattern Recognition (CVPR)},
  pages={7482--7491},
  year={2018}
}

@article{chang2022hierarchical,
  title={A hierarchical classification head based convolutional gated deep neural network for automatic modulation classification},
  author={Chang, Shuo and Zhang, Ruiyun and Ji, Kejia and Huang, Sai and Feng, Zhiyong},
  journal={IEEE Transactions on Wireless Communications},
  volume={21},
  number={10},
  pages={8713--8728},
  year={2022},
  publisher={IEEE}
}

@article{Zhao2025ZeroShotAM,
  title={Zero-Shot Automatic Modulation Recognition Using a Large Vision-Language Model},
  author={Yurui Zhao and Xiang Wang and Shuya Cao and Zhitao Huang},
  journal={IEEE Transactions on Communications},
  year={2025},
  volume={73},
  pages={15765-15782},
  url={https://api.semanticscholar.org/CorpusID:281447064}
}

@inproceedings{yang2017deep,
  title={Deep Multi-task Representation Learning: A Tensor Factorisation Approach},
  author={Yang, Yongxin and Hospedales, Timothy M.},
  booktitle={ICLR},
  year={2017}
}

@article{mousa2026physics,
  title={Physics-Integrated Inference for Signal Recovery in Non-Gaussian Regimes},
  author={Mousa, Mohamed A and Bauer, Leif and Yang, Ziyi and Singh, Utkarsh and Deka, Angshuman and Jacob, Zubin},
  journal={arXiv preprint arXiv:2601.18074},
  year={2026}
}

@inproceedings{wen2016discriminative,
  title={A Discriminative Feature Learning Approach for Deep Face Recognition},
  author={Wen, Yandong and Zhang, Kaipeng and Li, Zhifeng and Qiao, Yu},
  booktitle={ECCV},
  year={2016}
}

@ARTICLE{10599118,
  author={Zhu, Ethan and Sun, Haijian and Ji, Mingyue},
  journal={IEEE Wireless Communications}, 
  title={Physics-Informed Generalizable Wireless Channel Modeling with Segmentation and Deep Learning: Fundamentals, Methodologies, and Challenges}, 
  year={2024},
  volume={31},
  number={6},
  pages={170-177},
  doi={10.1109/MWC.015.2300603}}

@ARTICLE{sr2cnn,
  author={Dong, Yihong and Jiang, Xiaohan and Zhou, Huaji and Lin, Yun and Shi, Qingjiang},
  journal={IEEE Transactions on Signal Processing}, 
  title={SR2CNN: Zero-Shot Learning for Signal Recognition}, 
  year={2021},
  volume={69},
  number={},
  pages={2316-2329},
  doi={10.1109/TSP.2021.3070186}}

@article{van2008visualizing,
  title={Visualizing data using t-SNE.},
  author={Van der Maaten, Laurens and Hinton, Geoffrey},
  journal={Journal of machine learning research},
  volume={9},
  number={11},
  year={2008}
}

@inproceedings{liu2008specific,
  title={Specific emitter identification using nonlinear device estimation},
  author={Liu, Ming-Wei and Doherty, John F},
  booktitle={2008 IEEE Sarnoff Symposium},
  pages={1--5},
  year={2008},
  organization={IEEE}
}

@book{absil2008optimization,
  author    = "P.-A. Absil and R. Mahony and R. Sepulchre",
  title     = "Optimization Algorithms on Matrix Manifolds",
  publisher = "Princeton University Press",
  year      = 2008,
  isbn      = "978-0-691-13298-3"
}

@inproceedings{kim2002prediction,
  title={Prediction of Performance Loss Due to Phase Noise in Digital Satellite Communication System},
  author={Kim, Young-wan and Park, Dong-chul},
  booktitle={IST Mobile \& Wireless Telecommunications conf.},
  pages={575--578},
  year={2002},
  organization={Citeseer}
}

@inproceedings{NIPS2015_33ceb07b,
 author = {Jaderberg, Max and Simonyan, Karen and Zisserman, Andrew and kavukcuoglu, koray},
 booktitle = {Advances in Neural Information Processing Systems},
 editor = {C. Cortes and N. Lawrence and D. Lee and M. Sugiyama and R. Garnett},
 pages = {},
 publisher = {Curran Associates, Inc.},
 title = {Spatial Transformer Networks},
 volume = {28},
 year = {2015}
}

@article{zhao2022concentrate,
  title={Concentrate on hardware imperfection via aligning reconstructed states},
  author={Zhao, Yurui and Wang, Xiang and Huang, Zhitao},
  journal={IEEE Communications Letters},
  volume={26},
  number={12},
  pages={2934--2938},
  year={2022},
  publisher={IEEE}
}

@article{lei2024understanding,
  title={Understanding complex-valued transformer for modulation recognition},
  author={Lei, Jingreng and Li, Yang and Yung, Long-Yin and Leng, Yang and Lin, Qingfeng and Wu, Yik-Chung},
  journal={IEEE Wireless Communications Letters},
  volume={13},
  number={12},
  pages={3523--3527},
  year={2024},
  publisher={IEEE}
}

@article{lei2025unified,
  title={A Unified Distributed Algorithm for Hybrid Near-Far Field Activity Detection in Cell-Free Massive MIMO},
  author={Lei, Jingreng and Li, Yang and Wang, Ziyue and Lin, Qingfeng and Liu, Ya-Feng and Wu, Yik-Chung},
  journal={arXiv preprint arXiv:2509.15162},
  year={2025}
}

@inproceedings{huang2024multi,
  title={Multi-stage Learning for Radar Pulse Activity Segmentation},
  author={Huang, Zi and Pemasiri, Akila and Denman, Simon and Fookes, Clinton and Martin, Terrence},
  booktitle={ICASSP 2024-2024 IEEE International Conference on Acoustics, Speech and Signal Processing (ICASSP)},
  pages={7340--7344},
  year={2024},
  organization={IEEE}
}

@article{si2021intra,
  title={Intra-pulse modulation recognition of dual-component radar signals based on deep convolutional neural network},
  author={Si, Weijian and Wan, Chenxia and Deng, Zhian},
  journal={IEEE Communications Letters},
  volume={25},
  number={10},
  pages={3305--3309},
  year={2021},
  publisher={IEEE}
}

@article{o2018over,
  title={Over-the-air deep learning based radio signal classification},
  author={O’Shea, Timothy James and Roy, Tamoghna and Clancy, T Charles},
  journal={IEEE Journal of Selected Topics in Signal Processing},
  volume={12},
  number={1},
  pages={168--179},
  year={2018},
  publisher={IEEE}
}

@article{tekbiyik2019hisarmod,
  title={Hisarmod: A new challenging modulated signals dataset},
  author={Tekb{\i}y{\i}k, K and Ke{\c{c}}eci, C and Ekti, A and G{\"o}r{\c{c}}in, A and Kurt, G},
  journal={IEEE Dataport},
  year={2019}
}

@article{zhang2026cognitive,
  title={Cognitive Radio for Satellite TT \& C System: A General Dataset Using Software-defined Radio},
  author={Zhang, Yi and Zang, Bo and Ji, Hongbing and Li, Lin and Li, Shiyao and Chen, Leyan},
  journal={Scientific Data},
  year={2026},
  publisher={Nature Publishing Group UK London}
}

@article{pan2020automatic,
  title={Automatic waveform recognition of overlapping LPI radar signals based on multi-instance multi-label learning},
  author={Pan, Zesi and Wang, Shafei and Zhu, Mengtao and Li, Yunjie},
  journal={IEEE Signal Processing Letters},
  volume={27},
  pages={1275--1279},
  year={2020},
  publisher={IEEE}
}

@inproceedings{NIPS2016_45fbc6d3,
 author = {Bousmalis, Konstantinos and Trigeorgis, George and Silberman, Nathan and Krishnan, Dilip and Erhan, Dumitru},
 booktitle = {Advances in Neural Information Processing Systems},
 editor = {D. Lee and M. Sugiyama and U. Luxburg and I. Guyon and R. Garnett},
 pages = {},
 publisher = {Curran Associates, Inc.},
 title = {Domain Separation Networks},
 volume = {29},
 year = {2016}
}

@article{huan1995likelihood,
  title={Likelihood methods for MPSK modulation classification},
  author={Huan, Chung-Yu and Polydoros, Andreas},
  journal={IEEE Transactions on Communications},
  volume={43},
  number={2/3/4},
  pages={1493--1504},
  year={1995},
  publisher={IEEE}
}

@article{bkassiny2012survey,
  title={A survey on machine-learning techniques in cognitive radios},
  author={Bkassiny, Mario and Li, Yang and Jayaweera, Sudharman K},
  journal={IEEE Communications Surveys \& Tutorials},
  volume={15},
  number={3},
  pages={1136--1159},
  year={2012},
  publisher={IEEE}
}

@article{liu2020modulation,
  title={Modulation recognition with pre-denoising convolutional neural network},
  author={Liu, Yabo and Liu, Yi},
  journal={Electronics Letters},
  volume={56},
  number={5},
  pages={255--257},
  year={2020},
  publisher={Wiley Online Library}
}

@article{zhang2020automatic,
  title={Automatic modulation classification using CNN-LSTM based dual-stream structure},
  author={Zhang, Zufan and Luo, Hao and Wang, Chun and Gan, Chenquan and Xiang, Yong},
  journal={IEEE Transactions on Vehicular Technology},
  volume={69},
  number={11},
  pages={13521--13531},
  year={2020},
  publisher={IEEE}
}

@article{wang2021multidimensional,
  title={Multidimensional CNN-LSTM network for automatic modulation classification},
  author={Wang, Na and Liu, Yunxia and Ma, Liang and Yang, Yang and Wang, Hongjun},
  journal={Electronics},
  volume={10},
  number={14},
  pages={1649},
  year={2021},
  publisher={MDPI}
}

@article{lin2020dynamic,
  title={Dynamic model pruning with feedback},
  author={Lin, Tao and Stich, Sebastian U and Barba, Luis and Dmitriev, Daniil and Jaggi, Martin},
  journal={arXiv preprint arXiv:2006.07253},  
  year={2020}
}

@article{bernstein1966analysis,
  title={Analysis of programs for parallel processing},
  author={Bernstein, Arthur J},
  journal={IEEE transactions on electronic computers},
  number={5},
  pages={757--763},
  year={1966},
  publisher={IEEE}
}

@article{kuzmin2023pruning,
  title={Pruning vs quantization: Which is better?},
  author={Kuzmin, Andrey and Nagel, Markus and Van Baalen, Mart and Behboodi, Arash and Blankevoort, Tijmen},
  journal={Advances in neural information processing systems},
  volume={36},
  pages={62414--62427},
  year={2023}
}

@inproceedings{xie2020region,
  title={Region graph embedding network for zero-shot learning},
  author={Xie, Guo-Sen and Liu, Li and Zhu, Fan and Zhao, Fang and Zhang, Zheng and Yao, Yazhou and Qin, Jie and Shao, Ling},
  booktitle={European conference on computer vision},
  pages={562--580},
  year={2020},
  organization={Springer}
}

@article{kojima2022large,
  title={Large language models are zero-shot reasoners},
  author={Kojima, Takeshi and Gu, Shixiang Shane and Reid, Machel and Matsuo, Yutaka and Iwasawa, Yusuke},
  journal={Advances in neural information processing systems},
  volume={35},
  pages={22199--22213},
  year={2022}
}

@article{thabtah2020data,
  title={Data imbalance in classification: Experimental evaluation},
  author={Thabtah, Fadi and Hammoud, Suhel and Kamalov, Firuz and Gonsalves, Amanda},
  journal={Information Sciences},
  volume={513},
  pages={429--441},
  year={2020},
  publisher={Elsevier}
}

@inproceedings{mao2023cross,
  title={Cross-entropy loss functions: Theoretical analysis and applications},
  author={Mao, Anqi and Mohri, Mehryar and Zhong, Yutao},
  booktitle={International conference on Machine learning},
  pages={23803--23828},
  year={2023},
  organization={pmlr}
}

@inproceedings{sainath2015convolutional,
  title={Convolutional, long short-term memory, fully connected deep neural networks},
  author={Sainath, Tara N and Vinyals, Oriol and Senior, Andrew and Sak, Ha{\c{s}}im},
  booktitle={2015 IEEE international conference on acoustics, speech and signal processing (ICASSP)},
  pages={4580--4584},
  year={2015},
  organization={Ieee}
}

@article{rajendran2018deep,
  title={Deep learning models for wireless signal classification with distributed low-cost spectrum sensors},
  author={Rajendran, Sreeraj and Meert, Wannes and Giustiniano, Domenico and Lenders, Vincent and Pollin, Sofie},
  journal={IEEE Transactions on Cognitive Communications and Networking},
  volume={4},
  number={3},
  pages={433--445},
  year={2018},
  publisher={IEEE}
}

@inproceedings{west2017deep,
  title={Deep architectures for modulation recognition},
  author={West, Nathan E and O'shea, Tim},
  booktitle={2017 IEEE international symposium on dynamic spectrum access networks (DySPAN)},
  pages={1--6},
  year={2017},
  organization={IEEE}
}

@inproceedings{o2016convolutional,
  title={Convolutional radio modulation recognition networks},
  author={O’Shea, Timothy J and Corgan, Johnathan and Clancy, T Charles},
  booktitle={Engineering Applications of Neural Networks: 17th International Conference, EANN 2016, Aberdeen, UK, September 2-5, 2016, Proceedings 17},
  pages={213--226},
  year={2016},
  organization={Springer}
}

@article{yan2024automatic2,
  title={Automatic composite-modulation classification using ultra lightweight deep-learning network based on cyclic-paw-print},
  author={Yan, Xiao and Yang, Pengfei and Zhong, Xunuo and Wang, Qian and Wu, Hsiao-Chun and He, Ling},
  journal={IEEE Transactions on Cognitive Communications and Networking},
  volume={10},
  number={3},
  pages={866--879},
  year={2024},
  publisher={IEEE}
}

@inproceedings{liu2017deep,
  title={Deep neural network architectures for modulation classification},
  author={Liu, Xiaoyu and Yang, Diyu and El Gamal, Aly},
  booktitle={2017 51st Asilomar Conference on Signals, Systems, and Computers},
  pages={915--919},
  year={2017},
  organization={IEEE}
}

@article{hermawan2020cnn,
  title={CNN-based automatic modulation classification for beyond 5G communications},
  author={Hermawan, Ade Pitra and Ginanjar, Rizki Rivai and Kim, Dong-Seong and Lee, Jae-Min},
  journal={IEEE Communications Letters},
  volume={24},
  number={5},
  pages={1038--1041},
  year={2020},
  publisher={IEEE}
}

@inproceedings{hong2017automatic,
  title={Automatic modulation classification using recurrent neural networks},
  author={Hong, Dehua and Zhang, Zilong and Xu, Xiaodong},
  booktitle={2017 3rd IEEE international conference on computer and communications (ICCC)},
  pages={695--700},
  year={2017},
  organization={IEEE}
}

@article{ke2021real,
  title={Real-time radio technology and modulation classification via an LSTM auto-encoder},
  author={Ke, Ziqi and Vikalo, Haris},
  journal={IEEE Transactions on Wireless Communications},
  volume={21},
  number={1},
  pages={370--382},
  year={2021},
  publisher={IEEE}
}

@article{huynh2020mcnet,
  title={MCNet: An efficient CNN architecture for robust automatic modulation classification},
  author={Huynh-The, Thien and Hua, Cam-Hao and Pham, Quoc-Viet and Kim, Dong-Seong},
  journal={IEEE Communications Letters},
  volume={24},
  number={4},
  pages={811--815},
  year={2020},
  publisher={IEEE}
}

@inproceedings{tekbiyik2020robust,
  title={Robust and fast automatic modulation classification with CNN under multipath fading channels},
  author={Tekb{\i}y{\i}k, K{\"u}r{\c{s}}at and Ekti, Ali R{\i}za and G{\"o}r{\c{c}}in, Ali and Kurt, G{\"u}ne{\c{s}} Karabulut and Ke{\c{c}}eci, Cihat},
  booktitle={2020 IEEE 91st Vehicular Technology Conference (VTC2020-Spring)},
  pages={1--6},
  year={2020},
  organization={IEEE}
}

@article{yan2024efficient,
  title={Efficient automatic composite-modulation classifier using cyclic-paw-print features},
  author={Yan, Xiao and Chen, Yiyun and Zhong, Xunuo and Wu, Hsiao-Chun and Wang, Qian},
  journal={IEEE Communications Letters},
  volume={28},
  number={3},
  pages={652--656},
  year={2024},
  publisher={IEEE}
}

@inproceedings{li2021composite,
  title={Composite radar modulation identification by transfer learning},
  author={Li, Fang and Yang, Zixian and Huang, Bo and Chen, Yang},
  booktitle={2021 International Conference on Microwave and Millimeter Wave Technology (ICMMT)},
  pages={1--3},
  year={2021},
  organization={IEEE}
}

@article{consultative2009radio,
  title={Radio Frequency and Modulation Systems-part 1 Earth Stations and space craft},
  author={Consultative Committee for Space Data Systems (CCSDS and others},
  journal={Recommended standard CCSDS 401.0-B},
  year={2009}
}

@article{huang2025design,
  title={Design of frequency index modulated waveforms for integrated sar and communication on high-altitude platforms (HAPs)},
  author={Huang, Bang and Ahmed, Sajid and Alouini, Mohamed-Slim},
  journal={IEEE Transactions on Communications},
  year={2025},
  publisher={IEEE}
}

@article{zhou2022integrated,
  title={Integrated sensing and communication waveform design: A survey},
  author={Zhou, Wenxing and Zhang, Ruoyu and Chen, Guangyi and Wu, Wen},
  journal={IEEE Open Journal of the Communications Society},
  volume={3},
  pages={1930--1949},
  year={2022},
  publisher={IEEE}
}

@article{zheng2019radar,
  title={Radar and communication coexistence: An overview: A review of recent methods},
  author={Zheng, Le and Lops, Marco and Eldar, Yonina C and Wang, Xiaodong},
  journal={IEEE Signal Processing Magazine},
  volume={36},
  number={5},
  pages={85--99},
  year={2019},
  publisher={IEEE}
}

@article{zhang2021overview,
  title={An overview of signal processing techniques for joint communication and radar sensing},
  author={Zhang, J Andrew and Liu, Fan and Masouros, Christos and Heath, Robert W and Feng, Zhiyong and Zheng, Le and Petropulu, Athina},
  journal={IEEE Journal of Selected Topics in Signal Processing},
  volume={15},
  number={6},
  pages={1295--1315},
  year={2021},
  publisher={IEEE}
}

@article{liu2022survey,
  title={A survey on fundamental limits of integrated sensing and communication},
  author={Liu, An and Huang, Zhe and Li, Min and Wan, Yubo and Li, Wenrui and Han, Tony Xiao and Liu, Chenchen and Du, Rui and Tan, Danny Kai Pin and Lu, Jianmin and others},
  journal={IEEE Communications Surveys \& Tutorials},
  volume={24},
  number={2},
  pages={994--1034},
  year={2022},
  publisher={IEEE}
}

@article{luong2021radio,
  title={Radio resource management in joint radar and communication: A comprehensive survey},
  author={Luong, Nguyen Cong and Lu, Xiao and Hoang, Dinh Thai and Niyato, Dusit and Kim, Dong In},
  journal={IEEE Communications Surveys \& Tutorials},
  volume={23},
  number={2},
  pages={780--814},
  year={2021},
  publisher={IEEE}
}

@article{chiriyath2019novel,
  title={Novel radar waveform optimization for a cooperative radar-communications system},
  author={Chiriyath, Alex Rajan and Ragi, Shankarachary and Mittelmann, Hans D and Bliss, Daniel W},
  journal={IEEE Transactions on Aerospace and Electronic Systems},
  volume={55},
  number={3},
  pages={1160--1173},
  year={2019},
  publisher={IEEE}
}

@article{wang2025blind,
  title={Blind Recognition Algorithm of Multi-Carrier Composite Modulation Signal Based on Multi-Dimensional Time-Frequency Superimposed Spectrum},
  author={Wang, Shoubin and Li, Huan and Zhang, Xiaolong and Jiang, Hao and Shen, Lei},
  journal={Sensors},
  volume={25},
  number={13},
  pages={4007},
  year={2025},
  publisher={MDPI}
}

@inproceedings{wang2021modulation,
  title={Modulation recognition of composite signal based on ResNet and frequency domain graph},
  author={Wang, Lijun and Han, Yu and Ge, Hongfang and Zhao, Yongkuan and Shen, Lei},
  booktitle={Journal of Physics: Conference Series},
  volume={2025},
  number={1},
  pages={012029},
  year={2021},
  organization={IOP Publishing}
}

@inproceedings{lijun2021modulation,
  title={Modulation recognition of composite modulation signal based on two-fold digital receiver and goodness of fit test},
  author={Lijun, Wang and Yu, Han and Pan, Zhou and Hongfang, Ge and Lei, Shen},
  booktitle={2021 IEEE International Conference on Electronic Technology, Communication and Information (ICETCI)},
  pages={433--437},
  year={2021},
  organization={IEEE}
}





\end{document}